\theoremstyle{plain}
\newtheorem{theorem}{Theorem}[section]
\newtheorem{lemma}[theorem]{Lemma}
\newtheorem{corollary}[theorem]{Corollary}
\newtheorem{proposition}[theorem]{Proposition}
\newtheorem{definition}[theorem]{Definition}
\newtheorem{conjecture}{Conjecture}
\renewcommand{\labelenumi}{(\roman{enumi})}
\newcommand{\abs}[1]{\left\lvert#1\right\rvert}
\newcommand{\rest}[2]{#1\!\!\restriction_{#2}}
\newcommand{\reste}[2]{#1\restriction_{#2}}
\newcommand{\osg}[1]{\left[#1\right]^{\prec}}
\newcommand{\N}{\mathbb{N}}%
\newcommand{\Z}{\mathbb{Z}}%
\newcommand{\Q}{\mathbb{Q}}%
\newcommand{\R}{\mathbb{R}}%
\newcommand{\X}{\{0,1\}^*}%
\newcommand{\XI}{\{0,1\}^\infty}%
\newcommand{\Lm}[1]{\mathcal{L}\left(#1\right)}
\DeclareMathOperator{\Prob}{Pr}
\newcommand{\noi}{\noindent}
\title{\textbf{Cryptography and Algorithmic Randomness%
\thanks{A part of this work was presented at the Seventh International Conference on Computability, Complexity and Randomness (CCR 2012),
July 2-6, 2012, Cambridge, Great Britain.}}}
\author{
Kohtaro Tadaki\hspace*{15mm}
Norihisa Doi\\
\\
Research and Development Initiative, Chuo University\\
1--13--27 Kasuga, Bunkyo-ku, Tokyo 112-8551, Japan\\
E-mail: \textsf{tadaki@kc.chuo-u.ac.jp},\quad \textsf{doi@doi.ics.keio.ac.jp}\\
\url{http://www2.odn.ne.jp/tadaki/}}
\date{}
\begin{document}

\maketitle

\begin{quotation}
\noi\textbf{Abstract.}
The secure instantiation of the random oracle is one of the major open problems in modern cryptography.
We investigate this problem using concepts and methods of algorithmic randomness.

In modern cryptography, the random oracle model is widely used as an imaginary framework in which the security of a cryptographic scheme is discussed.
In the random oracle model, the cryptographic hash function used in a cryptographic scheme is formulated
as a random variable uniformly distributed over all possibility of the function, called the random oracle.
The main result of this paper is to show that, for any secure signature scheme in the random oracle model,
there exists a specific computable function which can instantiate the random oracle while keeping the security originally proved in the random oracle model.
In modern cryptography the generic group model is used also for a similar purpose to the random oracle model.
We show that the same results hold for the generic group model.

In the process of proving the results, we introduce the notion of effective security, demonstrating the importance of this notion in modern cryptography.
\end{quotation}

\begin{quotation}
\noi\textit{Key words\/}:
cryptography,
random oracle model,
generic group model,
provable security,
algorithmic randomness,
computable analysis
\end{quotation}

\vspace*{0mm}

\section{Introduction}

In modern cryptography, the \emph{random oracle model} \cite{BR93}
is widely used as an \emph{imaginary} framework in which the security of a cryptographic scheme is discussed.
In the random oracle model, the cryptographic hash function used in a cryptographic scheme is formulated
as a random variable uniformly distributed over all possibility of the function,
called the \emph{random oracle}, and the legitimate users and the adversary against the scheme are modeled so as to get the values of the hash function
not by evaluating it in their own but by querying
the random oracle.
Since the random oracle is an imaginary object, even if the security of a cryptographic scheme is proved in the random oracle model,
the random oracle has to be instantiated using a concrete cryptographic hash function such as the SHA hash functions if we want to use the scheme in the real world.
In fact, the instantiations of the random oracle by concrete cryptographic hash functions are widely
used
in modern cryptography to produce efficient cryptographic schemes.
Once the random oracle is instantiated, however, the original security proof in the random oracle model is spoiled and
goes back to square one.
Actually, it is not clear how much the instantiation can maintain the security originally proved in the random oracle model,
nor is it clear whether the random oracle can be instantiated somehow while keeping the original security.

The question of securely instantiating the random oracle within cryptographic schemes proven secure in the random oracle model is
one of the most intriguing problems in modern cryptography.
Actually, many researches on the secure instantiation of the random oracle have been
done
so far, which include
Canetti, et al.~\cite{CGH04}, Bellare, et al.~\cite{BBP04}, Leurent and Nguyen~\cite{LN09}, Fischlin, et al.~\cite{FLRSST10}.
These mainly give negative results.

In this paper
we investigate
the problem of secure instantiation of the random oracle,
using concepts and methods of \emph{algorithmic randomness}.
Algorithmic randomness, also known as \emph{algorithmic information theory}, enables us to consider
the randomness of an \emph{individual} object.
It originated in the groundbreaking works of Solomonoff~\cite{Solom64}, Kolmogorov~\cite{Kol65}, and Chaitin~\cite{C66} in the mid-1960s.
They independently introduced the notion of \emph{program-size complexity}, also known as \emph{Kolmogorov complexity},
in order to quantify the randomness of an individual object.
Around the same time, Martin-L\"of \cite{M66} introduced a measure theoretic approach to characterize the randomness of an individual infinite binary sequence.
This approach, called \emph{Martin-L\"of randomness} nowadays, is one of the major notions in algorithmic randomness as well as program-size complexity.
Later on, in the 1970s
Schnorr~\cite{Sch73} and Chaitin~\cite{C75} showed that Martin-L\"of randomness is equivalent to the randomness defined by program-size complexity
in characterizing
random infinite binary sequences.
In the 21st century, algorithmic randomness makes remarkable progress through close interaction with recursion theory \cite{N09,DH10}.

In cryptography, the randomness is just a probability distribution or its
sequence.
Namely, the \emph{true randomness} in cryptography is a uniform probability distribution such as the random oracle,
while the \emph{pseudorandomness} is
a sequence of probability distributions which has a certain asymptotic property
defined based on computational complexity theory.
Thus, cryptology seems to have had no concern with the randomness of an individual object so far.
In algorithmic randomness, on the other hand, the notion of a \emph{random real} plays a central role.
It is an individual infinite binary sequence which is classified as ``random'', and not a random variable, unlike in cryptography.
Algorithmic randomness enables us to classify an individual infinite binary sequence into random or not.

To summarize our contributions, we first review the security proof in the random oracle model
(see e.g.~Katz and Lindell \cite[Chapter 13]{KL07} for the detail).
In the random oracle model, a cryptographic scheme $\Pi$ relies on an oracle $h$ which is a certain type of function mapping finite strings to finite strings,
depending on a security parameter $n$.
Let $\mathsf{Hash}_n$ denote the set of all such functions $h$ on
a
security parameter $n$.
Then the random oracle is the sequence $\{H_n\}$ of random variables such that
each $H_n$ is uniformly distributed over functions in $\mathsf{Hash}_n$.
Now, in order to introduce a security notion, such as CCA-security for encryption schemes and EUF-ACMA security for signature schemes, into the scheme $\Pi$,
we first consider an appropriately designed experiment $\mathsf{Expt}_{\mathcal{A}^{H_n},\mathrm{\Pi}^{H_n}}$ defined for
the scheme $\Pi$ and any adversary $\mathcal{A}$,
where $\Pi$ and $\mathcal{A}$ are both allowed to have an oracle access to
$H_n$.
Then a definition of security for $\Pi$ in the random oracle model takes the following general form:
the scheme $\Pi$ is \emph{secure in the random oracle model}
if, for all probabilistic polynomial-time adversaries $\mathcal{A}$ and all $d\in\N^+$ there exists $N\in\N^+$ such that, for all $n\ge N$,
\begin{equation}\label{gfsrom}
  \Prob\left[\mathsf{Expt}_{\mathcal{A}^{H_n},\mathrm{\Pi}^{H_n}}(n)=1\right]\le \gamma+\frac{1}{n^d},
\end{equation}
where the probability is taken over the random variable $H_n$, i.e., the random choice of a function in $\mathsf{Hash}_n$,
as well as the random choices of the parties running $\Pi$ and those of the adversary $\mathcal{A}$.
The value $\gamma$ indicates the maximum desired probability of some ``bad'' event (e.g., for encryption schemes $\gamma=1/2$ and for signature schemes $\gamma=0$).
Since the random variable $H_n$ is uniformly distributed over $\mathsf{Hash}_n$ for every $n$,
the definition \eqref{gfsrom} of security in the random oracle model is equivalently rewritten into the following form:
for all probabilistic polynomial-time adversaries $\mathcal{A}$ and all $d\in\N^+$ there exists $N\in\N^+$ such that, for all $n\ge N$,
\begin{equation}\label{rw-gfsrom}
  \frac{1}{\#\mathsf{Hash}_n}\sum_{h\in\mathsf{Hash}_n}
  \Prob\left[\mathsf{Expt}_{\mathcal{A}^{H_n},\mathrm{\Pi}^{H_n}}(n)=1\bigm|H_n=h\right]\le \gamma+\frac{1}{n^d},
\end{equation}
where
$\#\mathsf{Hash}_n$ denotes the number of functions in $\mathsf{Hash}_n$,
and the probability is now conditioned on that the random variable $H_n$ takes a specific function $h\in\mathsf{Hash}_n$ as its value.

Let $\{h_n\}$ be an arbitrary sequence of functions such that $h_n\in\mathsf{Hash}_n$ for all $n$.
In this paper, we introduce the notion of \emph{security of $\Pi$ relative to a specific oracle $\{h_n\}$}, which can be formulated as follows:
the scheme $\Pi$ is secure relative to $\{h_n\}$
if, for all probabilistic polynomial-time adversaries $\mathcal{A}$ and all $d\in\N^+$ there exists $N\in\N^+$ such that, for all $n\ge N$,
\begin{equation}\label{gfsrom-rel}
  \Prob\left[\mathsf{Expt}_{\mathcal{A}^{H_n},\mathrm{\Pi}^{H_n}}(n)=1\bigm|H_n=h_n\right]\le \gamma+\frac{1}{n^d}.
\end{equation}
The specific sequence $\{h_n\}$ of functions is an \emph{instantiation} of the random oracle $\{H_n\}$. 
Note that, in the case where $\{h_n\}$ is polynomial-time computable, i.e., there exists a deterministic Turing machine which on every input $(1^n,x)$
operates and outputs
$h_n(x)$ within time polynomial in $n$, the condition \eqref{gfsrom-rel} implies that the scheme $\Pi$ is just \emph{secure in the standard model}.
Here, the standard model is the normal model of a cryptographic scheme, where no random oracle is present.

In this paper,
we investigate the properties of a specific oracle $\{h_n\}$ relative to which $\Pi$ is secure,
under the assumption that
$\Pi$ is secure in the random oracle model.
The contributions of the paper to the random oracle methodology are as follows:
\begin{enumerate}
\item We investigate the instantiation of the random oracle by a random real in a signature scheme already proved secure in the random oracle model.
  We present equivalent conditions for a specific oracle $\{h_n\}$ instantiating the random oracle to keep a
  signature
  scheme secure,
  using a concept of algorithmic randomness, i.e., a variant of Martin-L\"of randomness.
  Based on this, in particular we show that the security proved in the random oracle model is firmly maintained after instantiating the random oracle by a random real.
\item We introduce the notion of \emph{effective security}, which is a constructive strengthen of the conventional (non-constructive) notions of security.
  In terms of the definitions \eqref{gfsrom} and \eqref{gfsrom-rel} of security, the ``effectiveness'' means that 
  the natural number $N$ can be computed from the code of an adversary $\mathcal{A}$ and a natural number $d$.
  We consider signature schemes in the random oracle model, and show that some specific \emph{computable} function $\{h_n\}$ can instantiate the random oracle
  while keeping the effective security originally proved in the random oracle model.
  We demonstrate that the effective security notions are a natural alternative to the conventional security notions in modern cryptography
  by reconsidering the security notions required in modern cryptography.
\end{enumerate}
The results
in the contributions (i) and (ii) above
are based only on the general form of the definitions of security notions for a signature scheme in modern cryptography,
and depend neither on specific schemes nor on specific security notions. 
On the other hand,
our results on the secure instantiation of the random oracle are valid only if the security in the random oracle model is confirmed already.
This may imply that the random oracle model is not necessarily an imaginary framework to discuss
the security of a cryptographic scheme,
but may have substantial implications for the security
in the standard model.

In addition to the random oracle model, in modern cryptography the \emph{generic group model} \cite{S97} is used also as an imaginary framework
in which the security of a cryptographic scheme is discussed.
In particular, the generic group model is often used to discuss the \emph{computational hardness} of problems, such as the discrete logarithm problem
and the Diffie-Hellman problem in finite cyclic groups,
which is used as a computational hardness assumption to prove the security of a cryptographic scheme.
In the generic group model, the \emph{generic group}, i.e., a random encoding of the group elements, is an imaginary object, just like the random oracle.
Therefore, even if the security of a cryptographic scheme or the hardness of a computational problem is proved in the generic group model,
the generic group has to be instantiated using a concrete finite cyclic group whose group operations are efficiently computable,
for use of the cryptographic scheme in the real world.
Hence, the problem of the secure instantiation of the generic group
exists
in the generic group model, just like in the random oracle model.

In this paper we introduce the notion of \emph{effective hardness} for computational problems,
which corresponds to the effective security for cryptographic schemes.
Based on concepts and methods of algorithmic randomness, we then show that,
for the discrete logarithm problem and the Diffie-Hellman problem in the generic group model,
the generic group can be instantiated by a specific computable function while keeping the effective hardness originally proved in the generic group model.
This result corresponds to the contribution (ii) above for the random oracle model.
We can show the results for the generic group model which corresponds to the contribution (i) for the random oracle model. 
However, this task is not difficult and therefore omitted in this paper. 

\subsection{Organization of the paper}

The paper is organized as follows.
As preliminaries we first review some definitions and results of algorithmic randomness in Section~\ref{preliminaries}.

We then begin the study of the random oracle model in Section~\ref{CSIROSS},
where we present the general form of signature schemes which we consider in this paper,
and introduce the (conventional) security notion for the signature schemes.
In Section~\ref{CSIRO} we present the contribution (i) above for the random oracle model.
Subsequently we present the contribution (ii) above in Sections~\ref{computable-function} and \ref{discussion},
where we introduce the notion of effective security and
then show a secure instantiation of the random oracle by a computable function in Sections~\ref{computable-function},
and we demonstrate the importance of the effective security notions in Section~\ref{discussion}.

We then begin the study of the generic group model in Section~\ref{DL}, where we explain the discrete logarithm problem in the generic group model.
In Section~\ref{LM} we develop the Lebesgue outer measure on families of encoding functions.
It is needed in Section~\ref{DL2},
where we introduce the notion of effective hardness and
then show
a secure instantiation of the generic group in the discrete logarithm problem by a computable function.
In Section~\ref{CDH} we show that the same results hold for the Diffie-Hellman problem.
We conclude this paper with the clarification of
the notion of effective hardness in Section~\ref{conclusion-ggm}.

\section{Preliminaries}
\label{preliminaries}

We start with some notation about numbers and strings which will be used in this paper.
$\#S$ is the cardinality of $S$ for any set $S$.
$\N=\left\{0,1,2,3,\dotsc\right\}$ is the set of natural numbers, and $\N^+$ is the set of positive integers.
$\Q$ is the set of rationals, and $\R$ is the set of reals.

$\X=
\left\{
  \lambda,0,1,00,01,10,11,000,001,010,\dotsc
\right\}$
is the set of finite binary strings where $\lambda$ denotes the \emph{empty string}, and $\X$ is ordered as indicated.
We identify any string in $\X$ with a natural number in this order,
i.e., we consider a map $\varphi\colon \X\to\N$ such that $\varphi(x)=1x-1$ where
the concatenation $1x$ of the strings $1$ and $x$ is regarded as a dyadic integer, and then we identify $x$ with $\varphi(x)$.
For any $x \in \X$, $\abs{x}$ is the \emph{length} of $x$.
For any $n\in\N$,
we denote by $\{0,1\}^n$ and $\{0,1\}^{\le n}$ the sets $\{\,x\mid x\in\X\;\&\;\abs{x}=n\}$ and $\{\,x\mid x\in\X\;\&\;\abs{x}\le n\}$, respectively.
For any $n,m\in\N$, we denote by $\mathsf{Func}_n^m$ and $\mathsf{Func}_{\le n}^m$
the set of all functions mapping $\{0,1\}^n$ to $\{0,1\}^m$ and the set of all functions mapping $\{0,1\}^{\le n}$ to $\{0,1\}^m$, respectively.
A subset $S$ of $\X$ is called
\emph{prefix-free}
if no string in $S$ is a prefix of another string in $S$.
We write ``r.e.'' instead of ``recursively enumerable.''

$\XI$ is the set of infinite binary sequences, where an infinite binary sequence is infinite to the right but finite to the left.
For any $\alpha\in\XI$ and any $n\in\N$, we denote by $\rest{\alpha}{n}\in\X$ the first $n$ bits of $\alpha$.
For any $S\subset\X$, the set $\{\alpha\in\XI\mid\exists\,n\in\N\;\rest{\alpha}{n}\in S\}$ is denoted by $\osg{S}$.
Note that (i) $\osg{S}\subset\osg{T}$ for every $S\subset T\subset\X$,
and (ii) for every set $S\subset\X$ there exists a prefix-free set $P\subset\X$ such that $\osg{S}=\osg{P}$.

Lebesgue outer measure $\mathcal{L}$ on $\XI$ is a function mapping any subset of $\XI$ to a non-negative real.
In this paper, we use the following properties of $\mathcal{L}$.

\begin{proposition}[Properties of Lebesgue outer measure on $\XI$]\label{Lebesgue-outer}\hfill
\begin{enumerate}
\item $\Lm{\osg{P}}=\sum_{x\in P}2^{-\abs{x}}$ for every prefix-free set $P\subset\X$.
  Therefore $\Lm{\emptyset}=\Lm{\osg{\emptyset}}=0$ and $\Lm{\XI}=\Lm{\osg{\{\lambda\}}}=1$.
\item $\Lm{\mathcal{C}}\le\Lm{\mathcal{D}}$ for every $\mathcal{C}\subset\mathcal{D}\subset\XI$.
\item $\Lm{\bigcup_{i}\mathcal{C}_i}\le \sum_{i}\Lm{\mathcal{C}_i}$ for every sequence $\{\mathcal{C}_i\}_{i\in\N}$ of subsets of $\XI$.\qed
\end{enumerate}
\end{proposition}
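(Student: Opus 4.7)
The plan is to work from the standard cylinder-based definition of Lebesgue outer measure on $\XI$, namely
\[
\Lm{\mathcal{A}}=\inf\left\{\sum_{i}2^{-\abs{y_i}} \;:\; \{y_i\}_{i}\subset\X,\; \mathcal{A}\subseteq\osg{\{y_i\mid i\in\N\}}\right\},
\]
where the infimum is over at most countable families of strings whose basic cylinders cover $\mathcal{A}$. All three properties will be read off from this definition, with (i) requiring genuine work and (ii), (iii) being purely formal.

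For (ii), every covering family for $\mathcal{D}$ is automatically a covering family for $\mathcal{C}$, so the infimum defining $\Lm{\mathcal{C}}$ is taken over a larger set of admissible covers and is therefore $\le \Lm{\mathcal{D}}$. For (iii), I would use the standard $\varepsilon/2^{i+1}$ trick: given $\varepsilon>0$, pick for each $i$ a cover $\{y^{(i)}_j\}_j$ of $\mathcal{C}_i$ with $\sum_j 2^{-\abs{y^{(i)}_j}} < \Lm{\mathcal{C}_i}+\varepsilon\cdot 2^{-i-1}$, and observe that the doubly-indexed family $\{y^{(i)}_j\}_{i,j}$ is still countable and covers $\bigcup_i \mathcal{C}_i$ with total weight less than $\sum_i \Lm{\mathcal{C}_i}+\varepsilon$; let $\varepsilon\to 0$.

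The real content is (i). The upper bound $\Lm{\osg{P}}\le\sum_{x\in P}2^{-\abs{x}}$ is immediate by taking $P$ itself as a cover. For the matching lower bound, I would first handle the case where $P$ is finite. In this case $\osg{P}$ is a clopen, hence compact, subset of the Cantor space $\XI$, so any cover by basic cylinders $\osg{\{y_j\}}$ admits a finite subcover $\{\osg{\{y_{j_1}\}},\dots,\osg{\{y_{j_m}\}}\}$. Set $N=\max_{x\in P}\abs{x}$ and, without changing the total weight, replace each $y_{j_\ell}$ with the set of its length-$N$ extensions (if $\abs{y_{j_\ell}}\le N$) or discard it if it has no extension meeting $\osg{P}$; since $P$ is prefix-free, each surviving length-$N$ string $z$ lies in $\osg{\{x\}}$ for a unique $x\in P$. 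Counting: $\osg{\{x\}}$ contains exactly $2^{N-\abs{x}}$ strings of length $N$, all of which must appear (else some $\alpha\in\osg{\{x\}}$ is uncovered), yielding the bound $\sum_j 2^{-\abs{y_j}}\ge \sum_{x\in P}2^{N-\abs{x}}\cdot 2^{-N}=\sum_{x\in P}2^{-\abs{x}}$. For infinite prefix-free $P$, any cover of $\osg{P}$ also covers $\osg{P_0}$ for every finite $P_0\subseteq P$, so applying the finite case yields $\sum_j 2^{-\abs{y_j}}\ge\sum_{x\in P_0}2^{-\abs{x}}$; taking the supremum over finite $P_0\subseteq P$ completes the proof.

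The main obstacle is the combinatorial lower bound in the finite prefix-free case: one must use prefix-freeness to turn an arbitrary cover into a disjoint cover by length-$N$ cylinders and argue that every one of the $2^{N-\abs{x}}$ length-$N$ descendants of each $x\in P$ really must appear. The compactness reduction to finite covers and the level-$N$ refinement trick together carry the argument, and everything else is bookkeeping.
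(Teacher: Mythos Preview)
Your argument is correct. Parts (ii) and (iii) are the standard monotonicity and countable-subadditivity arguments for any outer measure, and your treatment of (i) via compactness of $\osg{P}$ for finite $P$, refinement to level-$N$ cylinders, and a supremum over finite subsets for infinite $P$ is a clean and standard way to establish the exact formula for prefix-free $P$. One small point: when $\abs{y_{j_\ell}}>N$ and $\osg{\{y_{j_\ell}\}}$ does meet $\osg{P}$, you should say explicitly that $y_{j_\ell}$ then extends some $x\in P$, so its cylinder is already contained in $\osg{\{x\}}$ and can be replaced by the length-$N$ prefix of $y_{j_\ell}$ (this only increases the weight, which is harmless for a lower bound). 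With that clarification the counting goes through exactly as you describe.

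As for comparison with the paper: the paper does not prove this proposition at all. It is stated as a list of standard properties of Lebesgue outer measure on $\XI$ and closed immediately with a \qed, treated as background. So you have supplied a full proof where the paper simply cites folklore; there is nothing further to compare.
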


A function $f\colon\N\to\X$ or $f\colon\N\to\Q$ is called \emph{computable}
if there exists a deterministic Turing machine which on every input $n\in\N$ halts and outputs $f(n)$.
A computable function is also called a \emph{total recursive function}.
A real $a$ is called \textit{computable} if there exists a computable function $g\colon\N\to\Q$ such that $\abs{a-g(k)} < 2^{-k}$ for all $k\in\N$.
We say that $\alpha\in\XI$ is \emph{computable} if the mapping $\N\ni n\mapsto\rest{\alpha}{n}$ is a computable function,
which is equivalent to that the real $0.\alpha$ in base-two notation is computable.
\subsection{Algorithmic randomness}
\label{AR}

In the following we concisely review some definitions and results of algorithmic randomness 
\cite{C75,C87b,N09,DH10}.
The idea in algorithmic randomness is to think of a real, i.e., an infinite binary sequence, as random if it is in no \emph{effective null set}.
An effective null set is a subset $\mathcal{S}$ of $\XI$ such that $\Lm{\mathcal{S}}=0$ and $\mathcal{S}$
has some type of effective property.
To specify an algorithmic randomness notion, one has to specify a type of effective null set, which is usually done by introducing a test concept.
Failing the test is the same as being in the null set.
In this manner, various randomness notions,
such as $2$-randomness, weak $2$-randomness, Demuth randomness, Martin-L\"of randomness, Schnorr randomness, Kurtz randomness,
have been introduced so far, and a hierarchy of algorithmic randomness notions has been developed
(see \cite{N09,DH10} for the detail).

Among all randomness notions, \emph{Martin-L\"of randomness} is a central one.
This is because in many respects, Martin-L\"of randomness is well-behaved,
in that the many properties of Martin-L\"of random infinite sequences do match our intuition of what random infinite sequence should look like.
Moreover, the concept of Martin-L\"of randomness is robust in the sense that it admits various equivalent definitions that are all natural and intuitively meaningful,
as we will see
in Theorem~\ref{equivMLR}.
Martin-L\"of randomness is defined as follows based on the notion of \emph{Martin-L\"of test}.

\begin{definition}[Martin-L\"{o}f randomness, Martin-L\"{o}f \cite{M66}]\label{ML-randomness}
A subset $\mathcal{C}$ of $\N^+\times\X$ is called a \emph{Martin-L\"{o}f test} if $\mathcal{C}$ is an r.e.~set and
there exists a total recursive function $f\colon\N^+\to\Q\cap(0,\infty)$ such that
$\lim_{n\to\infty}f(n)=0$
and for every $n\in\N^+$,
\begin{equation*}
  \Lm{\osg{\mathcal{C}_n}}\le f(n),
\end{equation*}
where
$\mathcal{C}_n
=
\left\{\,
  x\bigm|(n,x)\in\mathcal{C}
\,\right\}$.

For any $\alpha\in\XI$, we say that $\alpha$ is \emph{Martin-L\"{o}f random} if
for every Martin-L\"{o}f test $\mathcal{C}$
there exists $n\in\N^+$ such that $\alpha\notin\osg{\mathcal{C}_n}$.%
\footnote{%
Normally, Martin-L\"of random is defined with fixing the total recursive function $f\colon\N^+\to\Q\cap(0,\infty)$ to the form $f(n)=2^{-n}$.
However, the relaxation of the function $f$ as in Definition~\ref{ML-randomness} does not alter the class of Martin-L\"of random infinite binary sequences.}
\qed
\end{definition}

Let $\mathcal{C}$ be a Martin-L\"{o}f test.
Then, for each $k\in\N^+$, using (ii) of Proposition~\ref{Lebesgue-outer} we see that
$\Lm{\bigcap_{n=1}^{\infty}\osg{\mathcal{C}_n}}\le\Lm{\osg{\mathcal{C}_k}}\le f(k)$.
On letting $k\to\infty$, we have $\Lm{\bigcap_{n=1}^{\infty}\osg{\mathcal{C}_n}}=0$.
Thus, the set $\bigcap_{n=1}^{\infty}\osg{\mathcal{C}_n}$ forms an effective null set in the notion of Martin-L\"{o}f randomness.
Definition~\ref{ML-randomness} says that an infinite binary sequence $\alpha$ is Martin-L\"of random
if $\alpha$ is not in the effective null set $\bigcap_{n=1}^{\infty}\osg{\mathcal{C}_n}$ for any Martin-L\"{o}f test $\mathcal{C}$.

One of the equivalent variants
of Martin-L\"of randomness is Solovay randomness, which plays
an important
role in this paper, as well as Martin-L\"of randomness.

\begin{definition}[Solovay randomness, Solovay~\cite{Sol75}]
A subset $\mathcal{C}$ of $\N^+\times\X$ is called a \emph{Solovay test} if $\mathcal{C}$ is an r.e.~set and
\begin{equation*}
  \sum_{n=1}^{\infty} \Lm{\osg{\mathcal{C}_n}}<\infty.
\end{equation*}

For any $\alpha\in\XI$, we say that $\alpha$ is \emph{Solovay random} if for every Solovay test $\mathcal{C}$,
there exists $N\in\N^+$ such that,
for every $n\ge N$, $\alpha\notin\osg{\mathcal{C}_n}$.
\qed
\end{definition}

For each Solovay test $\mathcal{C}$, we can show that $\Lm{\bigcap_{n=1}^{\infty}\bigcup_{k=n}^{\infty}\osg{\mathcal{C}_k}}=0$.
The set $\bigcap_{n=1}^{\infty}\bigcup_{k=n}^{\infty}\osg{\mathcal{C}_k}$ forms an effective null set in the notion of Solovay randomness.

The robustness of Martin-L\"of randomness is mainly due to the fact that it admits
characterizations based on the notion of program-size complexity,
as shown in Theorem~\ref{equivMLR}.
The \emph{program-size complexity} (or \emph{Kolmogorov complexity}) $K(x)$ of a finite binary string $x$
is defined as the length of the shortest binary input
for a universal decoding algorithm $U$,
called an \textit{optimal prefix-free machine},
to output $x$ (see Chaitin~\cite{C75} for the detail).
By the definition,
$K(x)$ can be thought of as
the randomness contained in the individual finite binary string $x$.

\begin{theorem}[Schnorr~\cite{Sch73}, Chaitin~\cite{C75}, and Solovay~\cite{Sol75}, and Miller and Yu~\cite{MY08}]\label{equivMLR}
For every $\alpha\in\XI$, the following conditions are equivalent:
\begin{enumerate}
  \item $\alpha$ is Martin-L\"{o}f random.
  \item $\alpha$ is Solovay random.
  \item There exists $c\in\N$ such that, for all $n\in\N^+$, $n-c \le K(\rest{\alpha}{n})$.
  \item $\sum_{n=1}^{\infty}2^{n-K(\reste{\alpha}{n})}<\infty$.\qed
\end{enumerate}
\end{theorem}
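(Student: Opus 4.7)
The plan is to prove the four conditions equivalent by taking (i) as a hub and establishing (i)$\Leftrightarrow$(iii), (i)$\Leftrightarrow$(ii), and (i)$\Leftrightarrow$(iv) separately.

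For the Schnorr--Chaitin equivalence (i)$\Leftrightarrow$(iii), the forward direction uses the set $\mathcal{D}_c:=\{x\in\X : K(x)<\abs{x}-c\}$: since $K$ is upper semicomputable, $\mathcal{D}=\{(c,x):x\in\mathcal{D}_c\}$ is r.e., and the Kraft inequality for the optimal prefix-free machine gives $\Lm{\osg{\mathcal{D}_c}}\le\sum_{x\in\mathcal{D}_c}2^{-\abs{x}}\le 2^{-c}\sum_x 2^{-K(x)}\le 2^{-c}$, so $\mathcal{D}$ is a Martin-L\"of test; ML-randomness then forces some $c$ with $\alpha\notin\osg{\mathcal{D}_c}$, giving (iii). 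For the converse I argue contrapositively: if $\alpha$ fails a Martin-L\"of test $\mathcal{C}$, I would pass to a subsequence with $\Lm{\osg{\mathcal{C}_n}}\le 2^{-n-2\log_2 n}$ and replace each level by an r.e.\ prefix-free antichain $P_n$ with $\osg{P_n}=\osg{\mathcal{C}_n}$ (greedily claiming minimal elements during the enumeration). Kraft--Chaitin then supplies a prefix-free machine describing each $x\in P_n$ in length $\abs{x}-n+O(\log n)$, so for the prefix $x_n$ of $\alpha$ lying in $P_n$ we obtain $K(\rest{\alpha}{\abs{x_n}})\le\abs{x_n}-n+O(\log n)$, which beats every fixed $c$ once $n$ is large and defeats (iii).

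For the Solovay equivalence (i)$\Leftrightarrow$(ii), the direction (ii)$\Rightarrow$(i) is handled by computably thinning any Martin-L\"of test $\mathcal{C}$ to a subsequence $n_k$ with $f(n_k)\le 2^{-k}$, turning $\{\mathcal{C}_{n_k}\}$ into a Solovay test that a Solovay-random $\alpha$ must eventually escape. For (i)$\Rightarrow$(ii), given a Solovay test $\mathcal{C}$ with $S:=\sum_n\Lm{\osg{\mathcal{C}_n}}<\infty$, I would form the uniformly effectively open sets $W_m:=\{\beta\in\XI : \beta\in\osg{\mathcal{C}_n}\text{ for at least }m\text{ indices }n\}$, for which Markov's inequality gives $\Lm{W_m}\le S/m$. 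Although $S$ is only left-c.e., \emph{some} integer $B\ge S$ does exist, and for that $B$ the family $G_k:=W_{B\cdot 2^k}$ is a genuine Martin-L\"of test with bound $2^{-k}$; any $\alpha$ in the Solovay null set of $\mathcal{C}$ lies in $W_m$ for every $m$, hence in $\bigcap_k G_k$, contradicting ML-randomness.

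For the ample-excess equivalence (i)$\Leftrightarrow$(iv), the direction (iv)$\Rightarrow$(iii) (hence (iv)$\Rightarrow$(i)) is immediate: summability forces $K(\rest{\alpha}{n})\ge n$ for all but finitely many $n$, and a single constant absorbs the exceptions. For the hard direction (i)$\Rightarrow$(iv), the Miller--Yu ample excess lemma, I would define the partial-sum potential $M(y):=\sum_{x\preceq y}2^{\abs{x}-K(x)}$ on $\X$ and form the r.e.\ set $\mathcal{U}_c:=\{y : M(y)>2^c\}$; letting $P_c$ denote the prefix-free antichain of $\preceq$-minimal elements of $\mathcal{U}_c$, a Fubini exchange should yield
\begin{equation*}
2^c\,\Lm{\osg{\mathcal{U}_c}}<\sum_{y\in P_c}2^{-\abs{y}}M(y)=\sum_{x\in\X}2^{-K(x)}\sum_{\substack{y\in P_c\\x\preceq y}}2^{\abs{x}-\abs{y}}\le\sum_{x\in\X}2^{-K(x)}\le 1,
\end{equation*}
where the prefix-freeness of $P_c$ keeps the inner sum bounded by $1$. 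Thus $\mathcal{U}$ is a Martin-L\"of test, and ML-randomness of $\alpha$ produces $c$ with $M(\rest{\alpha}{n})\le 2^c$ for every $n$, bounding the series in (iv) by $2^c$. The main obstacle is exactly this ample-excess step---choosing the correct potential $M$ on the binary tree and verifying the Fubini-style measure bound on the minimal prefix-free antichain $P_c$; a secondary subtlety is the non-constructive appeal to some $B\ge S$ in the Solovay argument, which is the cleanest way around the fact that $S$ is only left-c.e.
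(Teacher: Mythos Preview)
The paper does not prove Theorem~\ref{equivMLR} at all: it is stated with attributions to Schnorr, Chaitin, Solovay, and Miller--Yu and closed with a bare \qed, so there is no ``paper's own proof'' to compare against. Your sketch is a faithful outline of the classical arguments from those sources: the Kraft--Chaitin construction for (i)$\Leftrightarrow$(iii), the ``hit at least $m$ levels'' plus Markov bound for (i)$\Leftrightarrow$(ii), and the partial-sum potential $M(y)=\sum_{x\preceq y}2^{|x|-K(x)}$ with the Fubini swap for the Miller--Yu ample-excess direction (i)$\Rightarrow$(iv). The steps you flag as delicate are exactly the right ones, and your treatment of them is sound; in particular, hard-coding an integer $B\ge S$ into the test is legitimate since Definition~\ref{ML-randomness} only asks that the single set $\mathcal{C}\subset\N^+\times\X$ be r.e.\ and admit a computable bound, not that $B$ be found effectively from the Solovay test.
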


In particular, the condition (iii) means that the infinite binary sequence $\alpha$ is incompressible.

We denote by $\mathsf{MLR}$ the set of all infinite binary sequences which are Martin-L\"of random.
Since there are only countably infinitely many algorithms and every Martin-L\"of test induces an effective null set,
it is easy to show the following theorem.

\begin{theorem}[Martin-L\"of~\cite{M66}]\label{MLR-measure1}
$\mathcal{L(\mathsf{MLR})}=1$.
\qed
\end{theorem}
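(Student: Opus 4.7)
The plan is to express the complement $\XI \setminus \mathsf{MLR}$ as a countable union of effective null sets, each already shown to be null in the discussion immediately following Definition~\ref{ML-randomness}, and then to invoke countable subadditivity from Proposition~\ref{Lebesgue-outer}.

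First, I would unpack the definition: an $\alpha \in \XI$ fails to be Martin-L\"of random precisely when there exists some Martin-L\"of test $\mathcal{C}$ with $\alpha \in \osg{\mathcal{C}_n}$ for every $n \in \N^+$, i.e., $\alpha \in \bigcap_{n=1}^{\infty} \osg{\mathcal{C}_n}$. Thus
\begin{equation*}
  \XI \setminus \mathsf{MLR} \;=\; \bigcup_{\mathcal{C}} \,\bigcap_{n=1}^{\infty} \osg{\mathcal{C}_n},
\end{equation*}
where the outer union ranges over all Martin-L\"of tests $\mathcal{C}$.

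Second, I would argue that this outer union is countable. Every Martin-L\"of test is, in particular, an r.e.\ subset of $\N^+ \times \X$, and there are only countably many r.e.\ sets because each corresponds to (at least one) index of a Turing machine enumerating it. Hence the collection of Martin-L\"of tests is countable, and the displayed union above is a countable union.

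Third, for each single Martin-L\"of test $\mathcal{C}$, the paragraph following Definition~\ref{ML-randomness} already establishes, using monotonicity (part (ii) of Proposition~\ref{Lebesgue-outer}) and the fact that $f(n) \to 0$, that $\Lm{\bigcap_{n=1}^{\infty} \osg{\mathcal{C}_n}} = 0$. Applying countable subadditivity (part (iii) of Proposition~\ref{Lebesgue-outer}) to the countable union then yields $\Lm{\XI \setminus \mathsf{MLR}} = 0$.

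Finally, since $\XI = \mathsf{MLR} \cup (\XI \setminus \mathsf{MLR})$, another application of subadditivity and the value $\Lm{\XI} = 1$ from part (i) of Proposition~\ref{Lebesgue-outer} gives $1 \le \Lm{\mathsf{MLR}} + 0$, while monotonicity gives $\Lm{\mathsf{MLR}} \le 1$; hence $\Lm{\mathsf{MLR}} = 1$. The only step requiring care is the countability of the family of Martin-L\"of tests, which is the main conceptual point and is immediate from the fact that r.e.\ sets are indexed by Turing machines; no obstacle arises since we only need countability of the tests themselves, not a uniform effective enumeration of them together with witnessing bound functions~$f$.
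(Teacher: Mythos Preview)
Your proposal is correct and follows exactly the approach the paper sketches in the sentence preceding the theorem (``Since there are only countably infinitely many algorithms and every Martin-L\"of test induces an effective null set, it is easy to show the following theorem''), which is the only proof the paper offers. You have simply filled in the details of that hint, including the careful use of subadditivity to pass from $\Lm{\XI\setminus\mathsf{MLR}}=0$ to $\Lm{\mathsf{MLR}}=1$.
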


\section{Signature schemes and their security}
\label{CSIROSS}

We begin by presenting the general form of signature scheme whose security we consider
in this paper.
For modern cryptography in general,
we refer the reader to
Katz and Lindell~\cite{KL07}.

In 1993 Bellare and Rogaway proposed the notion of \emph{full-domain hash} (FDH) signature scheme
in their original paper on the random oracle model \cite{BR93}.
They showed that
the RSA-FDH signature scheme, which is an instantiation of the FDH signature scheme with the RSA function as a trapdoor permutation,
is effectively existentially unforgeable under an adaptive chosen-message attack (EUF-ACMA secure) in the random oracle model
under the RSA assumption (see Theorem~\ref{RSA-FDH}; for the detail of RSA-FDH see also \cite[Chapter 13]{KL07}).
In the first half of this paper,
we consider a general form of the FDH signature scheme and give our results about the secure instantiation of the random oracle for
that general scheme.

Let $\ell(n)$ be a polynomial with integer coefficients such that $\ell(n)>0$ for all $n\in\N^+$.
An \emph{$\ell$-function} is a function $H\colon\N\times\X\to\X$ such that $\abs{H(n,x)}=\ell(n)$ for all $n\in\N$ and $x\in\X$.
For each $\ell$-function $H$ and $n\in\N$, we define a function $H_n\colon\X\to\{0,1\}^{\ell(n)}$ by $H_n(x)=H(n,x)$. 
An $\ell$-function serves as an instantiation of the random oracle, such as a cryptographic hash function.

\begin{definition}\label{gfFDH}
Let $\ell(n)$ be a polynomial.
A \emph{signature scheme relative to $\ell$-functions} is
a tuple $(\mathsf{Gen},\mathsf{Sign},\mathsf{Vrfy})$ of three polynomial-time algorithms such that, for every $\ell$-function $H$,
\renewcommand{\labelenumi}{\arabic{enumi}.}
\begin{enumerate}
\item The \emph{key generation algorithm} $\mathsf{Gen}$ is a probabilistic algorithm which
  takes as input a security parameter $1^n$ and outputs a pair of keys $(pk,sk)$.
  These are called the \emph{public key} and the \emph{private key}, respectively.
  We assume
  that $n$ can be determined from each of $pk$ and $sk$.
\item The \emph{signing algorithm} $\mathsf{Sign}$ is a probabilistic algorithm which takes as input a private key $sk$ and a \emph{message} $m\in\X$.
  It is given oracle access to $H_n(\cdot)$, and then outputs a \emph{signature} $\sigma$, denoted as $\sigma\gets\mathsf{Sign}_{sk}^{H_n(\cdot)}(m)$.
\item The \emph{verification algorithm} $\mathsf{Vrfy}$ is a deterministic algorithm which
  takes as input a public key $pk$, a massage $m$, and a signature $\sigma$.
  It is given oracle access to $H_n(\cdot)$, and then outputs a bit $b$, with $b=1$ meaning \emph{valid} and $b=0$ meaning \emph{invalid}.
  We write this as $b:=\mathsf{Vrfy}_{pk}^{H_n(\cdot)}(m,\sigma)$.
\end{enumerate}
It is required that, for every $\ell$-function $H$, for every $n\in\N^+$, for every $(pk,sk)$ output by $\mathsf{Gen}(1^n)$, and for every $m\in\X$,
\begin{equation}\label{VS=1}
  \mathsf{Vrfy}_{pk}^{H_n(\cdot)}(m,\mathsf{Sign}_{sk}^{H_n(\cdot)}(m))=1.
\end{equation}
\qed
\end{definition}

In general, a signature scheme is used in the following way.
One party $S$, who acts as the \emph{signer}, runs $\mathsf{Gen}(1^n)$ to obtain keys $(pk,sk)$.
The public key $pk$ is then publicized as belonging to $S$; e.g., $S$ can put the public key on its webpage or place it in some public directory.
We assume that any other party is able to obtain a legitimate copy of $S$'s public key.
When $S$ wants to transmit a message $m$, it computes $\sigma\gets\mathsf{Sign}_{sk}^{H_n(\cdot)}(m)$ and sends $(m, \sigma)$.
Upon receipt of $(m,\sigma)$, a receiver who knows $pk$ can verify the authenticity of $m$ by checking whether
$\mathsf{Vrfy}_{pk}^{H_n(\cdot)}(m,\sigma)=1$, or not.
This establishes both that $S$ sent $m$, and also that $m$ was not modified in transmit.
Note here that Definition~\ref{gfFDH} only defines the syntax of signature schemes and does not define the security of them at all,
which is defined in what follows.

As the security notion of signature schemes,
in this paper we consider \emph{the existential unforgeability
under adaptive chosen-message attacks} (\emph{EUF-ACMA security}) as an example.
We can show the same results for other security notions,
such as the existential unforgeability against key only attacks (EUF-KOA security),
the existential unforgeability against known-message attacks (EUF-KMA security),
and the existential unforgeability against generic chosen-massage attacks  (EUF-GCMA security),
which are all weaker than the EUF-ACMA security.

Given a public key $pk$ generated by a signer $S$ to an adversary, we say that the adversary outputs a \emph{forgery}
if it outputs a message $m$ along with a valid signature $\sigma$ on $m$,
and furthermore $m$ was not previously signed by $S$ using the private key $sk$ which corresponds to $pk$.
The EUF-ACMA security of a signature scheme means that
an adversary cannot output a forgery even if it is allowed to obtain signatures on many other messages of its choice.
The formal definition is given as follows.

Let $\mathrm{\Pi}=(\mathsf{Gen},\mathsf{Sign},\mathsf{Vrfy})$ be a signature scheme relative to $\ell$-functions,
and consider the following experiment for a probabilistic polynomial-time adversary $\mathcal{A}$,%
\footnote{Normally, a probabilistic (uniform) polynomial-time Turing machine is called a \emph{probabilistic polynomial-time adversary}
when it is used as an adversary against a cryptographic scheme.}
a parameter $n$, and a function $G$ mapping a superset of $\{0,1\}^{\le q(n)}$ to $\{0,1\}^{\ell(n)}$ where $q(n)$ is the maximum value
among the running time of $\mathsf{Sign}$, the running time of $\mathsf{Vrfy}$, and the running time of $\mathcal{A}$ on the parameter $n$:

\begin{quote}
\textbf{The signature experiment $\mathsf{Sig\text{-}forge}_{\mathcal{A},\mathrm{\Pi}}(n,G)$:}
\vspace*{-1mm}
\renewcommand{\labelenumi}{\arabic{enumi}.}
\emph{
\begin{enumerate}
\item $\mathsf{Gen}(1^n)$ is run to obtain keys $(pk,sk)$.
\item Adversary $\mathcal{A}$ is given $pk$ and oracle access to both $\mathsf{Sign}_{sk}^{G(\cdot)}(\cdot)$ and $G(\cdot)$.
  (The first oracle returns a signature $\mathsf{Sign}_{sk}^{G(\cdot)}(m')$ for any message $m'$ of the adversary's choice
  while having oracle access to $G(\cdot)$ of itself.)
  The adversary then outputs $(m,\sigma)$.
  Let $\mathcal{Q}$ denotes the set of messages whose signatures were requested by $\mathcal{A}$ during its execution.
\item The output of the experiment is defined to be $1$
  if both $m\notin\mathcal{Q}$ and $\mathsf{Vrfy}_{pk}^{G(\cdot)}(m,\sigma)=1$ hold true, and $0$ otherwise.
\end{enumerate}
}
\end{quote}

Here the function $G$ serves as an instantiation of the random oracle.
Since the running time of each of $\mathsf{Sign}$, $\mathsf{Vrfy}$, and $\mathcal{A}$ on the parameter $n$ is at most $q(n)$,
the lengths of the strings queried to the oracle $G(\cdot)$ by these three algorithms during their computations are at most $q(n)$.
Thus the function $G$ only have to be defined on the set $\{0,1\}^{\le q(n)}$.

On the one hand,
the EUF-ACMA security of signature schemes
relative to a specific $\ell$-function is defined as follows.
This form of the definition
corresponds to the condition \eqref{gfsrom-rel} with $\gamma=0$ for the security of a signature scheme
relative to a specific oracle $\{h_n\}$ considered in the introduction.

\begin{definition}\label{EUF-ACMA-secure-rel}
Let $H$ be an $\ell$-function. 
A signature scheme $\mathrm{\Pi}=(\mathsf{Gen},\mathsf{Sign},\mathsf{Vrfy})$ relative to $\ell$-functions is
\emph{existentially unforgeable under an adaptive chosen-message attack} (or \emph{EUF-ACMA secure}) \emph{relative to $H$}
if for all probabilistic polynomial-time adversaries $\mathcal{A}$ and all $d\in\N^+$ there exists $N\in\N^+$ such that, for all $n\ge N$,
\begin{equation*}
  \Prob[\mathsf{Sig\text{-}forge}_{\mathcal{A},\mathrm{\Pi}}(n,H_n)=1]\le\frac{1}{n^d}.
\end{equation*}
\qed
\end{definition}

On the other hand,
the EUF-ACMA security of signature schemes
in the random oracle model is formulated as follows.
This form of the definition corresponds to the condition \eqref{rw-gfsrom}, and is justified based on the consideration in the introduction.

\begin{definition}\label{EUF-ACMA-secure-ro}
A signature scheme $\mathrm{\Pi}=(\mathsf{Gen},\mathsf{Sign},\mathsf{Vrfy})$ relative to $\ell$-functions is
\emph{existentially unforgeable under an adaptive chosen-message attack} (or \emph{EUF-ACMA secure}) \emph{in the random oracle model}
if for all probabilistic polynomial-time adversaries $\mathcal{A}$ and all $d\in\N^+$ there exists $N\in\N^+$ such that, for all $n\ge N$,
\begin{equation*}
  \frac{1}{\#\mathsf{Func}_{\le q(n)}^{\ell(n)}}\sum_{G\in\mathsf{Func}_{\le q(n)}^{\ell(n)}}
  \Prob[\mathsf{Sig\text{-}forge}_{\mathcal{A},\mathrm{\Pi}}(n,G)=1]\le\frac{1}{n^d},
\end{equation*}
where $q(n)$ is the maximum value among
the running time of $\mathsf{Sign}$, the running time of $\mathsf{Vrfy}$, and the running time of $\mathcal{A}$ on the parameter $n$.
\qed
\end{definition}

\section{Conditions for secure instantiation of the random oracle}
\label{CSIRO}

In this section, we present \emph{equivalent} conditions for a specific oracle instantiating the random oracle to keep a signature scheme secure,
using a concept of algorithmic randomness.

In order to apply the method of algorithmic randomness to the random oracle methodology,
we identify an $\ell$-function with an infinite binary sequence in the following manner:
We first choose a particular bijective total recursive function $b\colon\N\to\N\times\N$ with $b(k)=(b_1(k),b_2(k))$
as the standard one for use throughout the rest of this paper.
We assume for convenience that, for every $k,l\in\N$, if $b_1(k)=b_1(l)$ and $k<l$ then $b_2(k)<b_2(l)$.
For example, the inverse function of a function $c\colon\N\times\N\to\N$ with $c(m,n)=(m+n)(m+n+1)/2+n$ can serve as such a function $b$.
Then each $\ell$-function $H\colon\N\times\X\to\X$ is identified with the infinite binary sequence
\begin{equation}\label{identifylf}
  H(b(0))H(b(1))H(b(2))H(b(3))\dotsm\dotsm,
\end{equation}
where the countably infinite finite binary strings $H(b(0)),H(b(1)),H(b(2)),H(b(3)),\dotsc$ are concatenated.
Recall
that we identify $\X$ with $\N$, as explained in Section~\ref{preliminaries},
and therefore each $b_2(k)$ is regarded as a finite binary string in \eqref{identifylf}.
In what follows, we work with this intuition of the identification.

We will give
the main result of this section,
i.e., Theorem~\ref{signature_SR-MLR-like}, in terms of Solovay randomness and Martin-L\"{o}f randomness.
For that purpose we generalize these two randomness notions in Definitions~\ref{gnrl-S} and \ref{gnrl-ML}, respectively.

\begin{definition}[Solovay randomness with respect to an arbitrary set of Solovay tests]\label{gnrl-S}
Let $S$ be a set of Solovay tests.
For any $\alpha\in\XI$, we say that $\alpha$ is \emph{Solovay random with respect to $S$} if for every Solovay test $\mathcal{C}\in S$,
there exists $N\in\N^+$ such that,
for every $n\ge N$, $\alpha\notin\osg{\mathcal{C}_n}$.
\qed
\end{definition}

\begin{definition}\label{def-stest_acma}
Let $\ell(n)$ be a polynomial, and let $\mathrm{\Pi}=(\mathsf{Gen},\mathsf{Sign},\mathsf{Vrfy})$ be a signature scheme relative to $\ell$-functions.

For each probabilistic polynomial-time adversary $\mathcal{A}$ and each $d,n\in\N^+$
we define a subset $\osg{C_{\mathcal{A},d,n}}$ of $\XI$ as the set of all $\ell$-functions $H$ such that
\begin{align*}
  \Prob[\mathsf{Sig\text{-}forge}_{\mathcal{A},\mathrm{\Pi}}(n,H_n)=1]>\frac{1}{n^d}.
\end{align*}
To be precise,
we define a subset $C_{\mathcal{A},d,n}$ of $\X$ as the set of all finite binary strings of the form
$x_0G(\lambda)x_1G(0)x_2G(1)x_3\dotsm x_L G(1^{q})$
such that the following properties (i), (ii), (iii), and (iv) hold for $q$, $L$, $x_0, x_1, x_2, x_3, \dots, x_L$, and $G$:
\begin{enumerate}
\item $q$ is the maximum value among the running time of $\mathsf{Sign}$, the running time of $\mathsf{Vrfy}$,
  and the running time of $\mathcal{A}$ on the parameter $n$.
\item $L+1=\#\{0,1\}^{\le q}$ (i.e., $L=2^{q+1}-2$).
\item %
  For each $j\in\{0,\dots,L\}$,
  $x_j\in\X$ and
  \begin{equation*}
    \abs{x_0G(\lambda)x_1G(0)x_2G(1)x_3\dotsm x_{j}}=\sum_{k=0}^{k_j-1}\ell(b_1(k))
  \end{equation*}
  where $k_j$ is
  a natural number such that
  $b(k_j)=(n,j)$.
\item %
  $G\in\mathsf{Func}_{\le q}^{\ell(n)}$and
  \begin{equation*}
    \Prob[\mathsf{Sig\text{-}forge}_{\mathcal{A},\mathrm{\Pi}}(n,G)=1]>\frac{1}{n^d}.
  \end{equation*}
\end{enumerate}

We then define $\mathsf{S\text{-}TEST}_{\mathrm{\Pi}}^{\mathsf{EUF\text{-}ACMA}}$ as the class of all subsets $\mathcal{C}$ of $\N^+\times\X$
for which there exist a probabilistic polynomial-time adversary $\mathcal{A}$ and $d\ge 2$
such that $\mathcal{C}=\{(n,y)\mid n\in\N^+\text{ \& }y\in C_{\mathcal{A},d,n}\}$.
\qed
\end{definition}

\begin{theorem}\label{main-Solovay-signature}
Let $\ell(n)$ be a polynomial.
Suppose that a signature scheme $\mathrm{\Pi}=(\mathsf{Gen},\mathsf{Sign},\mathsf{Vrfy})$ relative to $\ell$-functions is
EUF-ACMA secure in the random oracle model.
Then $\mathsf{S\text{-}TEST}_{\mathrm{\Pi}}^{\mathsf{EUF\text{-}ACMA}}$ contains only Solovay tests.\qed
\end{theorem}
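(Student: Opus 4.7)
The plan is to verify the two defining properties of a Solovay test. Given any $\mathcal{C}\in\mathsf{S\text{-}TEST}_{\mathrm{\Pi}}^{\mathsf{EUF\text{-}ACMA}}$, with associated probabilistic polynomial-time adversary $\mathcal{A}$ and exponent $d\ge 2$, I must show that $\mathcal{C}$ is r.e.\ and that $\sum_{n=1}^\infty\Lm{\osg{\mathcal{C}_n}}<\infty$. Recursive enumerability is in fact decidability: the structural conditions (i)--(iii) on a candidate string $y$ are purely combinatorial, so once they are verified one recovers $q$, $L$, and the function $G\in\mathsf{Func}_{\le q}^{\ell(n)}$; since $\mathcal{A}$, $\mathsf{Sign}$, and $\mathsf{Vrfy}$ all run in time bounded by $q(n)$, the quantity $\Prob[\mathsf{Sig\text{-}forge}_{\mathcal{A},\mathrm{\Pi}}(n,G)=1]$ is a rational number that can be computed exactly by tabulating all coin-toss sequences of these probabilistic algorithms, making condition (iv) decidable as well.

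The key calculation is that of $\Lm{\osg{C_{\mathcal{A},d,n}}}$. Under the identification (\ref{identifylf}), the string $x_0 G(\lambda) x_1 G(0) x_2 G(1) x_3 \dotsm x_L G(1^q)$ encodes the values of an $\ell$-function $H$ at the positions $b(0),b(1),\dots,b(k_L)$, with the $G$-blocks filling exactly the positions $b(k)$ with $b_1(k)=n$ (these are precisely $(n,\lambda),(n,0),\dots,(n,1^q)$, in order, by the monotonicity assumption on $b$) and the $x_j$-blocks filling the interleaved positions with $b_1(k)\neq n$. Since all strings in $C_{\mathcal{A},d,n}$ share the common length $\sum_{k=0}^{k_L}\ell(b_1(k))$, the set is prefix-free, and applying (i) of Proposition~\ref{Lebesgue-outer} together with a count of the free filler bits yields
\begin{equation*}
  \Lm{\osg{C_{\mathcal{A},d,n}}}
  =
  \frac{\#\{\,G\in\mathsf{Func}_{\le q(n)}^{\ell(n)}\mid\Prob[\mathsf{Sig\text{-}forge}_{\mathcal{A},\mathrm{\Pi}}(n,G)=1]>1/n^d\,\}}{\#\mathsf{Func}_{\le q(n)}^{\ell(n)}}.
\end{equation*}

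With this identity in hand, Markov's inequality and the EUF-ACMA assumption close the argument. Applying Definition~\ref{EUF-ACMA-secure-ro} to this particular $\mathcal{A}$ with exponent $d+2$ yields some $N\in\N^+$ such that, for every $n\ge N$, the uniform average of $\Prob[\mathsf{Sig\text{-}forge}_{\mathcal{A},\mathrm{\Pi}}(n,G)=1]$ over $G\in\mathsf{Func}_{\le q(n)}^{\ell(n)}$ is at most $1/n^{d+2}$. Markov then bounds the fraction of $G$ on which this probability exceeds $1/n^d$ by $n^d/n^{d+2}=1/n^2$, so $\Lm{\osg{\mathcal{C}_n}}\le 1/n^2$ for $n\ge N$; combining with $\Lm{\osg{\mathcal{C}_n}}\le 1$ for the finitely many smaller $n$ yields $\sum_{n=1}^\infty\Lm{\osg{\mathcal{C}_n}}<\infty$.

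The main obstacle is the bookkeeping underlying the measure computation: one must check that, because the encoding (\ref{identifylf}) interleaves the blocks at positions $(n,w)$ with blocks at other positions, every pair consisting of a tuple of filler strings $(x_0,\dots,x_L)$ of the prescribed lengths together with a bad $G$ appears exactly once in $C_{\mathcal{A},d,n}$, and that the $2^{\abs{x_0}+\dots+\abs{x_L}}$ degrees of freedom exactly cancel the contribution to $\#\mathsf{Func}_{\le q(n)}^{\ell(n)}$ of positions with $b_1(k)\neq n$ that is implicit in the denominator. Once this is confirmed, the Markov step is essentially a one-liner, and the hypothesis $d\ge 2$ just ensures that any choice $d'\ge d+2$ produces a summable bound.
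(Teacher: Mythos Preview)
Your argument is correct and follows essentially the same route as the paper's proof: both establish recursive enumerability from the exact computability of $\Prob[\mathsf{Sig\text{-}forge}_{\mathcal{A},\mathrm{\Pi}}(n,G)=1]$, compute (or bound) $\Lm{\osg{\mathcal{C}_n}}$ as the fraction of bad $G$'s in $\mathsf{Func}_{\le q(n)}^{\ell(n)}$, and then apply Markov's inequality against the security definition invoked at a larger exponent. The only cosmetic difference is that the paper applies Definition~\ref{EUF-ACMA-secure-ro} with exponent $2d$ (obtaining $\Lm{\osg{\mathcal{C}_n}}<1/n^{d}$ and then invoking Lemma~\ref{lemma2} to sum), whereas you use exponent $d+2$ to get the directly summable bound $1/n^{2}$; both choices work.
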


In order to prove Theorem~\ref{main-Solovay-signature}, we need the following two lemmas.

\begin{lemma}\label{lemma1}
Let $f_1,\dots,f_N$ be reals.
Suppose that
$\frac{1}{N}\sum_{i=1}^{N} f_i \le \varepsilon$.
Then, for every $\alpha>0$, the number of $i$ for which $\alpha\varepsilon< f_i$ is less than $N/\alpha$.
\end{lemma}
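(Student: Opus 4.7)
The plan is to treat this as a Markov-inequality argument. Let $k$ denote the number of indices $i\in\{1,\dots,N\}$ for which $\alpha\varepsilon<f_i$. The goal is to show $k<N/\alpha$.

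First I would handle the main case $\varepsilon>0$. Under the (implicit, and in the intended application always valid since the $f_i$ will be probabilities) assumption that $f_i\ge 0$, I would estimate
\begin{equation*}
  N\varepsilon\;\ge\;\sum_{i=1}^{N}f_i\;\ge\;\sum_{i:\,\alpha\varepsilon<f_i}f_i\;>\;k\cdot\alpha\varepsilon,
\end{equation*}
where the first inequality is the hypothesis multiplied by $N$, the second drops the non-negative terms with $f_i\le\alpha\varepsilon$, and the strict third inequality uses the definition of $k$ on each of the $k$ retained terms. Dividing by $\alpha\varepsilon>0$ yields $k<N/\alpha$, as required.

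Then I would dispose of the boundary case $\varepsilon=0$: non-negativity of the $f_i$ combined with $\sum f_i\le 0$ forces $f_i=0$ for every $i$, so no $i$ satisfies $\alpha\varepsilon<f_i$, hence $k=0<N/\alpha$ for every $\alpha>0$.

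The only subtle point — and the main thing worth flagging — is the use of non-negativity of the $f_i$ to discard the indices with $f_i\le\alpha\varepsilon$. Without it, a single very negative value could ``subsidize'' arbitrarily many values above $\alpha\varepsilon$ and the lemma would be false; but in every application the $f_i$ are probabilities of the form $\Prob[\mathsf{Sig\text{-}forge}_{\mathcal{A},\mathrm{\Pi}}(n,G)=1]$, so this assumption is automatic. I would make this explicit in one sentence and then the rest of the argument is the two-line Markov estimate above.
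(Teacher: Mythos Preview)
Your argument is correct and is essentially the same Markov-type estimate the paper uses; the paper just phrases it as the contrapositive (assume at least $N/\alpha$ indices satisfy $\alpha\varepsilon<f_i$, sum over them to get $\sum_i f_i>\varepsilon N$), while you do the direct version. You are also right to flag the implicit non-negativity of the $f_i$: the paper's one-line proof silently uses it at exactly the same step, and it is indeed supplied in every application since the $f_i$ are probabilities.
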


\begin{proof}%
We prove the contraposition of Lemma~\ref{lemma1}. 
Assume that the number of $i$ for which $\alpha\varepsilon< f_i$ is at least $N/\alpha$.
Then $\sum_{i=1}^{N} f_i>\alpha\varepsilon N/\alpha=\varepsilon N$ and therefore $\frac{1}{N}\sum_{i=1}^{N} f_i>\varepsilon$.
\end{proof}

\begin{lemma}\label{lemma2}
Let $d\ge 2$. Then $\sum_{k=n}^{\infty}1/k^d\le 2/n$ for every $n\in\N^+$.
\end{lemma}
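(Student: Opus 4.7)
The plan is to reduce the claim to the case $d=2$ and then apply a telescoping bound. Since the inequality only needs to be sharp up to a constant factor, no delicate estimate is required.

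First I would observe that for every $k\in\N^+$ and every $d\ge 2$ we have $1/k^d \le 1/k^2$, since $k\ge 1$. Consequently
\begin{equation*}
  \sum_{k=n}^{\infty}\frac{1}{k^d}\le \sum_{k=n}^{\infty}\frac{1}{k^2},
\end{equation*}
so it suffices to prove $\sum_{k=n}^{\infty}1/k^2 \le 2/n$ for all $n\in\N^+$.

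Next I would use the elementary inequality $1/k^2 \le 2/(k(k+1))$, valid for every $k\in\N^+$ because $k+1\le 2k$ when $k\ge 1$. Rewriting by partial fractions, this becomes
\begin{equation*}
  \frac{1}{k^2}\le 2\left(\frac{1}{k}-\frac{1}{k+1}\right).
\end{equation*}
Summing from $k=n$ to $\infty$ and telescoping then gives $\sum_{k=n}^{\infty}1/k^2\le 2/n$, which combined with the first step yields the lemma.

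There is no genuine obstacle here; the only point worth stating carefully is that the bound $k+1\le 2k$ holds at $k=1$ as well, so the estimate is valid for all $n\in\N^+$ including $n=1$ (where it reduces to $\pi^2/6\le 2$). The entire argument is a few lines of elementary manipulation.
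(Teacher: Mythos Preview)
Your argument is correct. It differs from the paper's proof in method. The paper keeps the exponent $d$ and uses an integral comparison: for $n\ge 2$ it bounds $\sum_{k=n}^{\infty}1/k^d$ by $\int_{n-1}^{\infty}x^{-d}\,dx = 1/((d-1)(n-1)^{d-1})\le 1/(n-1)\le 2/n$, and then treats $n=1$ separately by writing the sum as $1+\sum_{k=2}^{\infty}1/k^d\le 1+1=2$. Your route---reduce to $d=2$ via $1/k^d\le 1/k^2$, then telescope using $1/k^2\le 2(1/k-1/(k+1))$---is more elementary (no integral), handles all $n\in\N^+$ uniformly without a case split, and directly yields the stated bound. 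The paper's intermediate step does give the sharper $d$-dependent estimate $1/((d-1)(n-1)^{d-1})$, but since the lemma only asserts $2/n$ this extra precision is not used anywhere.
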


\begin{proof}%
In the case of $n\ge 2$, we
have
\begin{equation}\label{in_case_of_nle2}
\begin{split}
  \sum_{k=n}^{\infty}\frac{1}{k^d}\le\sum_{k=n}^{\infty}\int_{k-1}^{k}\frac{1}{k^d}=\int_{n-1}^{\infty} \frac{1}{x^d}dx=\frac{1}{(d-1)(n-1)^{d-1}}
  \le \frac{1}{n-1}\le \frac{1}{n-n/2}=\frac{2}{n}.
\end{split}
\end{equation}
On the other hand, in the case of $n=1$, using \eqref{in_case_of_nle2} we
have
\begin{align*}
  \sum_{k=n}^{\infty}\frac{1}{k^d}=1+\sum_{k=2}^{\infty}\frac{1}{k^d}\le 1+\frac{2}{2}=2=\frac{2}{n}.
\end{align*}
Thus $\sum_{k=n}^{\infty}1/k^d\le 2/n$ holds in any case.
\end{proof}

\begin{proof}[Proof of Theorem~\ref{main-Solovay-signature}]
Let $\mathcal{C}\in\mathsf{S\text{-}TEST}_{\mathrm{\Pi}}^{\mathsf{EUF\text{-}ACMA}}$.
Then there exist a probabilistic polynomial-time adversary $\mathcal{A}$ and
$d\ge 2$
such that, for every $n\in\N^+$, $\mathcal{C}_n=C_{\mathcal{A},d,n}$.
Suppose that $\mathrm{\Pi}$ is EUF-ACMA secure in the random oracle model.
Then it follows from Definition~\ref{EUF-ACMA-secure-ro} that there exists $N\in\N^+$ such that, for all $n\ge N$,
\begin{equation}\label{03310330}
  \frac{1}{\#\mathsf{Func}_{\le q(n)}^{\ell(n)}}\sum_{G\in\mathsf{Func}_{\le q(n)}^{\ell(n)}}
  \Prob[\mathsf{Sig\text{-}forge}_{\mathcal{A},\mathrm{\Pi}}(n,G)=1]\le\frac{1}{n^{2d}},
\end{equation}
where $q(n)$ is the maximum value among
the running time of $\mathsf{Sign}$, the running time of $\mathsf{Vrfy}$, and the running time of $\mathcal{A}$ on the parameter $n$.

On the one hand, it follows from Definition~\ref{def-stest_acma} that $\mathcal{C}$ is an r.e.~set,
since the dyadic rational $$\Prob[\mathsf{Sig\text{-}forge}_{\mathcal{A},\mathrm{\Pi}}(n,G)=1]$$ is computable,
given $n$ and $G\in\mathsf{Func}_{\le q(n)}^{\ell(n)}$.

On the other hand, using \eqref{03310330} and Lemma~\ref{lemma1} with $\varepsilon=1/n^{2d}$ and $\alpha=n^d$, we see that, for every $n\ge N$,
\begin{align*}
  \#\left\{\,G\in\mathsf{Func}_{\le q(n)}^{\ell(n)}\biggm|\Prob[\mathsf{Sig\text{-}forge}_{\mathcal{A},\mathrm{\Pi}}(n,G)=1]>\frac{1}{n^d}\,\right\}
  <\frac{\#\mathsf{Func}_{\le q(n)}^{\ell(n)}}{n^d}.
\end{align*}
Since $$\#\mathsf{Func}_{\le q(n)}^{\ell(n)}=2^{\ell(n)\#\{0,1\}^{\le q(n)}},$$
it follows from Definition~\ref{def-stest_acma} and (i) of Proposition~\ref{Lebesgue-outer} that
\begin{align*}
  \sum_{n=N}^{\infty}\Lm{\osg{\mathcal{C}_n}}=\sum_{n=N}^{\infty}\sum_{y\in\mathcal{C}_n}2^{-\abs{y}}
  <\sum_{n=N}^{\infty}\frac{\#\mathsf{Func}_{\le q(n)}^{\ell(n)}}{n^d}2^{-\ell(n)\#\{0,1\}^{\le q(n)}}
  =\sum_{n=N}^{\infty}\frac{1}{n^d}<\infty,
\end{align*}
where the last inequality follows from Lemma~\ref{lemma2}.
Thus $\mathcal{C}$ is a Solovay test.
\end{proof}

\begin{definition}[Martin-L\"{o}f randomness with respect to an arbitrary set of Martin-L\"{o}f tests]\label{gnrl-ML}
Let $S$ be a set of Martin-L\"{o}f tests.
For any $\alpha\in\XI$, we say that $\alpha$ is \emph{Martin-L\"{o}f random with respect to $S$} if
for every Martin-L\"{o}f test $\mathcal{C}\in S$, there exists $n\in\N^+$
such that $\alpha\notin\osg{\mathcal{C}_n}$.
\qed
\end{definition}

\begin{definition}
Let $\ell(n)$ be a polynomial, and let $\mathrm{\Pi}=(\mathsf{Gen},\mathsf{Sign},\mathsf{Vrfy})$ be a signature scheme relative to $\ell$-functions.
We define $\mathsf{ML\text{-}TEST}_{\mathrm{\Pi}}^{\mathsf{EUF\text{-}ACMA}}$ as the class of all subsets $\mathcal{C}$ of $\N^+\times\X$
for which there exists $\mathcal{D}\in\mathsf{S\text{-}TEST}_{\mathrm{\Pi}}^{\mathsf{EUF\text{-}ACMA}}$ such that, for every $n\in\N^+$,
$\mathcal{C}_n=\bigcup_{k=n}^\infty\mathcal{D}_k$.
\qed
\end{definition}

\begin{theorem}\label{main-ML-signature}
Let $\ell(n)$ be a polynomial.
Suppose that a signature scheme $\mathrm{\Pi}=(\mathsf{Gen},\mathsf{Sign},\mathsf{Vrfy})$ relative to $\ell$-functions is
EUF-ACMA secure in the random oracle model.
Then $\mathsf{ML\text{-}TEST}_{\mathrm{\Pi}}^{\mathsf{EUF\text{-}ACMA}}$ contains only Martin-L\"of tests.%
\footnote{In fact,  $\mathsf{ML\text{-}TEST}_{\mathrm{\Pi}}^{\mathsf{EUF\text{-}ACMA}}$ contains only Schnorr tests,
where a Schnorr test is defined as a Martin-L\"of test $\mathcal{C}\subset\N^+\times\X$ such that $\Lm{\osg{\mathcal{C}_n}}$ is computable uniformly in $n$.
For the detail of Schnorr tests, see e.g.~Section 3.5 of Nies~\cite{N09}.}
\end{theorem}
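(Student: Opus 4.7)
Fix an arbitrary $\mathcal{C}\in\mathsf{ML\text{-}TEST}_{\mathrm{\Pi}}^{\mathsf{EUF\text{-}ACMA}}$. By definition of this class, $\mathcal{C}$ arises from some $\mathcal{D}\in\mathsf{S\text{-}TEST}_{\mathrm{\Pi}}^{\mathsf{EUF\text{-}ACMA}}$---itself determined by a probabilistic polynomial-time adversary $\mathcal{A}$ and some $d\ge 2$---via $\mathcal{C}_n=\bigcup_{k=n}^{\infty}\mathcal{D}_k$ for every $n\in\N^+$. My plan is to verify the two clauses of Definition~\ref{ML-randomness} for $\mathcal{C}$ directly. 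The r.e.\ clause is immediate: by Theorem~\ref{main-Solovay-signature}, $\mathcal{D}$ is a Solovay test and hence an r.e.\ set, and since $(n,y)\in\mathcal{C}$ iff there is some $k\ge n$ with $(k,y)\in\mathcal{D}$, $\mathcal{C}$ is r.e.\ as a projection of an r.e.\ set.

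The substance of the argument is producing a total recursive $f\colon\N^+\to\Q\cap(0,\infty)$ with $f(n)\to 0$ and $\Lm{\osg{\mathcal{C}_n}}\le f(n)$ for every $n$. Rather than quoting only the conclusion of Theorem~\ref{main-Solovay-signature}, I would reopen its proof and extract the stronger term-wise estimate it already establishes: if $N$ is the natural number supplied by EUF-ACMA security of $\mathrm{\Pi}$ at the exponent $2d$, then $\Lm{\osg{\mathcal{D}_k}}<1/k^d$ for every $k\ge N$. Combining this with countable subadditivity (Proposition~\ref{Lebesgue-outer}(iii)) and Lemma~\ref{lemma2} yields
\begin{equation*}
\Lm{\osg{\mathcal{C}_n}}\le\sum_{k=n}^{\infty}\Lm{\osg{\mathcal{D}_k}}<\sum_{k=n}^{\infty}\frac{1}{k^d}\le\frac{2}{n}\qquad\text{for every }n\ge N,
\end{equation*}
while for $n<N$ only the trivial bound $\Lm{\osg{\mathcal{C}_n}}\le\Lm{\XI}=1$ is needed.

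I would then take $f(n)=1$ for $1\le n<N$ and $f(n)=2/n$ for $n\ge N$. This $f$ maps $\N^+$ into $\Q\cap(0,\infty)$, tends to $0$, and pointwise dominates $\Lm{\osg{\mathcal{C}_n}}$ by virtue of the two bounds above. The main obstacle---and really the only nontrivial point in the argument---is justifying that $f$ is computable, since $N$ is not uniformly recursive in $(\mathcal{A},d)$: the security definition~\eqref{rw-gfsrom} is a $\Pi_2$ statement and supplies $N$ only non-effectively. The escape is that for the particular $\mathcal{C}$ under consideration $N$ is some specific natural number, so a Turing machine with that single number hard-coded into its finite control computes $f$. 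This is all Definition~\ref{ML-randomness} asks for, since it demands only the existence of a total recursive $f$, not a construction uniform in the data defining $\mathcal{C}$. Granting this, $\mathcal{C}$ satisfies both clauses of Definition~\ref{ML-randomness} and is a Martin-L\"of test, completing the proof as $\mathcal{C}$ was arbitrary.
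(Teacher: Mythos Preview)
Your proposal is correct and follows essentially the same route as the paper: r.e.\ via Theorem~\ref{main-Solovay-signature}, then reopen that proof to extract the term-wise bound $\Lm{\osg{\mathcal{D}_k}}<1/k^d$ for $k\ge N$, combine with subadditivity and Lemma~\ref{lemma2} to get $\Lm{\osg{\mathcal{C}_n}}\le 2/n$ for $n\ge N$. You are in fact more explicit than the paper about constructing the total recursive $f$ and about why hard-coding the non-uniform constant $N$ suffices, a point the paper leaves implicit.
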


\begin{proof}
Let $\mathcal{C}\in\mathsf{ML\text{-}TEST}_{\mathrm{\Pi}}^{\mathsf{EUF\text{-}ACMA}}$.
Then there exists $\mathcal{D}\in\mathsf{S\text{-}TEST}_{\mathrm{\Pi}}^{\mathsf{EUF\text{-}ACMA}}$ such that, for every $n\in\N^+$,
$\mathcal{C}_n=\bigcup_{k=n}^\infty\mathcal{D}_k$.
Suppose that $\mathrm{\Pi}$ is EUF-ACMA secure in the random oracle model.
It follows from Theorem~\ref{main-Solovay-signature} that $\mathcal{D}$ is a Solovay test.
It is then easy to see that $\mathcal{C}$ is an r.e.~set, since $\mathcal{D}$ is an r.e.~set.
On the other hand,
since $\mathcal{D}\in\mathsf{S\text{-}TEST}_{\mathrm{\Pi}}^{\mathsf{EUF\text{-}ACMA}}$,
there exist a probabilistic polynomial-time adversary $\mathcal{A}$ and $d\ge 2$ such that, for every $n\in\N^+$, $\mathcal{D}_n=C_{\mathcal{A},d,n}$.
Then, in the same manner as the proof of Theorem~\ref{main-Solovay-signature} we can show that there exists $N\in\N^+$ such that,
for every $n\ge N$,
\begin{align*}
  \#\left\{\,G\in\mathsf{Func}_{\le q(n)}^{\ell(n)}\biggm|\Prob[\mathsf{Sig\text{-}forge}_{\mathcal{A},\mathrm{\Pi}}(n,G)=1]>\frac{1}{n^d}\,\right\}
  <\frac{\#\mathsf{Func}_{\le q(n)}^{\ell(n)}}{n^d},
\end{align*}
where $q(n)$ is the maximum value among
the running time of $\mathsf{Sign}$, the running time of $\mathsf{Vrfy}$, and the running time of $\mathcal{A}$ on the parameter $n$.
It follows from (i) and (iii) of Proposition~\ref{Lebesgue-outer} and Definition~\ref{def-stest_acma} that, for each $n\ge N$,
\begin{equation*}%
\begin{split}
  \Lm{\osg{\mathcal{C}_n}}
  \le\sum_{k=n}^{\infty}\Lm{\osg{\mathcal{D}_k}}
  =\sum_{k=n}^{\infty}\sum_{y\in\mathcal{D}_k}2^{-\abs{y}}
  <\sum_{k=n}^{\infty}\frac{\#\mathsf{Func}_{\le q(k)}^{\ell(k)}}{k^d}2^{-\ell(k)\#\{0,1\}^{\le q(k)}}
  =\sum_{k=n}^\infty\frac{1}{k^d}\le \frac{2}{n},
\end{split}
\end{equation*}
where the last inequality follows from Lemma~\ref{lemma2}.
Thus $\mathcal{C}$ is a Martin-L\"of test.
\end{proof}

Obviously, the following proposition holds.

\begin{proposition}\label{plain->S}
Let $\alpha\in\XI$.
\begin{enumerate}
  \item For every set $S$ of Martin-L\"{o}f tests, if $\alpha$ is Martin-L\"{o}f random then $\alpha$ is Martin-L\"{o}f random with respect to $S$.
  \item For every set $S$ of Solovay tests, if $\alpha$ is Solovay random then $\alpha$ is Solovay random with respect to $S$.\qed
\end{enumerate}
\end{proposition}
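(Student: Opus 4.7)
The plan is to observe that both parts of Proposition~\ref{plain->S} follow immediately by restricting the universal quantification over tests that appears in the definition of the (unrestricted) randomness notion to the tests lying in the prescribed set $S$. In other words, the unrestricted notion is literally the special case $S = \{\text{all Martin-L\"of tests}\}$ (resp.\ $S = \{\text{all Solovay tests}\}$) of the relativized notion, so strictly more is required of $\alpha$ in the unrestricted case.

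For part (i), I would fix an arbitrary set $S$ of Martin-L\"of tests and an arbitrary $\mathcal{C} \in S$. By hypothesis on $S$, the set $\mathcal{C}$ is a Martin-L\"of test in the sense of Definition~\ref{ML-randomness}. Since $\alpha$ is Martin-L\"of random, the defining clause applied to this particular $\mathcal{C}$ yields some $n \in \N^+$ with $\alpha \notin \osg{\mathcal{C}_n}$. Because $\mathcal{C} \in S$ was arbitrary, Definition~\ref{gnrl-ML} is satisfied, so $\alpha$ is Martin-L\"of random with respect to $S$.

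Part (ii) proceeds by exactly the same reasoning, using the definition of Solovay randomness and Definition~\ref{gnrl-S} in place of their Martin-L\"of analogues: any $\mathcal{C} \in S$ is in particular a Solovay test, so Solovay randomness of $\alpha$ supplies the required $N \in \N^+$ with $\alpha \notin \osg{\mathcal{C}_n}$ for all $n \ge N$.

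There is no genuine obstacle here; the proposition is a one-line consequence of the fact that a statement quantified over a larger class implies the same statement quantified over a subclass. The only thing worth writing is the explicit identification of a member of $S$ as a bona fide test in the sense of the original (unrestricted) definition, which is automatic from how $S$ was assumed to be a set of such tests.
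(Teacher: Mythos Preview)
Your proposal is correct and matches the paper's treatment: the paper states the proposition as obvious and gives no proof, and your argument spells out exactly the trivial restriction-of-quantifiers reasoning that makes it so.
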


The following theorem gives equivalent conditions for a specific oracle instantiating the random oracle
to keep the EUF-ACMA security of a signature scheme originally proved in the random oracle model, in terms of algorithmic randomness.

\begin{theorem}[Main result I]\label{signature_SR-MLR-like}
Let $\ell(n)$ be a polynomial.
Suppose that
a signature scheme $\mathrm{\Pi}=(\mathsf{Gen},\mathsf{Sign},\mathsf{Vrfy})$ relative to $\ell$-functions is
EUF-ACMA secure in the random oracle model.
Let $H$ be an $\ell$-function. Then the following conditions are equivalent:
\begin{enumerate}
\item $\mathrm{\Pi}$ is EUF-ACMA secure relative to $H$.
\item $H$ is Solovay random with respect to $\mathsf{S\text{-}TEST}_{\mathrm{\Pi}}^{\mathsf{EUF\text{-}ACMA}}$.
\item $H$ is Martin-L\"of random with respect to $\mathsf{ML\text{-}TEST}_{\mathrm{\Pi}}^{\mathsf{EUF\text{-}ACMA}}$.
\end{enumerate}
\end{theorem}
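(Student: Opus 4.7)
The plan is to prove the theorem by establishing the cycle (i) $\Leftrightarrow$ (ii) $\Leftrightarrow$ (iii) through direct translation of definitions, with Theorem~\ref{main-Solovay-signature} supplying the key structural fact that the sets in $\mathsf{S\text{-}TEST}_{\mathrm{\Pi}}^{\mathsf{EUF\text{-}ACMA}}$ really are Solovay tests under the hypothesis.

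For (i) $\Leftrightarrow$ (ii), I will work with the contrapositives. Unraveling Definition~\ref{EUF-ACMA-secure-rel}, the negation of (i) says there exist a probabilistic polynomial-time adversary $\mathcal{A}$ and some $d\in\N^+$ such that the inequality $\Prob[\mathsf{Sig\text{-}forge}_{\mathcal{A},\mathrm{\Pi}}(n,H_n)=1]>1/n^d$ holds for infinitely many $n$. Since $1/n^d\ge 1/n^{d'}$ whenever $d'\ge d$, I may assume without loss of generality $d\ge 2$. By the very construction of $C_{\mathcal{A},d,n}$ in Definition~\ref{def-stest_acma}, this inequality is precisely the condition that $H\in\osg{C_{\mathcal{A},d,n}}$. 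Hence $H\in\osg{\mathcal{C}_n}$ for infinitely many $n$, where $\mathcal{C}_n=C_{\mathcal{A},d,n}$; since $d\ge 2$, this $\mathcal{C}$ lies in $\mathsf{S\text{-}TEST}_{\mathrm{\Pi}}^{\mathsf{EUF\text{-}ACMA}}$, and it is a Solovay test by Theorem~\ref{main-Solovay-signature}, so (ii) fails. Conversely, any witness $\mathcal{C}\in\mathsf{S\text{-}TEST}_{\mathrm{\Pi}}^{\mathsf{EUF\text{-}ACMA}}$ to the failure of (ii) comes, by definition, with a probabilistic polynomial-time adversary $\mathcal{A}$ and an exponent $d\ge 2$ such that $\mathcal{C}_n=C_{\mathcal{A},d,n}$; the condition $H\in\osg{\mathcal{C}_n}$ for infinitely many $n$ then directly witnesses the failure of (i) with this $\mathcal{A}$ and $d$.

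For (ii) $\Leftrightarrow$ (iii), I will exploit the structural bijection built into the definition of $\mathsf{ML\text{-}TEST}_{\mathrm{\Pi}}^{\mathsf{EUF\text{-}ACMA}}$: each $\mathcal{C}\in\mathsf{ML\text{-}TEST}_{\mathrm{\Pi}}^{\mathsf{EUF\text{-}ACMA}}$ arises as $\mathcal{C}_n=\bigcup_{k=n}^\infty\mathcal{D}_k$ for a unique $\mathcal{D}\in\mathsf{S\text{-}TEST}_{\mathrm{\Pi}}^{\mathsf{EUF\text{-}ACMA}}$, and vice versa. The small lemma I need is the identity $\osg{\bigcup_{k}S_k}=\bigcup_k \osg{S_k}$, immediate from the definition of $\osg{\cdot}$. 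With this, $H\in\osg{\mathcal{C}_n}$ is equivalent to the existence of some $k\ge n$ with $H\in\osg{\mathcal{D}_k}$, so ``for every $n$, $H\in\osg{\mathcal{C}_n}$'' translates to ``$H\in\osg{\mathcal{D}_k}$ for infinitely many $k$''. The negation of (iii) is then literally the negation of (ii) under the correspondence $\mathcal{D}\leftrightarrow\mathcal{C}$.

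The arguments are largely definitional bookkeeping, so the only real subtleties are two small gaps. The first is the shift from an arbitrary $d\in\N^+$ (in the definition of security) to $d\ge 2$ (in the definition of $\mathsf{S\text{-}TEST}_{\mathrm{\Pi}}^{\mathsf{EUF\text{-}ACMA}}$), which I handle with the monotonicity $1/n^d\ge 1/n^{d'}$. The second, which I expect to be the main point requiring care, is making sure the combinatorial description of $C_{\mathcal{A},d,n}\subset\X$ really captures the set of $\ell$-functions $H$ with $\Prob[\mathsf{Sig\text{-}forge}_{\mathcal{A},\mathrm{\Pi}}(n,H_n)=1]>1/n^d$; this requires checking that the coordinates $b(k)$ with $b_1(k)=n$ appear in their natural order in the identification~\eqref{identifylf}, which is exactly what the standing assumption on $b$ guarantees, so that the only bits of $H$ relevant to the experiment on parameter $n$ are precisely the blocks $G(\lambda),G(0),\dots,G(1^q)$ pinned down in condition (iii) of Definition~\ref{def-stest_acma}. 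Once this identification is in hand, the whole proof reduces to a straightforward comparison of quantifiers.
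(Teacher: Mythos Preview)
Your proposal is correct and follows essentially the same route as the paper: the negation of (i) is matched to the negation of (ii) via the definition of $C_{\mathcal{A},d,n}$ together with Theorem~\ref{main-Solovay-signature}, and (ii) $\Leftrightarrow$ (iii) is obtained from the correspondence $\mathcal{C}_n=\bigcup_{k\ge n}\mathcal{D}_k$ and the elementary equivalence between ``$H\in\osg{\mathcal{C}_n}$ for all $n$'' and ``$H\in\osg{\mathcal{D}_k}$ for infinitely many $k$''. You are in fact more careful than the paper about justifying the passage from $d\in\N^+$ to $d\ge 2$ and about why membership in $\osg{C_{\mathcal{A},d,n}}$ corresponds exactly to the probability inequality; the one small item the paper cites that you leave implicit is Theorem~\ref{main-ML-signature}, which is what guarantees that every element of $\mathsf{ML\text{-}TEST}_{\mathrm{\Pi}}^{\mathsf{EUF\text{-}ACMA}}$ is genuinely a Martin-L\"of test, so that the quantifier in Definition~\ref{gnrl-ML} ranges over all of $\mathsf{ML\text{-}TEST}_{\mathrm{\Pi}}^{\mathsf{EUF\text{-}ACMA}}$.
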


\begin{proof}
First we show the equivalence between the conditions (i) and (ii).
The negation of the condition (i) is that there exist a probabilistic polynomial-time adversary $\mathcal{A}$ and $d\ge 2$ such that, for infinitely many $n\in\N^+$,
\begin{equation*}
  \Prob[\mathsf{Sig\text{-}forge}_{\mathcal{A},\mathrm{\Pi}}(n,H_n)=1]>\frac{1}{n^d}.
\end{equation*}
However, from Definition~\ref{def-stest_acma}, it is easy to see that this is equivalent to the condition that
there exists $\mathcal{C}\in\mathsf{S\text{-}TEST}_{\mathrm{\Pi}}^{\mathsf{EUF\text{-}ACMA}}$ such that, for infinitely many $n\in\N^+$,
$H\in\osg{\mathcal{C}_n}$.
This is further equivalent to the condition that $H$ is not Solovay random with respect to $\mathsf{S\text{-}TEST}_{\mathrm{\Pi}}^{\mathsf{EUF\text{-}ACMA}}$,
since $\mathsf{S\text{-}TEST}_{\mathrm{\Pi}}^{\mathsf{EUF\text{-}ACMA}}$ contains only Solovay tests by Theorem~\ref{main-Solovay-signature}.
Thus the conditions (i) and (ii) are equivalent to each other.

Next we show the equivalence between the conditions (ii) and (iii).
Suppose that $\mathcal{C}\in\mathsf{ML\text{-}TEST}_{\mathrm{\Pi}}^{\mathsf{EUF\text{-}ACMA}}$ and
$\mathcal{D}\in\mathsf{S\text{-}TEST}_{\mathrm{\Pi}}^{\mathsf{EUF\text{-}ACMA}}$
satisfy that $\mathcal{C}_n=\bigcup_{k=n}^\infty\mathcal{D}_k$ for all $n\in\N^+$.
Then the condition that
$H\in\osg{\mathcal{C}_n}$ for all $n\in\N^+$
is equivalent to the condition that
$H\in\osg{\mathcal{D}_n}$ for infinitely many $n\in\N^+$.
Note here that $\mathsf{ML\text{-}TEST}_{\mathrm{\Pi}}^{\mathsf{EUF\text{-}ACMA}}$ contains only Martin-L\"of tests by Theorem~\ref{main-ML-signature},
and $\mathsf{S\text{-}TEST}_{\mathrm{\Pi}}^{\mathsf{EUF\text{-}ACMA}}$ contains only Solovay tests by Theorem~\ref{main-Solovay-signature}.
Thus, $H$ is not Martin-L\"of random with respect to $\mathsf{ML\text{-}TEST}_{\mathrm{\Pi}}^{\mathsf{EUF\text{-}ACMA}}$ if and only if
$H$ is not Solovay random with respect to $\mathsf{S\text{-}TEST}_{\mathrm{\Pi}}^{\mathsf{EUF\text{-}ACMA}}$.
This completes the proof.
\end{proof}

As noted in the previous section,
Theorem~\ref{FDH-computable} holds for other security notions for signature schemes, such as the EUF-GCMA security, in place of the EUF-ACMA security.
Thus,
given arbitrary security notion UF and signature scheme $\mathrm{\Pi}$ which is UF secure in the random oracle model,
one can define a variant of Martin-L\"of randomness, i.e., Martin-L\"of randomness with respect to $\mathsf{ML\text{-}TEST}_{\mathrm{\Pi}}^{\mathsf{UF}}$,
which gives a equivalent condition for a specific oracle instantiating the random oracle in $\mathrm{\Pi}$ to keep the UF security.
In this manner, given a security notion and a signature scheme satisfying this security notion in the random oracle model,
one can define an algorithmic randomness notion which is specified by an appropriate type of effective null sets based on these security notion and scheme,
and which corresponds exactly to the secure instantiation of the random oracle with respect to this security notion.

In the next section we show in Theorem~\ref{FDH-computable} that
a signature scheme $\mathrm{\Pi}$ can be EUF-ACMA secure relative to some \emph{computable} $\ell$-function $H$,
in the case where $\mathrm{\Pi}$ satisfies a stronger security notion, called the \emph{effective} EUF-ACMA security, in the random oracle model.
Hence, in such a case,
it follows from Theorem~\ref{signature_SR-MLR-like} that
there exists a \emph{computable} infinite binary sequence $H$
which is Martin-L\"of random with respect to $\mathsf{ML\text{-}TEST}_{\mathrm{\Pi}}^{\mathsf{EUF\text{-}ACMA}}$.

The following theorem shows that
the EUF-ACMA security proved in the random oracle model is firmly maintained after instantiating the random oracle by a random real.

\begin{theorem}\label{main2}
Let $\ell(n)$ be a polynomial.
Suppose that
a signature scheme $\mathrm{\Pi}=(\mathsf{Gen},\mathsf{Sign},\mathsf{Vrfy})$ relative to $\ell$-functions is EUF-ACMA secure in the random oracle model.
For every $\ell$-function $H$, if $H$ is Martin-L\"of random then
$\mathrm{\Pi}$ is EUF-ACMA secure relative to $H$.
\end{theorem}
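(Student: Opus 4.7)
The plan is to combine Theorem~\ref{signature_SR-MLR-like} with Proposition~\ref{plain->S} and Theorem~\ref{main-ML-signature} in a short chain of implications; no new computations are needed.

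First I would invoke Theorem~\ref{main-ML-signature}: since $\mathrm{\Pi}$ is EUF-ACMA secure in the random oracle model, every element of $\mathsf{ML\text{-}TEST}_{\mathrm{\Pi}}^{\mathsf{EUF\text{-}ACMA}}$ is a Martin-L\"of test. In particular, $\mathsf{ML\text{-}TEST}_{\mathrm{\Pi}}^{\mathsf{EUF\text{-}ACMA}}$ is a set of Martin-L\"of tests, so the hypothesis of Proposition~\ref{plain->S}(i) is applicable to it.

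Next, assume $H$ is Martin-L\"of random (in the ordinary, unrelativized sense). Then, applying Proposition~\ref{plain->S}(i) with $S=\mathsf{ML\text{-}TEST}_{\mathrm{\Pi}}^{\mathsf{EUF\text{-}ACMA}}$, I conclude that $H$ is Martin-L\"of random with respect to $\mathsf{ML\text{-}TEST}_{\mathrm{\Pi}}^{\mathsf{EUF\text{-}ACMA}}$, i.e., condition (iii) of Theorem~\ref{signature_SR-MLR-like} holds for $H$.

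Finally, since the hypothesis of Theorem~\ref{signature_SR-MLR-like} (EUF-ACMA security of $\mathrm{\Pi}$ in the random oracle model) is exactly our standing assumption, the equivalence of (i) and (iii) there immediately gives that $\mathrm{\Pi}$ is EUF-ACMA secure relative to $H$, which is what we want. There is no real obstacle here; the theorem is essentially a specialization of Theorem~\ref{signature_SR-MLR-like}, obtained by noting that ordinary Martin-L\"of randomness of $H$ is, thanks to Theorem~\ref{main-ML-signature}, a priori stronger than the relativized notion appearing in condition (iii). The only thing to check is that the identification between $\ell$-functions and infinite binary sequences made in Section~\ref{CSIRO} is used consistently, so that ``$H$ is Martin-L\"of random'' in the statement refers to the infinite binary sequence in \eqref{identifylf}, matching the domain on which Proposition~\ref{plain->S} and the tests in $\mathsf{ML\text{-}TEST}_{\mathrm{\Pi}}^{\mathsf{EUF\text{-}ACMA}}$ are defined.
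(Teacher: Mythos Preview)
Your argument is correct and essentially identical to the paper's own proof, which simply says the result follows immediately from (i) of Proposition~\ref{plain->S} and Theorem~\ref{signature_SR-MLR-like}. Your explicit invocation of Theorem~\ref{main-ML-signature} to verify that $\mathsf{ML\text{-}TEST}_{\mathrm{\Pi}}^{\mathsf{EUF\text{-}ACMA}}$ consists of Martin-L\"of tests is a minor extra care that the paper leaves implicit (this fact is already used inside the proof of Theorem~\ref{signature_SR-MLR-like}).
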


\begin{proof}
The result follows immediately from (i) of Proposition~\ref{plain->S} and Theorem~\ref{signature_SR-MLR-like}.
\end{proof}

The following theorem shows that
a specific oracle instantiating the random oracle almost surely keeps the EUF-ACMA security of a signature scheme originally proved in the random oracle model.

\begin{theorem}\label{Lebesgue2}
Let $\ell(n)$ be a polynomial.
Suppose that
a signature scheme $\mathrm{\Pi}=(\mathsf{Gen},\mathsf{Sign},\mathsf{Vrfy})$ relative to $\ell$-functions is EUF-ACMA secure in the random oracle model.
Then $\mathcal{L}(\mathsf{EUF}_{\mathrm{\Pi}}^{\mathsf{acma}})=1$,
where $\mathsf{EUF}_{\mathrm{\Pi}}^{\mathsf{acma}}$ is the set of all $\ell$-functions $H$ such that $\mathrm{\Pi}$ is EUF-ACMA secure relative to $H$.
\end{theorem}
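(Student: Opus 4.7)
The plan is to derive this statement as an immediate corollary of Theorem~\ref{main2}, Theorem~\ref{MLR-measure1}, and the monotonicity of Lebesgue outer measure (Proposition~\ref{Lebesgue-outer}). The guiding idea is that Theorem~\ref{main2} has already shown that every Martin-L\"of random $\ell$-function yields a secure instantiation, so $\mathsf{EUF}_{\mathrm{\Pi}}^{\mathsf{acma}}$ is a superset of the measure-one set $\mathsf{MLR}$, from which the conclusion follows at once.

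First I would invoke the identification of $\ell$-functions with elements of $\XI$ fixed in Section~\ref{CSIRO} via the bijection $b$, so that the expression $\mathcal{L}(\mathsf{EUF}_{\mathrm{\Pi}}^{\mathsf{acma}})$ is well-defined. Under this identification, Theorem~\ref{main2} (using the hypothesis that $\mathrm{\Pi}$ is EUF-ACMA secure in the random oracle model) becomes the set-theoretic inclusion
\begin{equation*}
  \mathsf{MLR} \;\subseteq\; \mathsf{EUF}_{\mathrm{\Pi}}^{\mathsf{acma}} \;\subseteq\; \XI.
\end{equation*}

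Next I would apply part (ii) of Proposition~\ref{Lebesgue-outer} to this chain of inclusions, obtaining
\begin{equation*}
  \mathcal{L}(\mathsf{MLR}) \;\le\; \mathcal{L}(\mathsf{EUF}_{\mathrm{\Pi}}^{\mathsf{acma}}) \;\le\; \mathcal{L}(\XI).
\end{equation*}
By Theorem~\ref{MLR-measure1} the leftmost quantity is $1$, and by part (i) of Proposition~\ref{Lebesgue-outer} the rightmost quantity is also $1$. Squeezing between these bounds forces $\mathcal{L}(\mathsf{EUF}_{\mathrm{\Pi}}^{\mathsf{acma}}) = 1$, which is exactly the claim.

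There is essentially no genuine obstacle to overcome here, since the substantive work was carried out earlier in the chain of lemmas culminating in Theorem~\ref{signature_SR-MLR-like} and Theorem~\ref{main2}. The only subtlety worth flagging is the bookkeeping around viewing $\mathsf{EUF}_{\mathrm{\Pi}}^{\mathsf{acma}}$ simultaneously as a family of $\ell$-functions and as a subset of $\XI$, but this is taken care of by the identification already made in Section~\ref{CSIRO}. Thus the proof should be a three-line corollary.
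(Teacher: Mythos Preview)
Your proposal is correct and matches the paper's own proof essentially verbatim: the paper likewise derives the result immediately from Theorem~\ref{MLR-measure1}, Theorem~\ref{main2}, and parts (i) and (ii) of Proposition~\ref{Lebesgue-outer}.
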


\begin{proof}
The result
follows immediately from
Theorem~\ref{MLR-measure1}, Theorem~\ref{main2}, and (i) and (ii) of Proposition~\ref{Lebesgue-outer}.
\end{proof}

Impagliazzo and Rudich~\cite{IR88} showed a similar result to Theorem~\ref{Lebesgue2}
for a one-way permutation and derived the negative result about the existence of a secure secret key agreement protocol.
\section{Secure instantiation of the random oracle by computable function}
\label{computable-function}

Let $H$ be an $\ell$-function.
We say that $H$ is \emph{computable} if there exists a deterministic
Turing machine
which on every input $(n,x)$ halts and outputs $H(n,x)$.
On the other hand, we say that $H$ is \emph{polynomial-time computable}
if there exists a deterministic
Turing machine
which on every input $(1^n,x)$ operates and outputs $H(n,x)$ within time polynomial in $n$ and $\abs{x}$.

Conjecture~\ref{FDH-conjecture1} below means that,
in the case where a signature scheme $\mathrm{\Pi}$ satisfies a certain condition $\mathcal{C}$,
the EUF-ACMA security of $\mathrm{\Pi}$ originally proved in the random oracle model
can be firmly maintained in the standard model after instantiating the random oracle by some polynomial-time computable $\ell$-function
(or some polynomial-time computable family of $\ell$-functions).

\begin{conjecture}\label{FDH-conjecture1}
Let $\ell(n)$ be a polynomial.
Suppose that
a signature scheme $\mathrm{\Pi}=(\mathsf{Gen},\mathsf{Sign},\mathsf{Vrfy})$ relative to $\ell$-functions is EUF-ACMA secure in the random oracle model.
If $\mathrm{\Pi}$ satisfies $\mathcal{C}$,
then there exists a polynomial-time computable $\ell$-function (or a polynomial-time computable family of $\ell$-functions) relative to which
$\mathrm{\Pi}$ is EUF-ACMA secure.
\qed
\end{conjecture}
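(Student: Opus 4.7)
The plan is to first specify the condition $\mathcal{C}$ and then construct a polynomial-time computable $\ell$-function by a derandomized diagonalization against the Solovay tests comprising $\mathsf{S\text{-}TEST}_{\mathrm{\Pi}}^{\mathsf{EUF\text{-}ACMA}}$. For $\mathcal{C}$ I would combine two ingredients: a polynomial-time analogue of effective EUF-ACMA security, asserting the existence of a polynomial-time algorithm that, given the code of a probabilistic polynomial-time adversary $\mathcal{A}$ and $d\in\N^+$, produces a threshold $N(\mathcal{A},d)$ beyond which the EUF-ACMA bound $1/n^d$ holds on average over the uniform random oracle; and a structural restriction on $\mathrm{\Pi}$ ensuring that every run of $\mathsf{Sig\text{-}forge}_{\mathcal{A},\mathrm{\Pi}}(n,G)$ queries $G$ only at points drawn from a polynomial-size set efficiently enumerable from the transcript. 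This structural ingredient is what makes polynomial-time derandomization feasible, since it reduces oracle distinguishers to Boolean circuits.

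The construction proceeds by staged diagonalization. Fix a G\"odel enumeration $(\mathcal{A}_i,d_i)_{i\ge 1}$ of probabilistic polynomial-time adversary-parameter pairs. For each $n\in\N^+$, form the \emph{active pool} $A_n=\{\,(\mathcal{A}_i,d_i):i\le\lfloor\log n\rfloor,\ d_i\le\lfloor\log n\rfloor,\ N(\mathcal{A}_i,d_i)\le n\,\}$. The estimate in the proof of Theorem~\ref{main-Solovay-signature}, obtained from Lemma~\ref{lemma1} with $\varepsilon=1/n^{2d_i}$ and $\alpha=n^{d_i}$, bounds the fraction of $G\in\mathsf{Func}_{\le q(n)}^{\ell(n)}$ making a given pair $(\mathcal{A}_i,d_i)\in A_n$ fail by $1/n^{d_i}\le 1/n$, so a union bound yields a total failure fraction at most $(\log n)/n$, which is below $1/2$ for large $n$. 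Using the structural condition in $\mathcal{C}$, the event ``$G$ fails some pair in $A_n$'' is decided, via Chernoff-based sampling over polynomially many simulated runs, by a Boolean circuit of polynomial size in $n$ acting on the polynomially many oracle responses consulted during those simulations. Hence, under a complexity-theoretic derandomization hypothesis such as the Impagliazzo--Wigderson assumption that $\mathsf{E}$ requires $2^{\Omega(n)}$-size nondeterministic circuits, there exists a Nisan--Wigderson-type pseudorandom generator $\mathcal{G}_n\colon\{0,1\}^{c\log n}\to\mathsf{Func}_{\le q(n)}^{\ell(n)}$ which is polynomial-time evaluable and fools these circuits with advantage at most $1/4$. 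Exhaustively searching the $n^c$ seeds locates a safe seed $s_n$ in polynomial time, and we define $H(n,x):=\mathcal{G}_n(s_n)(x)$. Every adversary-parameter pair eventually enters $A_n$ for all large $n$, so the resulting $H$ is polynomial-time computable and $\mathrm{\Pi}$ is EUF-ACMA secure relative to $H$.

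The main obstacle is the joint requirement in $\mathcal{C}$ of structural regularity and effective security together with the complexity-theoretic derandomization hypothesis. Without the structural restriction, the distinguisher deciding whether a given $G$ fails the active pool is an \emph{oracle} circuit and must be fooled by a pseudorandom oracle generator polynomial-time evaluable from a polynomial-size seed; unconditional existence of such a primitive is essentially equivalent to the conjecture itself, and standard cryptographic pseudorandom function families do not suffice because verification in a signature scheme must be public. Pinpointing the weakest structural assumption that still admits this derandomization, or replacing the Impagliazzo--Wigderson hypothesis by a weaker cryptographically standard one, appears to be the central difficulty, and is the reason the statement is posed as a conjecture rather than a theorem.
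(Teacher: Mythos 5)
The statement you are attempting is posed in the paper as a conjecture and is deliberately left unproven: the authors state that identifying an appropriate nontrivial condition $\mathcal{C}$ seems very difficult at present, citing the Canetti--Goldreich--Halevi impossibility results, and they retreat to the provable weaker variant, Theorem~\ref{FDH-computable}, in which the instantiating $\ell$-function is merely computable rather than polynomial-time computable. That theorem is proved by a completely different, assumption-free route: the effective security hypothesis turns the collection of ``bad'' oracles into an r.e.\ subset $C$ of $\X$ with $\Lm{\osg{C}}<1$ and computable measure, and Lemma~\ref{exercise} then produces a computable infinite binary sequence (hence a computable $\ell$-function) outside $\osg{C}$. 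No derandomization hypothesis or circuit lower bound enters; the price paid is precisely the loss of any polynomial-time bound on $H$.

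Your proposal is a strategy sketch rather than a proof, and its central step does not go through as written. The object to be derandomized is not a string but a function $G\in\mathsf{Func}_{\le q(n)}^{\ell(n)}$, whose truth table has size $2^{\Theta(q(n))}$; a Nisan--Wigderson generator with seed length $c\log n$ cannot output such an object, and the distinguisher ``does $G$ make some pair in $A_n$ fail'' is an \emph{oracle} circuit, so what you actually need is a polynomial-time evaluable function family that remains pseudorandom against adversaries who are handed its code --- which is exactly the primitive whose general nonexistence is the content of the CGH counterexamples and whose existence is essentially the conjecture itself. You acknowledge this and push the difficulty into the unspecified structural half of $\mathcal{C}$, but that renders the argument circular: $\mathcal{C}$ becomes ``whatever hypothesis makes the derandomization work.'' Two further problems: you have silently added the Impagliazzo--Wigderson assumption, which the conjecture as stated does not permit (the paper reserves computational assumptions for the later Conjecture~\ref{conjecture-future}, noting that without one the statement would already imply that $P$ is a proper subclass of $NP$); and the Chernoff-sampling step only approximately decides the event $\Prob[\mathsf{Sig\text{-}forge}_{\mathcal{A},\mathrm{\Pi}}(n,G)=1]>1/n^d$, which is a sharp threshold on an exact probability, so the failure events would have to be reworked into promise-gap form before any polynomial-size circuit could decide them.
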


Note that an appropriate restriction on a signature scheme $\mathrm{\Pi}$, i.e., the condition $\mathcal{C}$ on $\mathrm{\Pi}$,
might be necessary to prove Conjecture~\ref{FDH-conjecture1}, due to the negative results in the secure instantiation of the random oracle
by Canetti, Goldreich, and Halevi~\cite{CGH04},
who show ``contrived'' signature schemes (and encryption schemes) that are secure in the random oracle model but are
demonstrably insecure for \emph{any} concrete instantiation of the random oracle.
At present, however, it would seem very difficult to prove it with identifying an appropriate nontrivial condition $\mathcal{C}$.

The second best thing is to investigate whether Conjecture~\ref{FDH-conjecture2} below holds true or not,
where we consider the instantiation of the random oracle by simply a computable $\ell$-function,
which is not necessarily polynomial-time computable.

\begin{conjecture}\label{FDH-conjecture2}
Let $\ell(n)$ be a polynomial.
Suppose that a signature scheme $\mathrm{\Pi}=(\mathsf{Gen},\mathsf{Sign},\mathsf{Vrfy})$ relative to $\ell$-functions is EUF-ACMA secure in the random oracle model.
Then there exists a computable $\ell$-function $H$ such that $\mathrm{\Pi}$ is EUF-ACMA secure relative to $H$.
\qed
\end{conjecture}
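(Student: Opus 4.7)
\emph{Proof proposal.} By Theorem~\ref{signature_SR-MLR-like}, it suffices to produce a computable $\ell$-function $H$ that is Martin-L\"of random with respect to $\mathsf{ML\text{-}TEST}_{\Pi}^{\mathsf{EUF\text{-}ACMA}}$. Every test in this class is determined by a pair $(\mathcal{A}, d)$ of a probabilistic polynomial-time adversary and an integer $d\ge 2$; such pairs can be effectively enumerated as $(\mathcal{A}_i, d_i)_{i\in\N^+}$, and the associated tests $\mathcal{C}^{(i)}$ form a uniformly r.e. family. Under EUF-ACMA security of $\Pi$ in the random oracle model, the computation carried out in the proof of Theorem~\ref{main-ML-signature} yields, for each $i$, a threshold $N_i\in\N^+$ with $\Lm{\osg{\mathcal{C}^{(i)}_n}}\le 2/n$ for every $n\ge N_i$.

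Choose $n_i\ge N_i$ with $2/n_i<2^{-i-2}$ and set $\mathcal{B}_i=\osg{\mathcal{C}^{(i)}_{n_i}}$; then (iii) of Proposition~\ref{Lebesgue-outer} gives $\Lm{\bigcup_i\mathcal{B}_i}<1/4$, so every element of the closed set $\XI\setminus\bigcup_i\mathcal{B}_i$ is Martin-L\"of random with respect to $\mathsf{ML\text{-}TEST}_{\Pi}^{\mathsf{EUF\text{-}ACMA}}$. If the assignment $i\mapsto n_i$ were computable, the plan is to extract a \emph{computable} element of this set by an effective tree-pruning construction: enumerate each $\mathcal{B}_i$ from below, and extend the current prefix of $H$ bit by bit along a child whose subtree retains enough residual measure outside $\bigcup_{i\le k}\mathcal{B}_i$ at stage $k$. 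Since the tail $\bigcup_{i>k}\mathcal{B}_i$ contributes at most $2^{-k-1}$ and all the bounds in play are effective, a greedy choice based on rational approximations of the enumerated open sets maintains a strictly positive lower bound on the remaining measure of the selected cylinder and so produces a computable infinite binary sequence outside $\bigcup_i\mathcal{B}_i$. Identifying this sequence with a computable $\ell$-function $H$ and invoking Theorem~\ref{signature_SR-MLR-like} then gives EUF-ACMA security of $\Pi$ relative to $H$.

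The main obstacle is precisely the computability of $i\mapsto n_i$. Raw EUF-ACMA security in the random oracle model asserts only the existence of each $N_i$ and provides no bound that is computable from the code of $\mathcal{A}_i$ and $d_i$; the greedy step above requires exactly such an effective certificate to decide when a given test's measure contribution has become negligible enough to continue committing bits of $H$. This is the gap that the \emph{effective} EUF-ACMA security notion of Section~\ref{computable-function} closes, by rendering $N_i$ computable uniformly in $(\mathcal{A}_i,d_i)$; with that strengthening the above strategy can be carried out and yields Theorem~\ref{FDH-computable}. Establishing Conjecture~\ref{FDH-conjecture2} as stated, without any effectivity assumption, appears to require genuinely new ideas and remains open.
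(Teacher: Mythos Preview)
Your assessment is correct: Conjecture~\ref{FDH-conjecture2} is stated in the paper as an \emph{open} conjecture, not a theorem, and the paper proves only its effective variant (Theorem~\ref{FDH-computable}) under the stronger hypothesis of effective EUF-ACMA security in the random oracle model. You have accurately identified both the strategy that succeeds in the effective case and the precise obstruction---the non-computability of the thresholds $N_i$ from the bare (non-effective) security assumption---that blocks the argument for the conjecture as stated.

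Your outline for the effective case is essentially the paper's own proof of Theorem~\ref{FDH-computable}, with two cosmetic differences. First, you route through Theorem~\ref{signature_SR-MLR-like} and phrase the goal as constructing a computable $H$ that is Martin-L\"of random with respect to $\mathsf{ML\text{-}TEST}_{\Pi}^{\mathsf{EUF\text{-}ACMA}}$, whereas the paper works directly with the sets $C_{\mathcal{A}_i,d,n}$ of Definition~\ref{def-stest_acma}, assembles a single r.e.\ set $C\subset\X$ with $\Lm{\osg{C}}<1$ and $\Lm{\osg{C}}$ computable, and then invokes Lemma~\ref{exercise}; these are equivalent packagings of the same measure-avoidance argument. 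Second, you describe the extraction of a computable point outside the small open set as an explicit tree-pruning, while the paper encapsulates that step in Lemma~\ref{exercise} (whose proof, given for the analogous Theorem~\ref{exercise-modified}, is exactly such a greedy construction). In both accounts the effectivity hypothesis is used in the same place: to make the map from $(\mathcal{A}_i,d)$ to a workable threshold computable, so that the resulting open set has computable measure.
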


In what follows, we show that an ``effective'' variant of Conjecture~\ref{FDH-conjecture2} holds true.
We introduce the notion of
\emph{effective EUF-ACMA security}, which is a constructive strengthen of the conventional (non-constructive) notions of EUF-ACMA security.
In terms of Definitions~\ref{EUF-ACMA-secure-rel} and~\ref{EUF-ACMA-secure-ro} for the conventional EUF-ACMA security,
the ``effectiveness'' means that the number $N$ in the definitions can be computed, given the code of an adversary $\mathcal{A}$ and a number $d$.
To begin with a formal definition,
we choose a particular recursive enumeration $\mathcal{A}_1,\mathcal{A}_2,\mathcal{A}_3,\dotsc$ of all probabilistic polynomial-time adversaries
as the standard one for use throughout the rest of
this section.
It is easy to show that such an enumeration exists.
In fact,
the $k$th probabilistic polynomial-time adversary $\mathcal{A}_k$ can be chosen as
a probabilistic Turing machine obtained by executing the $k$th probabilistic Turing machine $\mathcal{M}_k$ in at most $n^k+k$ steps,
where $n$ is the length of the input of $\mathcal{M}_k$.

On the one hand, the effective EUF-ACMA security relative to a specific $\ell$-function is defined as follows.

\begin{definition}\label{eff-EUACMA}
Let $H$ be an $\ell$-function. 
A signature scheme $\mathrm{\Pi}=(\mathsf{Gen},\mathsf{Sign},\mathsf{Vrfy})$ relative to $\ell$-functions is
\emph{effectively existentially unforgeable under an adaptive chosen-message attack} (or \emph{effectively EUF-ACMA secure}) \emph{relative to $H$}
if there exists a computable function $f\colon\N^+\times\N^+\to\N^+$ such that, for all $i,d,n\in\N^+$, if $n\ge f(i,d)$ then
\begin{equation*}
  \Prob[\mathsf{Sig\text{-}forge}_{\mathcal{A}_i,\mathrm{\Pi}}(n,H_n)=1]\le\frac{1}{n^d}.
\end{equation*}
\qed
\end{definition}

Obviously, if a signature scheme $\mathrm{\Pi}$ relative to $\ell$-functions is effectively EUF-ACMA secure relative to $H$,
then $\mathrm{\Pi}$ is simply EUF-ACMA secure relative to $H$.

On the other hand, the effective EUF-ACMA security in the random oracle model is defined as follows.

\begin{definition}\label{eff-EUACMA-ro}
A signature scheme $\mathrm{\Pi}=(\mathsf{Gen},\mathsf{Sign},\mathsf{Vrfy})$ relative to $\ell$-functions is
\emph{effectively existentially unforgeable under an adaptive chosen-message attack} (or \emph{effectively EUF-ACMA secure}) \emph{in the random oracle model}
if there exists a computable function $f\colon\N^+\times\N^+\to\N^+$ such that, for all $i,d,n\in\N^+$, if $n\ge f(i,d)$ then
\begin{equation*}
  \frac{1}{\#\mathsf{Func}_{\le q_i(n)}^{\ell(n)}}\sum_{G\in\mathsf{Func}_{\le q_i(n)}^{\ell(n)}}
  \Prob[\mathsf{Sig\text{-}forge}_{\mathcal{A}_i,\mathrm{\Pi}}(n,G)=1]\le\frac{1}{n^d},
\end{equation*}
where $q_i(n)$ is the maximum value among the running time of $\mathsf{Sign}$, the running time of $\mathsf{Vrfy}$,
and the running time of $\mathcal{A}_i$ on the parameter $n$.
\qed
\end{definition}

Obviously, if a signature scheme $\mathrm{\Pi}$ relative to $\ell$-functions is effectively EUF-ACMA secure in the random oracle model,
then $\mathrm{\Pi}$ is simply EUF-ACMA secure in the random oracle model.

The effective variant of Conjecture~\ref{FDH-conjecture2} is then presented as follows.

\begin{theorem}[Main result II]\label{FDH-computable}
Let $\ell(n)$ be a polynomial.
Suppose that a signature scheme $\mathrm{\Pi}=(\mathsf{Gen},\mathsf{Sign},\mathsf{Vrfy})$ relative to $\ell$-functions is
effectively EUF-ACMA secure in the random oracle model.
Then there exists a computable $\ell$-function $H$ such that $\mathrm{\Pi}$ is effectively EUF-ACMA secure relative to $H$.
\qed
\end{theorem}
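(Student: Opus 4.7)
The plan is to construct the computable $\ell$-function $H$ stage by stage, using a counting argument driven by the witness $f\colon\N^+\times\N^+\to\N^+$ of the effective EUF-ACMA security in the random oracle model. The key quantitative input is Lemma~\ref{lemma1}: applied with $\varepsilon=1/n^{2d}$ and $\alpha=n^d$ to the hypothesis evaluated at $f(i,2d)$, it yields that for every $n\ge f(i,2d)$ the fraction of $G\in\mathsf{Func}_{\le q_i(n)}^{\ell(n)}$ for which $\Prob[\mathsf{Sig\text{-}forge}_{\mathcal{A}_i,\mathrm{\Pi}}(n,G)=1]>1/n^d$ is strictly less than $1/n^d$.

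Next I would fix a computable bijection $\pi\colon\N^+\times\N^+\to\N^+$ and, for every pair $(i,d)$ with $d\ge 2$, define a computable threshold
\begin{equation*}
  N(i,d):=\max\!\bigl(f(i,2d),\,\lceil 2^{(\pi(i,d)+1)/d}\rceil\bigr),
\end{equation*}
so that both the counting bound above applies at all $n\ge N(i,d)$ and $1/N(i,d)^d\le 2^{-(\pi(i,d)+1)}$. For each $n$, let $R_n=\{(i,d):d\ge 2\text{ and }N(i,d)\le n\}$; because $N(i,d)\le n$ forces $\pi(i,d)\le d\log_2 n$, and $\pi$ grows at least quadratically in $i+d$, the set $R_n$ is finite and can be enumerated computably from $n$. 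Set $q_{\max}(n)=\max\{q_i(n):(i,d)\in R_n\}$ (or $0$ if $R_n=\emptyset$).

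At stage $n$ I then perform an exhaustive search over $G\in\mathsf{Func}_{\le q_{\max}(n)}^{\ell(n)}$. For each $(i,d)\in R_n$, the proportion of $G$ whose restriction to $\{0,1\}^{\le q_i(n)}$ makes $\mathcal{A}_i$ succeed with probability exceeding $1/n^d$ is less than $1/n^d\le 2^{-(\pi(i,d)+1)}$; summing over $R_n$ (and using that $\pi$ is a bijection) bounds the total bad fraction by $\sum_{k\ge 1}2^{-(k+1)}=1/2<1$. Hence a ``good'' $G^{*}$ exists, and since each relevant acceptance probability is a dyadic rational computable from $(\mathcal{A}_i,\mathrm{\Pi},n,G)$ it can be found effectively. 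Define
\begin{equation*}
  H(n,x)=\begin{cases}G^{*}(x) & \text{if }\abs{x}\le q_{\max}(n),\\ 0^{\ell(n)} & \text{otherwise};\end{cases}
\end{equation*}
this yields a total computable $\ell$-function $H$.

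Finally, the witness of the effective EUF-ACMA security of $\mathrm{\Pi}$ relative to $H$ is $g(i,d):=N(i,\max(d,2))$, which is computable. For $n\ge g(i,d)$ with $d\ge 2$, one has $(i,d)\in R_n$, and the construction forces $\Prob[\mathsf{Sig\text{-}forge}_{\mathcal{A}_i,\mathrm{\Pi}}(n,H_n)=1]\le 1/n^d$; the case $d=1$ reduces to $d=2$ via $1/n^2\le 1/n$. The delicate point of the argument is the choice of the thresholds $N(i,d)$: they must be large enough to (a) activate the quantitative counting bound inherited from the random oracle model and (b) keep the cumulative mass of bad oracles strictly below $1$ at every stage so that a good $G^{*}$ always exists, while (c) remaining computable in $(i,d)$ so that the final bound $g$ witnesses \emph{effective} security rather than merely ordinary security.
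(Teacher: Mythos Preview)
Your argument is correct and takes a genuinely different route from the paper. The paper encodes $\ell$-functions as infinite binary sequences via \eqref{identifylf}, assembles all the bad sets $C_{\mathcal{A}_i,d,n}$ into a single r.e.\ set $C\subset\X$ with $\Lm{\osg{C}}<1$ and \emph{computable} measure, and then invokes Lemma~\ref{exercise} to extract a computable sequence (hence $\ell$-function) avoiding $\osg{C}$. Your construction instead works one security parameter at a time: at stage $n$ you enumerate the finitely many active constraints $(i,d)\in R_n$, use the counting bound together with a union bound to see that a good $G^{*}\in\mathsf{Func}_{\le q_{\max}(n)}^{\ell(n)}$ exists, and locate it by exhaustive search. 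This is more elementary --- it bypasses both the coding of $\ell$-functions as reals and the appeal to Lemma~\ref{exercise}, exploiting the fact that the functions $H_n$ for different $n$ are completely decoupled in the experiment --- whereas the paper's detour is precisely what ties the construction to the test framework of Section~\ref{CSIRO} and shows that the resulting $H$ passes every test in $\mathsf{ML\text{-}TEST}_{\mathrm{\Pi}}^{\mathsf{EUF\text{-}ACMA}}$. One small point to tighten: the finiteness and computability of $R_n$ depend on the specific growth of $\pi$, so you should fix $\pi$ to be a standard Cantor-type pairing (so that $\pi(i,d)\ge c(i+d)^2$ for some $c>0$); then $\pi(i,d)\le d\log_2 n$ combined with $d\le i+d$ yields $i+d=O(\log n)$, which is what makes the search over $R_n$ terminate.
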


In order to prove Theorem~\ref{FDH-computable},
we need Lemmas~\ref{lemma1} and \ref{lemma2} in the previous section, and Lemma~\ref{exercise} below.
The last one is Exercise 1.9.21 of Nies's textbook \cite{N09} of algorithmic randomness.%
\footnote{Lemma~\ref{exercise} can be used to prove the non-existence of universal Schnorr test for the notion of Schnorr randomness for an infinite binary sequence.
See Fact 3.5.9 of \cite{N09} for the detail.
}
In Section~\ref{LM} we will prove a modification of Lemma~\ref{exercise}, i.e., Theorem~\ref{exercise-modified}.
The proof of Lemma~\ref{exercise} can be obtained by simplifying the proof of Theorem~\ref{exercise-modified}.

\begin{lemma}\label{exercise}
Let $S$ be an r.e.~subset of $\X$.
Suppose that $\Lm{\osg{S}}<1$ and $\Lm{\osg{S}}$ is a computable real.
Then there exists $\alpha\in\XI$ such that $\alpha$ is computable and $\alpha\notin\osg{S}$.
\qed
\end{lemma}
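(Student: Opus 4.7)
The plan is to construct $\alpha$ as the limit of a computable sequence of strings $\sigma_0 \prec \sigma_1 \prec \cdots$ with $\abs{\sigma_n}=n$, maintaining the invariant $\mu(\sigma_n) < 2^{-n}$, where I write $\mu(x) := \Lm{\osg{S} \cap \osg{\{x\}}}$ for $x \in \X$. If the invariant holds at every stage, then no $\sigma_n$ can belong to $S$ (otherwise $\osg{\{\sigma_n\}} \subseteq \osg{S}$ would force $\mu(\sigma_n) = 2^{-n}$); since the prefixes of $\alpha$ are precisely the $\sigma_n$, this yields $\alpha \notin \osg{S}$.

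The crucial preliminary observation is that $\mu(x)$ is a computable real, uniformly in $x$. Enumerating $S$ from below gives a lower semi-computable approximation of $\mu(x)$, since $\osg{S} \cap \osg{\{x\}}$ is effectively open. For the other direction, the basic cones $\osg{\{y\}}$ for $y \in \{0,1\}^n$ form a measurable partition of $\XI$, so summing over a single level yields $\sum_{y \in \{0,1\}^n} \mu(y) = \Lm{\osg{S}}$, which is computable by hypothesis. Hence $\mu(x) = \Lm{\osg{S}} - \sum_{y \ne x,\, \abs{y}=\abs{x}} \mu(y)$ expresses $\mu(x)$ as a computable number minus a finite sum of lower semi-computable numbers, making it upper semi-computable; combining the two gives computability.

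Granted this, the inductive construction is straightforward. Set $\sigma_0 := \lambda$; the base case $\mu(\lambda) = \Lm{\osg{S}} < 1 = 2^0$ holds by assumption. Given $\sigma_n$ with $\mu(\sigma_n) < 2^{-n}$, the identity $\mu(\sigma_n 0) + \mu(\sigma_n 1) = \mu(\sigma_n) < 2 \cdot 2^{-(n+1)}$ forces at least one of $\mu(\sigma_n 0), \mu(\sigma_n 1)$ to be strictly less than $2^{-(n+1)}$. Since both quantities are computable reals, dovetailing their upper rational approximations terminates with a certified bit $b \in \{0,1\}$ satisfying $\mu(\sigma_n b) < 2^{-(n+1)}$; set $\sigma_{n+1} := \sigma_n b$. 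The map $n \mapsto \sigma_n$ is then computable, so $\alpha := \lim_n \sigma_n$ is a computable element of $\XI$, and by the opening paragraph $\alpha \notin \osg{S}$.

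The main obstacle is precisely the computability of $\mu(x)$. From r.e.-ness of $S$ alone one only obtains lower semi-computability, and then the decision ``which child of $\sigma_n$ to descend into'' cannot be made effectively, because one cannot rule out the knife-edge case $\mu(\sigma_n b) = 2^{-(n+1)}$ from below. The hypothesis that $\Lm{\osg{S}}$ is a computable real is exactly what promotes $\mu$ to two-sided computability via the level-sum identity, and is used essentially rather than cosmetically.
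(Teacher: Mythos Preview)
Your proof is correct and follows essentially the same route as the paper. The paper does not give a direct proof of this lemma but instead proves its analogue over $\mathsf{Encf}^{\infty}$ (Theorem~\ref{exercise-modified}) and remarks that the binary case is a simplification; that proof defines the same conditional-measure function $F(t)=\Lm{\osg{S}\cap I(t)}$, obtains its uniform computability via the same level-sum identity $\sum_{\abs{u}=n}F(u)=\Lm{\osg{S}}$ together with the enumeration of $S$, and then recursively picks a child keeping the strict inequality $F(\sigma_1,\dots,\sigma_m)<\prod_{k\le m}(\#\mathsf{Encf}_k)^{-1}$, exactly as you do with $\mu(\sigma_n)<2^{-n}$.
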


\begin{proof}[Proof of Theorem~\ref{FDH-computable}]
Suppose that
a signature scheme $\mathrm{\Pi}=(\mathsf{Gen},\mathsf{Sign},\mathsf{Vrfy})$ relative to $\ell$-functions is effectively EUF-ACMA secure in the random oracle model.
Then there exists a computable function $f\colon\N^+\times\N^+\to\N^+$ such that, for all $i,d,n\in\N^+$, if $n\ge f(i,d)$ then
\begin{align*}
  \frac{1}{\#\mathsf{Func}_{\le q_i(n)}^{\ell(n)}}\sum_{G\in\mathsf{Func}_{\le q_i(n)}^{\ell(n)}}\Prob[\mathsf{Sig\text{-}forge}_{\mathcal{A}_i,\mathrm{\Pi}}(n,G)=1]
  \le\frac{1}{n^d},
\end{align*}
where $q_i(n)$ is the maximum value among the running time of $\mathsf{Sign}$, the running time of $\mathsf{Vrfy}$,
and the running time of $\mathcal{A}_i$ on the parameter $n$.
Note that the value $q_i(n)$ can be computed, given $i$ and $n$.
It follows from Lemma~\ref{lemma1} that, for all $i,d,n\in\N^+$, if $n\ge f(i,2d)$ then
\begin{equation}\label{keyineqnn}
\begin{split}
  \#\hspace{-0.5mm}\left\{\hspace{-0.5mm}G\in\mathsf{Func}_{\le q_i(n)}^{\ell(n)}\hspace{-1.0mm}\biggm|
  \hspace{-0.5mm}\Prob[\mathsf{Sig\text{-}forge}_{\mathcal{A}_i,\mathrm{\Pi}}(n,G)=1]>\frac{1}{n^d}\hspace{-0.3mm}\right\}
  <\frac{\#\mathsf{Func}_{\le q_i(n)}^{\ell(n)}}{n^d}.
\end{split}
\end{equation}

For each $i,d,n\in\N^+$ we define a subset $C_{i,d,n}$ of $\X$ as  $C_{\mathcal{A}_i,d,n}$ (see Definition~\ref{def-stest_acma}).
Since $\#\mathsf{Func}_{\le q_i(n)}^{\ell(n)}=2^{\ell(n)\#\{0,1\}^{\le q_i(n)}}$,
it follows from Definition~\ref{def-stest_acma}, (i) of Proposition~\ref{Lebesgue-outer}, and \eqref{keyineqnn} that,
for each $i,d,n\in\N^+$, if $n\ge f(i,2d)$ then
\begin{equation}\label{keyineqnidn}
\begin{split}
  \Lm{\osg{C_{i,d,n}}}=\sum_{s\in C_{i,d,n}}2^{-\abs{s}}
  <\frac{\#\mathsf{Func}_{\le q_i(n)}^{\ell(n)}}{n^d}2^{-\ell(n)\#\{0,1\}^{\le q_i(n)}}=\frac{1}{n^d}.
\end{split}
\end{equation}

We choose a particular computable bijection
\begin{equation*}
  \varphi\colon\N^+\to\{\,(i,d)\mid i\in\N^+\;\&\;d\ge 2\,\},
\end{equation*}
and define $(\varphi_1(m),\varphi_2(m))=\varphi(m)$.
We then define a computable function $g\colon\N^+\to\N^+$ by
$g(m)=\{f(\varphi_1(m),2\varphi_2(m))+1\}^{m+1}$.
For each $m\in\N^+$, we define a subset $C_{m}$ of $\X$ by
\begin{equation}\label{C_m}
  C_m=\bigcup_{n=g(m)}^\infty C_{\varphi_1(m),\varphi_2(m),n}.
\end{equation}
It follows from (iii) of Proposition~\ref{Lebesgue-outer}, \eqref{keyineqnidn}, and Lemma~\ref{lemma2} that, for each $m\in\N^+$,
\begin{equation}\label{LC_mle2m}
\begin{split}
  \Lm{\osg{C_m}}\le\sum_{n=g(m)}^{\infty}\Lm{\osg{C_{\varphi_1(m),\varphi_2(m),n}}}
  <\sum_{n=g(m)}^{\infty}\frac{1}{n^{\varphi_2(m)}}\le\frac{2}{g(m)}\le\frac{1}{2^{m}}.
\end{split}
\end{equation}
We then define $C$
by
\begin{equation}\label{C}
  C=\bigcup_{m=1}^{\infty}C_m.
\end{equation}
Therefore, using (iii) of Proposition~\ref{Lebesgue-outer},
\begin{equation}
  \Lm{\osg{C}}\le\sum_{m=1}^{\infty}\Lm{\osg{C_m}}<\sum_{m=1}^{\infty}\frac{1}{2^{m}}=1.
\end{equation}

Next we show that $C$ is an r.e.~subset of $\X$.
It follows from Definition~\ref{def-stest_acma} that,
given $i$, $d$, and $n$, one can decide the finite subset $C_{i,d,n}$ of $\X$,
since the dyadic rational $\Prob[\mathsf{Sig\text{-}forge}_{\mathcal{A}_i,\mathrm{\Pi}}(n,G)=1]$ is computable,
given $i$, $n$, and
$G\in\mathsf{Func}_{\le q_i(n)}^{\ell(n)}$.
Thus, since $\varphi$ and $g$ are computable functions, it follows from \eqref{C_m} and \eqref{C} that $C$ is an r.e.~subset of $\X$.

We then show that $\Lm{\osg{C}}$ is a computable real.
For each $k\in\N$, we define a finite subset $D_k$ of
$C$
by
\begin{equation*}
  D_k=\bigcup_{m=1}^{k}\bigcup_{n=g(m)}^{g(m)2^{k}-1} C_{\varphi_1(m),\varphi_2(m),n}.
\end{equation*}
Given $k\in\N$, one can decides the finite set $D_k$,
since $\varphi$ and $g$ are computable functions and moreover one can decide the finite set $C_{i,d,n}$, given $i$, $d$, and $n$.
Therefore, given $k\in\N$, one can calculate the dyadic rational $\Lm{\osg{D_k}}$ based on (i) of Proposition~\ref{Lebesgue-outer}.
On the other hand, note that
\begin{equation*}
  C\setminus D_k\subset\left(\bigcup_{m=1}^{k}\bigcup_{n=g(m)2^{k}}^{\infty} C_{\varphi_1(m),\varphi_2(m),n}\right)\cup\bigcup_{m=k+1}^{\infty}C_m.
\end{equation*}
Thus, using (ii) and (iii) of Proposition~\ref{Lebesgue-outer}, \eqref{keyineqnidn}, Lemma~\ref{lemma2}, and \eqref{LC_mle2m} we see that, for each $k\in\N$,
\begin{align*}
  \Lm{\osg{C\setminus D_k}}
  &\le\sum_{m=1}^{k}\sum_{n=g(m)2^{k}}^{\infty}\Lm{\osg{C_{\varphi_1(m),\varphi_2(m),n}}}+\sum_{m=k+1}^{\infty}\Lm{\osg{C_m}}\\
  &<\sum_{m=1}^{k}\frac{2}{g(m)2^k}+\sum_{m=k+1}^{\infty}\frac{1}{2^m}
  \le\sum_{m=1}^{k}\frac{1}{2^{m+k}}+\frac{1}{2^{k}}
  <\frac{1}{2^{k-1}}.
\end{align*}
Therefore, since $\osg{C}=\osg{D_{k+1}}\cup\osg{C\setminus D_{k+1}}$, using (ii) and (iii) of Proposition~\ref{Lebesgue-outer} we have
\begin{equation*}
  \abs{\Lm{\osg{C}}-\Lm{\osg{D_{k+1}}}}\le\Lm{\osg{C\setminus D_{k+1}}}\le 2^{-k}
\end{equation*}
for each $k\in\N$.
Hence, $\Lm{\osg{C}}$ is a computable real.

Now, it follows from Lemma~\ref{exercise} that there exists $H\in\XI$ such that $H$ is computable and $H\notin\osg{C}$.
Since $H$ is computable as an infinite binary sequence,
it is easy to see that
$H$ is also computable as an $\ell$-function.
On the other hand, let $i,d,n\in\N^+$ with $n\ge g(\varphi^{-1}(i,d+1))$.
We then define $m=\varphi^{-1}(i,d+1)$, i.e., $\varphi(m)=(i,d+1)$.
Since $H\notin\osg{C}$ and $n\ge g(m)$, it follows from \eqref{C} and \eqref{C_m} that $H\notin\osg{C_{\varphi_1(m),\varphi_2(m),n}}=\osg{C_{i,d+1,n}}$.
Therefore, based on the identification \eqref{identifylf} of an $\ell$-function with an infinite binary sequence,
we see that the function $H_n\colon\X\to\{0,1\}^{\ell(n)}$ satisfies that
$\Prob[\mathsf{Sig\text{-}forge}_{\mathcal{A}_i,\mathrm{\Pi}}(n,H_n)=1]\le 1/n^{d+1}<1/n^{d}$.
Thus, since the mapping $\N^+\times\N^+\ni (i,d)\mapsto g(\varphi^{-1}(i,d+1))$ is a computable function,
it follows from Definition~\ref{eff-EUACMA} that $\mathrm{\Pi}$ is effectively EUF-ACMA secure relative to $H$.
\end{proof}

\section{Computable analysis on cryptography}
\label{discussion}

In this section, we
show
that the effective security notions introduced in the previous section are a natural alternative to the conventional security notions in modern cryptography.

In Definitions~\ref{EUF-ACMA-secure-rel} and~\ref{EUF-ACMA-secure-ro} for the conventional EUF-ACMA security,
the number $N$ is only required to exist, depending on the adversary $\mathcal{A}$ and the number $d$, that is,
the success probability of the attack by an adversary $\mathcal{A}$ on a security parameter $n$ is required to be less than $1/n^d$
for all sufficiently large $n$, where the lower bound of such $n$ is not required to be computable from $\mathcal{A}$ and $d$.
On the other hand, in Definitions~\ref{eff-EUACMA} and~\ref{eff-EUACMA-ro} for the effective EUF-ACMA security,
it is required that the lower bound $N$ of such $n$ can be computed from the code of $\mathcal{A}$ and $d$.

In modern cryptography based on computational security,
it is important to choose the security parameter $n$ of a cryptographic scheme as small as possible to the extent that the \emph{security requirements} are satisfied,
in order to make the efficiency of the scheme as high as possible.
For that purpose,
it is desirable to be able to calculate a concrete value of $N$, given the code of $\mathcal{A}$ and $d$,
since $N$ gives a lower bound of the security parameter for which the security requirements specified by $\mathcal{A}$ and $d$ are satisfied.
This results in the notion of effective security.

Does the replacement of the conventional security notions by the corresponding effective security notions
bring difficulties to modern cryptography over all ?
We do not think so.
It would seem plausible that all the conventional security notions can be replaced by the corresponding effective security notions in modern cryptography with little cost.
As an example,
let us consider the EUF-ACMA security of the RSA-FDH signature scheme under the RSA assumption and its effective counterpart.
Let $\mathsf{Succ}_{\mathcal{A}}^{\mathsf{RSA}}(n)$ be
the success probability of an algorithm $\mathcal{A}$ in solving the RSA problem on a security parameter $n$.
On the one hand, the (conventional) RSA assumption is defined as the condition that, for all probabilistic polynomial-time algorithms $\mathcal{A}$ and all $d\in\N^+$
there exists $N\in\N^+$ such that, for all $n\ge N$,
\begin{equation*}
  \mathsf{Succ}_{\mathcal{A}}^{\mathsf{RSA}}(n)\le \frac{1}{n^d}.
\end{equation*}
On the other hand, the \emph{effective} RSA assumption is defined as the condition that
there exists a computable function $f\colon\N^+\times\N^+\to\N^+$ such that, for all $i,d,n\in\N^+$,
if $n\ge f(i,d)$ then
\begin{equation*}
  \mathsf{Succ}_{\mathcal{A}_i}^{\mathsf{RSA}}(n)\le \frac{1}{n^d},
\end{equation*}
where $\mathcal{A}_i$ is the $i$th algorithm in a particular recursive enumeration of all probabilistic polynomial-time algorithms.
Now, recall the following theorem.

\begin{theorem}[Bellare and Rogaway~\cite{BR93}]\label{RSA-FDH}
RSA-FDH is EUF-ACMA secure in the random oracle model under the RSA assumption.
\qed
\end{theorem}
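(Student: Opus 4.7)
The plan is to prove Theorem~\ref{RSA-FDH} by the standard black-box reduction from the RSA problem to EUF-ACMA forgery of RSA-FDH in the random oracle model. Given a probabilistic polynomial-time adversary $\mathcal{A}$ against RSA-FDH with non-negligible forgery probability, I would construct a probabilistic polynomial-time algorithm $\mathcal{B}$ that, on input an RSA challenge $(N,e,y)$, simulates the signature experiment for $\mathcal{A}$ with public key $pk=(N,e)$ and, with polynomially related probability, extracts an $e$-th root of $y$ modulo $N$.

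First I would set up the simulation of the random oracle and the signing oracle. Let $q_s(n)$ and $q_h(n)$ be polynomial upper bounds on the numbers of signing and hash queries made by $\mathcal{A}$ on security parameter $n$. For each fresh hash query on a message $m$, $\mathcal{B}$ samples a uniformly random $r_m \in \mathbb{Z}_N^\ast$ and an independent biased coin $c_m \in \{0,1\}$ with $\Pr[c_m=0]=p$, and programs $H(m) := y\cdot r_m^{e} \bmod N$ when $c_m=0$ (a ``target'' entry whose preimage $\mathcal{B}$ cannot produce) and $H(m) := r_m^{e} \bmod N$ when $c_m=1$ (a ``signable'' entry). Signing queries on $m$ are answered by first defining $H(m)$ as above if needed and, if $c_m=1$, returning $\sigma := r_m$; if $c_m=0$, $\mathcal{B}$ aborts. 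When $\mathcal{A}$ halts with a forgery $(m^\ast,\sigma^\ast)$, if $c_{m^\ast}=0$ then $\sigma^{\ast e} \equiv H(m^\ast) \equiv y\cdot r_{m^\ast}^{e} \pmod{N}$, so $\mathcal{B}$ returns $\sigma^\ast \cdot r_{m^\ast}^{-1} \bmod N$ as an $e$-th root of $y$.

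Next I would do the probability accounting. Conditioned on no abort, the view of $\mathcal{A}$ in the simulation is identically distributed to its view in the true EUF-ACMA experiment against a genuine random oracle, because each $r_m^e$ and each $y\cdot r_m^e$ is uniformly distributed over $\mathbb{Z}_N^\ast$ and the coins $c_m$ are independent of $\mathcal{A}$'s view until queried. Choosing $p = 1/(q_s(n)+1)$ maximizes the product $(1-p)^{q_s(n)} \cdot p$, yielding $\Pr[\mathcal{B}\text{ solves RSA}] \ge \frac{1}{e\cdot(q_s(n)+1)} \cdot \Pr[\mathcal{A}\text{ forges}]$ by a routine computation. Finally I would contrapose: if RSA-FDH were not EUF-ACMA secure in the random oracle model, then Definition~\ref{EUF-ACMA-secure-ro} gives $\mathcal{A}$ and $d\in\N^+$ with forgery probability exceeding $1/n^d$ for infinitely many $n$; since $q_s(n)$ is polynomial, $\mathcal{B}$'s success probability exceeds $1/n^{d'}$ for some $d'\in\N^+$ and infinitely many $n$, contradicting the RSA assumption.

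The main obstacle is the independence/indistinguishability analysis underlying the abort-free event: one must argue carefully that $c_{m^\ast}$ is independent of $\mathcal{A}$'s view up to the moment of forgery, and that abort events on signing queries are independent of the event $c_{m^\ast}=0$. This is the classical technical heart of the Bellare--Rogaway argument, and the optimization $p=1/(q_s(n)+1)$ (giving the $1/e$ factor in the reduction) rests on it. All other steps are straightforward polynomial bookkeeping of $q_s,q_h$ and of the reduction's running time.
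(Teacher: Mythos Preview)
The paper does not prove Theorem~\ref{RSA-FDH}: it is stated with an immediate \qed\ and attributed to Bellare and Rogaway~\cite{BR93}, serving only as a reference point for the discussion of effective security in Section~\ref{discussion}. So there is no in-paper proof to compare your proposal against.

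Your proposal is a correct proof, but it is worth noting that the reduction you sketch is not the original Bellare--Rogaway argument from~\cite{BR93}; it is Coron's later improvement. The 1993 proof guesses a single index $j\in\{1,\dots,q_h(n)\}$, embeds the RSA challenge $y$ at the $j$th hash query only, and answers all other hash queries with fresh $r_m^e$; this yields a reduction loss of roughly $1/q_h(n)$. Your biased-coin technique with $p=1/(q_s(n)+1)$ gives the tighter loss $\approx 1/(e\cdot q_s(n))$, but at the cost of the more delicate independence analysis you flag as the main obstacle. Either reduction suffices to establish the qualitative statement of Theorem~\ref{RSA-FDH} (negligible forgery probability under the RSA assumption), and both are polynomial-time black-box reductions, so the conclusion is the same. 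If you specifically want to match the cited source, replace the biased coins by the single-index guess; the independence issue then disappears entirely, since the only abort condition is that the forgery message was not the $j$th hash query, and $j$ is chosen uniformly and independently of $\mathcal{A}$'s view.
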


By analyzing the proof of Theorem~\ref{RSA-FDH} given in \cite{BR93}, we can see that the following effective version of Theorem~\ref{RSA-FDH} holds.
We can do this task very easily, compared with
the non-triviality
of the original proof itself.

\begin{theorem}\label{effective_RSA-FDH}
RSA-FDH is \emph{effectively} EUF-ACMA secure in the random oracle model under the \emph{effective} RSA assumption.
\qed
\end{theorem}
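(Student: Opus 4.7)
The plan is to revisit the Bellare--Rogaway reduction that underlies Theorem~\ref{RSA-FDH} and to verify that every step is uniform in the code of the adversary. First I would recall the structure of that reduction: from a probabilistic polynomial-time adversary $\mathcal{A}$ against RSA-FDH making at most $q_{\mathsf{H}}(n)$ hash queries and $q_{\mathsf{S}}(n)$ signature queries, one builds a probabilistic polynomial-time RSA solver $\mathcal{B}$ that embeds its RSA challenge $y$ into a guessed hash answer and answers the remaining hash queries at $m_j$ by $r_j^e\bmod N$ for fresh $r_j$, so that every signature query can be answered and a forgery against the targeted hash value yields $y^{1/e}\bmod N$. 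This gives a quantitative bound of the form $\mathsf{Succ}^{\mathsf{RSA}}_{\mathcal{B}}(n)\ge \mathsf{Succ}^{\mathsf{EUF\text{-}ACMA}}_{\mathcal{A}}(n)/q_{\mathsf{H}}(n)$, up to routine polynomial losses.

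Next, I would observe that this reduction is effective in the index: by a standard $s$-$m$-$n$ style construction, there is a total recursive function $\psi\colon\N^+\to\N^+$ such that, in the fixed enumeration $\mathcal{A}_1,\mathcal{A}_2,\dotsc$ of probabilistic polynomial-time algorithms, the RSA solver built from $\mathcal{A}_i$ is $\mathcal{A}_{\psi(i)}$. The construction simply wraps the code of $\mathcal{A}_i$ in fixed simulation code for the random and signing oracles, and pads the resulting machine to dominate its polynomial running time $n^i+i$. Similarly, the polynomial $q_i(n)$ bounding the number of hash and signature queries made by $\mathcal{A}_i$ is computable from $i$, since it is dominated by $n^i+i$.

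With these effective ingredients in hand, I would invoke the effective RSA assumption on $\mathcal{A}_{\psi(i)}$ to obtain a computable $f\colon\N^+\times\N^+\to\N^+$ with $\mathsf{Succ}^{\mathsf{RSA}}_{\mathcal{A}_{\psi(i)}}(n)\le 1/n^{d'}$ whenever $n\ge f(\psi(i),d')$. Plugging this into the quantitative reduction yields $\mathsf{Succ}^{\mathsf{EUF\text{-}ACMA}}_{\mathcal{A}_i}(n)\le q_i(n)/n^{d'}$ for such $n$. Choosing $d'=d+i+2$ absorbs the $q_i(n)$ factor for all sufficiently large $n$, and the resulting threshold is explicitly computable in $i$ and $d$; the maximum of this threshold and $f(\psi(i),d+i+2)$ is then the computable function required by Definition~\ref{eff-EUACMA-ro}.

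The main obstacle is purely administrative: verifying that the explicit reduction of \cite{BR93} is uniform in the adversary's code (which it is, being described algorithmically) and that the query polynomial $q_i(n)$ is effectively extractable from $i$. Once these constructivity checks are discharged, the quantitative analysis is just a rerun of the standard tightness calculation, rescaled to absorb the polynomial degradation through a larger choice of the inverse-polynomial slack, in precise parallel with the quantifier manipulations already carried out in the proof of Theorem~\ref{FDH-computable}.
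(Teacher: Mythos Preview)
Your proposal is correct and follows exactly the approach the paper indicates: the paper does not give a detailed proof of this theorem but merely remarks that it is obtained ``very easily'' by analyzing the proof of Theorem~\ref{RSA-FDH} in \cite{BR93}, which is precisely what you do by tracing the Bellare--Rogaway reduction and checking that the transformation $\mathcal{A}_i\mapsto\mathcal{B}$ and the polynomial losses are uniformly computable in the adversary's index. Your sketch in fact supplies the details the paper elides.
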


Note that the effective RSA assumption seems more difficult to prove than the RSA assumption.
However, in modern cryptography based on computational security,
we must make a computational assumption, such as the RSA assumption, somehow to guarantee the security of a cryptographic scheme.
Since making any computational assumption does not cost at all in the development of theory of cryptography,
making the effective RSA assumption instead of the RSA assumption would not seem to bring any trouble to modern cryptography.
In this manner, we would expect that
all the conventional security notions can be replaced by the corresponding effective security notions in modern cryptography with little cost.
Thus, it would seem plausible that we can easily reconstruct the theory of cryptography based on the effective security notions instead of the conventional security notions.

In the above,
we consider the validity of the effective security notions in modern cryptography.
Actually,
it would seem more natural to require that
the functions $f\colon\N^+\times\N^+\to\N^+$ in Definitions~\ref{eff-EUACMA} and~\ref{eff-EUACMA-ro} are polynomial-time computable rather than simply computable.
We call this type of effective security \emph{polynomial-time effective security}.
Conjecture~\ref{conjecture-future} below is a polynomial-time effective version of Conjecture~\ref{FDH-conjecture1},
and states that \emph{the security in the random oracle model implies one in the standard model}.
In the future, it would be challenging to prove Conjecture~\ref{conjecture-future} (or its appropriate modification)
with identifying an appropriate computational assumption $\mathsf{COMP}$ and an appropriate nontrivial condition $\mathcal{C}$ on a signature scheme $\mathrm{\Pi}$.

\begin{conjecture}\label{conjecture-future}
Let $\ell(n)$ be a polynomial.
Suppose that a signature scheme $\mathrm{\Pi}$ relative to $\ell$-functions is \emph{polynomial-time effectively} EUF-ACMA secure in the random oracle model.
Under the assumption $\mathsf{COMP}$, if $\mathrm{\Pi}$ satisfies the condition $\mathcal{C}$,
then there exists a \emph{polynomial-time computable} $\ell$-function (or a \emph{polynomial-time computable} family of $\ell$-functions)
relative to which $\mathrm{\Pi}$ is \emph{polynomial-time effectively} EUF-ACMA secure.
\qed
\end{conjecture}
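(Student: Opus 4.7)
The plan is to mirror the proof of Theorem~\ref{FDH-computable}, upgrading ``computable'' to ``polynomial-time computable'' at every stage while exploiting the hypothesized condition $\mathcal{C}$ and the computational assumption $\mathsf{COMP}$ to overcome the obstructions that the mere effective-randomness argument cannot. First, I would invoke polynomial-time effective EUF-ACMA security of $\mathrm{\Pi}$ in the random oracle model to obtain a polynomial-time computable $f\colon\N^+\times\N^+\to\N^+$ as in Definition~\ref{eff-EUACMA-ro}. Following the construction in the proof of Theorem~\ref{FDH-computable}, for each $(i,d,n)$ I would define $C_{i,d,n}=C_{\mathcal{A}_i,d,n}$ and then $C=\bigcup_m C_m$ with the $C_m$ chosen via a computable enumeration $\varphi$ so that $\mathcal{L}(\osg{C})<1$; the polynomial-time computability of $f$ would be exploited to ensure that membership in $C$ and the rational approximation of $\mathcal{L}(\osg{C})$ are themselves polynomial-time accessible, rather than merely computable.

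Next, I would replace the appeal to Lemma~\ref{exercise} by a polynomial-time analogue: from an r.e.~set $S\subset\X$ whose measure $\mathcal{L}(\osg{S})$ is bounded away from $1$ and polynomial-time approximable, construct a polynomial-time computable $\alpha\in\XI\setminus\osg{S}$. The natural approach is the greedy bit-by-bit construction of Lemma~\ref{exercise}: given the prefix so far, pick the next bit so that the conditional measure of the bad set restricted below the current cylinder stays strictly below $1$. For this to run in polynomial time in $n$, the restricted measure must itself be computable in time polynomial in $n$, which in turn would force $\mathcal{C}$ to restrict the shape of $C_{i,d,n}$ so that its intersection with any dyadic cylinder of length $n$ admits an efficient description. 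A candidate $\mathcal{C}$ is therefore a structural condition on how $\mathrm{\Pi}$ probes $H_n$ (for instance, a ``hash-and-sign'' pattern that precludes self-referential exploitation of the instantiating function).

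The principal obstacle is the Canetti--Goldreich--Halevi phenomenon already highlighted after Conjecture~\ref{FDH-conjecture1}: for arbitrary $\mathrm{\Pi}$ the conjecture fails even in its weakest non-effective form, so $\mathcal{C}$ cannot be trivial and $\mathsf{COMP}$ cannot be empty. A plausible $\mathsf{COMP}$ is the existence of polynomial-time computable pseudorandom function families secure against nonuniform polynomial-time distinguishers; under such an assumption, a polynomial-time computable $\ell$-function $H$ drawn from the family should look like a uniform random oracle to every $\mathcal{A}_i$, so the success probability $\Prob[\mathsf{Sig\text{-}forge}_{\mathcal{A}_i,\mathrm{\Pi}}(n,H_n)=1]$ differs from the random-oracle average by a negligible function of $n$. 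Combined with the polynomial-time effective bound in the random oracle model and an effective reduction supplied by $\mathcal{C}$, this should yield polynomial-time effective EUF-ACMA security of $\mathrm{\Pi}$ relative to~$H$.

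The hard part will be reconciling the two essentially different sources of ``concrete instantiation.'' The measure-theoretic track (via the analogue of Lemma~\ref{exercise}) delivers a polynomial-time computable sequence avoiding a specifically constructed null set, but the resulting $H$ is tailored to the already-fixed class of polynomial-time adversaries and need not have any pseudorandomness property. The cryptographic track (via $\mathsf{COMP}$) delivers a $H$ that is pseudorandom against all polynomial-time adversaries but whose security reduction to the random oracle model requires an indistinguishability argument that is sensitive to how $\mathrm{\Pi}$ queries $H_n$. Identifying a single condition $\mathcal{C}$ under which both tracks can be carried out simultaneously, and under which the quantitative loss in the reduction preserves the polynomial-time computability of the witness $f$, is where the genuinely new technical work lies, and is almost certainly where any eventual proof of Conjecture~\ref{conjecture-future} will succeed or fail.
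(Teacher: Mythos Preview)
This statement is a \emph{conjecture}, and the paper does not prove it. The paper explicitly presents Conjecture~\ref{conjecture-future} as a challenge for future work: the surrounding discussion says ``In the future, it would be challenging to prove Conjecture~\ref{conjecture-future} (or its appropriate modification) with identifying an appropriate computational assumption $\mathsf{COMP}$ and an appropriate nontrivial condition $\mathcal{C}$,'' and then argues only that some $\mathsf{COMP}$ is necessary (since otherwise the conjecture would imply $P\neq NP$) and suggests that the existence of collision-resistant hash functions is a reasonable candidate for $\mathsf{COMP}$. There is no proof in the paper to compare your proposal against.

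Your write-up is itself not a proof but a research outline, and you seem to recognize this in your final paragraph. The outline is sensible as far as it goes: you correctly note that the Canetti--Goldreich--Halevi obstruction forces $\mathcal{C}$ to be nontrivial, and you correctly identify the central tension between the measure-theoretic avoidance argument (which produces a computable $H$ tailored to a fixed enumeration of adversaries but with no inherent pseudorandomness) and the cryptographic pseudorandomness argument (which requires a reduction sensitive to how $\mathrm{\Pi}$ uses the oracle). But several of your concrete steps do not work as stated. In particular, the greedy bit-by-bit construction underlying Lemma~\ref{exercise} cannot be made to run in polynomial time in $n$ by merely assuming that $\Lm{\osg{S}}$ is polynomial-time approximable: to choose the $n$th bit you must approximate the conditional measure $\Lm{\osg{S}\cap I(\rest{\alpha}{n})}$, and for the sets $C_{i,d,n}$ in the paper this requires computing $\Prob[\mathsf{Sig\text{-}forge}_{\mathcal{A}_i,\mathrm{\Pi}}(n,G)=1]$ for exponentially many $G$, which is not polynomial-time even under strong structural restrictions on $\mathrm{\Pi}$. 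So the ``polynomial-time analogue of Lemma~\ref{exercise}'' you invoke is not available, and the measure-theoretic track of your plan collapses; any eventual proof would have to proceed essentially along your second, cryptographic, track alone.
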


Note that the computational assumption $\mathsf{COMP}$ should be needed in Conjecture~\ref{conjecture-future}.
Without this assumption, Conjecture~\ref{conjecture-future} implies that the complexity class $P$ is a proper subclass of the class $NP$,
unless no signature scheme $\mathrm{\Pi}$ satisfies the condition $\mathcal{C}$.
Thus, Conjecture~\ref{conjecture-future} without the computational assumption $\mathsf{COMP}$ would become very difficult to prove.
Actually, in the random oracle methodology, the random oracle is instantiated by a concrete cryptographic hash function such as the SHA hash functions
(without adequate theoretical reason).
Thus, from a theoretical point of view, it would seem reasonable to assume at least the existence of a \emph{collision resistant hash function}%
\footnote{See \cite[Chapter 4]{KL07} for the detail of collision resistant hash function.}
as the computational assumption $\mathsf{COMP}$.

Computable analysis \cite{PR89,W00} is a branch of computation theory
which studies the computability and the computational complexity of
mathematical notions appearing in analysis.
It is closely related to algorithmic randomness.
In particular, computable analysis considers the notion of the \emph{effective convergence} of a sequence of reals,
where a sequence $\{a_n\}_{n\in\N^+}$ of reals is called \emph{converges effectively} to a real $\alpha$
if there exists a computable function $f\colon\N^+\to\N^+$ such that, for every $N,n\in\N^+$, if $n\ge f(N)$ then $\abs{a_n-\alpha}<1/N$.
On the one hand,
we can see that the form of the definition of the convergence of a sequence of reals in analysis well corresponds to the definitions of security
in modern cryptography based on computational security,
such as Definitions~\ref{EUF-ACMA-secure-rel} and~\ref{EUF-ACMA-secure-ro}.
On the other hand,
we can see that the notion of the effective convergence of a sequence of reals in computable analysis well corresponds to
the effective security notions introduced in Definitions~\ref{eff-EUACMA} and~\ref{eff-EUACMA-ro}.
Thus, the replacement of the conventional security notion by the corresponding effective security notion moder cryptography is just regarded
as performing computable analysis over cryptography.
The results of the previous section shows that
\emph{performing computable analysis over cryptography results in the secure instantiation of the random oracle}.

In what follows we continue to perform computable analysis over cryptography
by introducing the notion of \emph{effective hardness} for computational problems,
whose hardness is used as a computational assumption to prove the security of a cryptographic scheme in moder cryptography.
In particular, we consider the \emph{discrete logarithm problem} and the \emph{Diffie-Hellman problem} in the \emph{generic group model},
and investigate the secure instantiation of the generic group, i.e., a random encoding of the group elements,
in what follows.

\section{The discrete logarithm problem in the generic group model}
\label{DL}

In this section we
review
the discrete logarithm problem in the generic group model.
For the discrete logarithm problem \emph{in the standard model} and its related problems, such as the
Diffie-Hellman
problem in the standard model,
we refer the reader to Katz and Lindell \cite[Chapter 7]{KL07}.
Shoup~\cite{S97} introduced the notion of \emph{generic algorithm} to study the computational complexity of the
discrete logarithm and related problems in the \emph{generic group model}, where the generic algorithm does not exploit any special properties
of the encodings of group elements, other than the property that each group element is encoded as a unique binary string.
Formally, a generic algorithm is defined as follows.

For any integer $N\ge 2$, we denote by $\Z_N$ the \emph{additive group of integers modulo $N$} and sometimes the set $\{0,1,\dots,N-1\}$.
For any $n\in\N^+$, an \emph{encoding function into $n$ bitstrings} is a bijective function
mapping $\{0,1,\dots,2^n-1\}$ to $\{0,1\}^n$.
Let $N$ be a positive integer with $N\le 2^n$,
and let $\mathcal{G}_N$ be the set of all finite cyclic groups $G$ of order $N$ with $G\subset\{0,1\}^n$.
Given an arbitrary finite cyclic group $G$ of order $N$,
we can represent each element of $G$ by a unique $n$ bits string, since $N\le 2^n$.
Thus, every finite cyclic group $G$ of order $N$ is in $\mathcal{G}_N$ in $\binom{2^n}{N}$ distinct representations.
Recall that every finite cyclic group $G$ of order $N$ can be  isomorphic to the additive group $\Z_N$ based on a generator of $G$.
Thus, we see that, for every pair of a finite cyclic group $G\in\mathcal{G}_N$ and its generator $g$,
there is an encoding function $\sigma$ into $n$ bitstrings such that $\Z_N$ is isomorphic to $G$ via $\sigma$ and
$\sigma(1)=g$.
Conversely, for every encoding function $\sigma$ into $n$ bitstrings,
by defining a binary operation $\circ\colon\sigma(\Z_N)\times\sigma(\Z_N)\to\sigma(\Z_N)$ by
\begin{equation*}
  \sigma(x)\circ\sigma(y):=\sigma(x+y),
\end{equation*}
the set $\sigma(\Z_N)$ becomes a finite cyclic group in $\mathcal{G}_N$ with the generator $\sigma(1)$ and
$\Z_N$ is isomorphic to $\sigma(\Z_N)$ via $\sigma$.
In this manner,
there is a surjective mapping from an encoding function $\sigma$ into $n$ bitstrings to a pair of a finite cyclic group $G\in\mathcal{G}_N$ and its generator.
If we restrict the domain of definition of encoding functions into $n$ bitstrings to $\Z_N$, the mapping becomes bijective.

A \emph{generic algorithm} is a probabilistic oracle Turing machine $\mathcal{A}$ which behaves as follows \cite{S97,MW98}:
Let $n\in\N^+$, and let $\sigma$ be an encoding function into $n$ bitstrings and $N$ a positive integer with $N\le 2^n$.
\renewcommand{\labelenumi}{(\roman{enumi})}
\begin{enumerate}
\item $\mathcal{A}$ takes as input a list $\sigma(x_1),\dots,\sigma(x_k)$ with $x_1,\dotsc,x_k\in\Z_{N}$,
  as well as (the binary representations of) $N$ and its prime factorization.
\item As $\mathcal{A}$ is executed, it is allowed to make calls to oracles
  which compute the functions $add\colon\sigma(\Z_N)\times\sigma(\Z_N)\to\sigma(\Z_N)$ and $inv\colon\sigma(\Z_N)\to\sigma(\Z_N)$ with
  \begin{equation*}
    add(\sigma(x),\sigma(y))=\sigma(x+y)\;\text{ and }\;inv(\sigma(x))=\sigma(-x).
  \end{equation*}
\item Eventually, $\mathcal{A}$ halts and outputs a finite binary string, denoted by
  \begin{equation*}
    \mathcal{A}(N;\sigma(x_1),\dots,\sigma(x_k)).
  \end{equation*}
\end{enumerate}

Consider the following experiment for a polynomial-time generic algorithm $\mathcal{A}$,
a parameter $n$,
and a positive integer $N\le 2^n$:

\begin{quote}
\textbf{The discrete logarithm experiment $\mathsf{DLog}_{\mathcal{A}}(n,N)$:}
\vspace*{-1.5mm}
\renewcommand{\labelenumi}{\arabic{enumi}.}
\emph{
\begin{enumerate}
\item Generate an encoding function $\sigma$ into $n$ bitstrings uniformly.
\item Generate $x\in\Z_{N}$ uniformly.
\item The output of the experiment is defined to be $1$ if $\mathcal{A}(N;\sigma(1),\sigma(x))=x$ and $0$ otherwise.
\end{enumerate}
}
\end{quote}

Note here that $x\in\Z_{N}$ is the discrete logarithm of $\sigma(x)$ with respect to the generator $\sigma(1)$ in the finite cyclic group $\sigma(\Z_N)$ of order $N$.
Thus in the experiment, given a generator $\sigma(1)$ of a finite cyclic group $\sigma(\Z_N)$ and an element $\sigma(x)$ of $\sigma(\Z_N)$,
the generic algorithm $\mathcal{A}$ tries to calculate the discrete logarithm $x$ of $\sigma(x)$
while making calls to oracles which compute the functions $add$ and $inv$.
Shoup~\cite{S97} showed the following lower bound for the complexity of the discrete logarithm problem in the generic group model.

\begin{theorem}[Shoup~\cite{S97}]\label{Shoup}
There exists $C\in\N^+$ such that, for every generic algorithm $\mathcal{A}$, $n\in\N^+$, and $N$ with $2\le N\le 2^n-1$,
\begin{equation*}
  \Prob[\mathsf{DLog}_{\mathcal{A}}(n,N)=1]\le\frac{C m^2}{p},
\end{equation*}
where $p$ is the largest prime divisor of $N$
and $m$ is the maximum number of the oracle queries among all the computation paths of $\mathcal{A}$.  
\qed
\end{theorem}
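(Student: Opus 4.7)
The plan is to apply Shoup's simulation-and-collision technique. First, I would replace the real experiment $\mathsf{DLog}_{\mathcal{A}}(n, N)$ by an \emph{ideal game} in which the encoding $\sigma$ is built lazily by a simulator $\mathcal{S}$ that never commits to a value of $x$. The simulator keeps a list of pairs $(F_k, s_k)$ where each $F_k \in \Z_N[X]$ is a linear polynomial $a_k + b_k X$ and each $s_k \in \{0,1\}^n$ is a label. Initially $\mathcal{S}$ records $(1, s_1)$ and $(X, s_2)$ with $s_1 \neq s_2$ chosen uniformly at random from $\{0,1\}^n$, and hands $s_1, s_2$ to $\mathcal{A}$ in place of $\sigma(1)$ and $\sigma(x)$. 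On each oracle query $add(s_i, s_j)$ or $inv(s_i)$, $\mathcal{S}$ forms the polynomial $F_i + F_j$ or $-F_i$: if it already appears on the list, $\mathcal{S}$ returns the associated label; otherwise it invents a fresh random unused $n$-bit label and appends the new pair. Only after $\mathcal{A}$ halts does $\mathcal{S}$ draw $x$ uniformly from $\Z_N$.

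Next I would argue that in this deferred-choice game the view of $\mathcal{A}$ is statistically independent of $x$, so $\mathcal{A}$ outputs the correct $x$ with probability at most $1/N \le 1/p$. Moreover, conditional on the \emph{collision event}
\begin{equation*}
  \mathsf{Coll} = \{\,\exists\, i \neq j\ :\ F_i \neq F_j\text{ in }\Z_N[X]\text{ and }F_i(x) = F_j(x)\text{ in }\Z_N\,\}
\end{equation*}
not occurring, there exists an encoding function $\sigma$ consistent with every label returned by $\mathcal{S}$, and averaging over all such $\sigma$ reproduces the distribution of the real experiment exactly. Hence $\Prob[\mathsf{DLog}_{\mathcal{A}}(n, N) = 1] \le 1/p + \Prob[\mathsf{Coll}]$.

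The third step is a union bound on $\mathsf{Coll}$. Each oracle query adds at most one new polynomial to the list, so the list has length at most $m + 2$ and therefore contains at most $\binom{m+2}{2} = O(m^2)$ unordered pairs. For each pair the difference $F_i - F_j = a + bX$ is a nonzero element of $\Z_N[X]$, so it suffices to prove the key inequality
\begin{equation*}
  \Prob_{x \in \Z_N}[\,a + bx \equiv 0 \pmod{N}\,] \le \frac{1}{p}
\end{equation*}
whenever $(a, b) \not\equiv (0, 0) \pmod{N}$. Combining this bound with the $1/p$ from the preceding step and absorbing $\binom{m+2}{2}/m^2 + 1$ into the constant $C$ gives the claimed $Cm^2/p$.

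The hard part is precisely this last inequality, because for composite $N$ a nonzero linear polynomial over $\Z_N$ can have up to $\gcd(b, N)$ roots and this need not be bounded by $N/p$. I would handle it by reducing modulo the largest prime factor: the canonical homomorphism $\pi\colon \Z_N \to \Z_p$ pushes the uniform distribution on $\Z_N$ to the uniform distribution on $\Z_p$, and a nonzero linear polynomial over the \emph{field} $\Z_p$ has at most one root. The subtle subcase is when $F_i - F_j$ vanishes modulo $p$ while being nonzero modulo $N$; then writing $F_i - F_j = p\,(a' + b' X)$ inside $\Z_N[X]$ reduces the question to a linear polynomial over $\Z_{N/p}$, and one can either iterate the argument on this strictly smaller modulus or, equivalently, lift the entire simulation to the unique subgroup of order $p$ of $\Z_N$ via a Pohlig--Hellman--style reduction. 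In either case the resulting loss is an absolute constant that can be absorbed into $C$.
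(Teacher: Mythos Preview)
The paper does not prove this theorem; it is quoted from Shoup~\cite{S97} and closed with a \qed, so there is no in-paper argument to compare against. Your outline is Shoup's original simulation-and-collision proof, and the first three paragraphs are correct.

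The gap is in your treatment of composite $N$. The ``key inequality''
\begin{equation*}
  \Prob_{x\in\Z_N}\bigl[\,a+bx\equiv 0\pmod N\,\bigr]\le\frac{1}{p}\qquad\text{whenever }(a,b)\not\equiv(0,0)\pmod N
\end{equation*}
is false: take $N=6$, $p=3$, $a=0$, $b=3$; then $3x\equiv 0\pmod 6$ holds for $x\in\{0,2,4\}$, giving probability $1/2>1/3$. In general a nonzero linear form over $\Z_N$ can vanish on a $1/q$ fraction of $\Z_N$ where $q$ is the \emph{smallest} prime factor of $N$, not a $1/p$ fraction. Your fix of writing $F_i-F_j=p\,(a'+b'X)$ and ``iterating on $\Z_{N/p}$'' does not help: after one step the relevant prime becomes the largest prime factor of $N/p$, which is strictly smaller than $p$, and repeating the step drives the bound down to $1/q$, exactly as the example already shows. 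So the loss is not an absolute constant absorbable into $C$.

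The Pohlig--Hellman alternative you mention is the correct route and is what Shoup actually uses, but it is not ``equivalent'' to the iteration and it is not a post-hoc lift of the simulation. The reduction comes \emph{first}: a generic algorithm solving the discrete logarithm in $\Z_N$ yields, with the same number of oracle queries, a generic algorithm solving it in the order-$p$ subgroup; one then runs your entire simulation-and-collision argument inside that prime-order group, where the root count is immediate because $\Z_p$ is a field. Your plan is therefore salvageable, but only through its second branch, and that branch must be executed before the collision analysis rather than patched in afterwards.
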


Theorem~\ref{Shoup} says that any generic algorithm that solves with nonzero constant probability the discrete logarithm problem
in finite cyclic groups of order $N$
must perform at least $\Omega(\sqrt{p})$ group operations (i.e., oracle queries).

In what follows, we show that the generic group, i.e, the random encoding function $\sigma$ into $n$ bitstrings,
used in the discrete logarithm problem
can be instantiated by a deterministic and computable one while keeping the computational hardness originally proved in the generic group model,
as in Theorem~\ref{Shoup}.
Before that, we develop the Lebesgue outer measure on families of encoding functions in the next section.

\section{Lebesgue outer measure on families of encoding functions}
\label{LM}

For each $n\in\N^+$, we denote by $\mathsf{Encf}_{n}$ the set of all encoding functions into $n$ bitstrings.
Note that $\#\mathsf{Encf}_{n}=(2^n)!$.
A \emph{family of encoding functions} is an infinite  sequence $\{\sigma_n\}_{n\in\N^+}$ such that $\sigma_n$ is an encoding function into $n$ bitstrings for all $n\in\N^+$.
A family of encoding functions serves as an instantiation of an infinite sequence of the generic groups, i.e., random encoding functions, over all security parameters.
We denote by $\mathsf{Encf}^{\infty}$ the set of all families of encoding functions.
Namely,
\begin{equation*}
  \mathsf{Encf}^{\infty}
  :=\prod_{k=1}^{\infty}\mathsf{Encf}_{k}=\mathsf{Encf}_{1}\times\mathsf{Encf}_{2}\times\mathsf{Encf}_{3}\times\dotsm\dotsm.
\end{equation*}

On the other hand,
a \emph{finite family of encoding functions} is a finite sequence $s=(\sigma_1,\dots,\sigma_n)$
such that $\sigma_k$ is an encoding function into $k$ bitstrings for
all
$k=1,\dots,n$.
Here, $n$ is called the \emph{length} of $s$ and denoted by $\abs{s}$.
A finite family of encoding functions is an initial segment (a prefix) of a family of encoding functions.
For each $n\in\N$, we denote by $\mathsf{Encf}^n$ the set of all finite families of encoding functions of length $n$.
Namely,
\begin{equation*}
  \mathsf{Encf}^n:=\prod_{k=1}^n\mathsf{Encf}_{k}=\mathsf{Encf}_{1}\times\dots\times\mathsf{Encf}_{n}.
\end{equation*}
Note that $\mathsf{Encf}^0=\{\lambda\}$ where $\lambda:=()$ is the \emph{empty sequence}.
We denote by $\mathsf{Encf}^*$ the set of all finite families of encoding functions, i.e., $\mathsf{Encf}^*:=\bigcup_{n=0}^{\infty}\mathsf{Encf}^n$.
For any sequences $s=(\sigma_1,\dots,\sigma_n)$ and $t=(\tau_1,\dots,\tau_m)$ in $\mathsf{Encf}^*$,
we say that $s$ is a \emph{prefix} of $t$ if
$n\le m$ and $\sigma_k=\tau_k$ for all $k\le n$.
A subset $P$ of $\mathsf{Encf}^*$ is called \emph{prefix-free} if no sequence in $P$ is a prefix of another sequence in $P$.

In what follows we use the notion of Lebesgue outer measure on $\mathsf{Encf}^{\infty}$, which is defined as follows.
For any sequence $s=(\sigma_1,\dots,\sigma_n)\in\mathsf{Encf}^*$,
$I(s)$ is defined as the set of all families $\{\tau_k\}_{k\in\N^+}$ of encoding functions for which $\sigma_k=\tau_k$ for all $k\le n$,
and $\abs{I(s)}$ is defined by
\begin{equation*}
  \abs{I(s)}:=\prod_{k=1}^{n}\frac{1}{\#\mathsf{Encf}_{k}}=\frac{1}{\#\mathsf{Encf}_{1}\times\dots\times\#\mathsf{Encf}_{n}}.
\end{equation*}
Note that $I(\lambda)=\mathsf{Encf}^\infty$ and $\abs{I(\lambda)}=1$.
\emph{Lebesgue outer measure $\mathcal{L}$ on $\mathsf{Encf}^{\infty}$} is a function mapping any subset $A$ of $\mathsf{Encf}^{\infty}$ to a non-negative real,
and is defined by
\begin{equation*}
  \Lm{A}:=\inf\sum_{n=1}^\infty \abs{I(s_n)},
\end{equation*}
where the infimum extends over all infinite sequences $s_1,s_2,\dotsc\in\mathsf{Encf}^*$ for which $A\subset\bigcup_{n=1}^\infty I(s_n)$. 

In what follows,
we use the properties of $\mathcal{L}$ presented in Proposition~\ref{Lebesgue-outer_ggm} below.
For any subset $T$ of $\mathsf{Encf}^*$, we denote by $\osg{T}$ the set $\bigcup_{s\in T}I(s)$.

\begin{proposition}%
\label{Lebesgue-outer_ggm}\hfill
\begin{enumerate}
\item For every prefix-free set $P\subset\mathsf{Encf}^*$,
\begin{equation*}
  \Lm{\osg{P}}=\sum_{s\in P}\abs{I(s)}.
\end{equation*}
  Therefore
  $\Lm{\emptyset}=\Lm{\osg{\emptyset}}=0$ and
  $\Lm{\mathsf{Encf}^{\infty}}=\Lm{\osg{\{\lambda\}}}=1$.
\item $\Lm{A}\le\Lm{B}$ for every sets $A\subset B\subset\mathsf{Encf}^{\infty}$.
\item $\Lm{\bigcup_{i}A_i}\le \sum_{i}\Lm{A_i}$ for every sequence $\{A_i\}_{i\in\N}$ of subsets of $\mathsf{Encf}^{\infty}$.
\item $\Lm{\bigcup_{i}\osg{P_i}}=\sum_{i}\Lm{\osg{P_i}}$ for every finite or infinite sequence $\{P_i\}_i$ of subsets of $\mathsf{Encf}^*$ such that
   $\osg{P_i}\cap\osg{P_j}=\emptyset$ for every $i\neq j$.\qed
\end{enumerate}
\end{proposition}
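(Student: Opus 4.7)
The plan is to follow the standard blueprint for constructing Lebesgue outer measure from a premeasure on a semialgebra of cylinders, adapted to the product $\mathsf{Encf}^{\infty}=\prod_{k=1}^{\infty}\mathsf{Encf}_k$ of finite discrete spaces. I would prove the four parts in the order (ii), (iii), (i), (iv), since the later parts build on the earlier ones. Part (ii) is immediate: any cylinder cover of $B$ is also a cylinder cover of $A$, so the infimum in the definition of $\Lm{\cdot}$ is non-increasing in the set. Part (iii) is the usual $\varepsilon/2^{i}$ argument: for each $i$, pick a cylinder cover of $A_i$ with total weight at most $\Lm{A_i}+\varepsilon\,2^{-i}$, concatenate these covers into a cover of $\bigcup_i A_i$, and let $\varepsilon\downarrow 0$.

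The heart of the proof, and the main obstacle, is Part (i). The upper bound $\Lm{\osg{P}}\le\sum_{s\in P}\abs{I(s)}$ is trivial from the cover $\{I(s):s\in P\}$, so the work lies in the lower bound. I would first treat a finite prefix-free $P$ by setting $n=\max_{s\in P}\abs{s}$; a direct count yields $\sum_{s\in P}\abs{I(s)}=\#E/\#\mathsf{Encf}^{n}$, where $E\subset\mathsf{Encf}^{n}$ is the set of length-$n$ prefixes of members of $\osg{P}$. Given any cylinder cover $\osg{P}\subset\bigcup_{k}I(t_k)$, the key observation is that $\osg{P}$ is clopen in the product topology and that $\mathsf{Encf}^{\infty}$ is compact by Tychonoff (each $\mathsf{Encf}_k$ being finite discrete), which reduces the situation to a finite subcover. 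Refining each $I(t_k)$ with $\abs{t_k}<n$ into its extensions at level $n$ preserves the total weight because $\abs{I(s)}=\sum_{t}\abs{I(t)}$ over the extensions $t$ of $s$ to level $n$, and the resulting cover of $E$ at level $n$ must contain at least $\#E$ distinct cylinders, giving the required bound. For an infinite prefix-free $P$, enumerate $P=\{s_1,s_2,\dotsc\}$ and combine the finite case with Part (ii) to obtain $\sum_{k=1}^{N}\abs{I(s_k)}=\Lm{\osg{\{s_1,\dots,s_N\}}}\le\Lm{\osg{P}}$ for every $N$; letting $N\to\infty$ completes the argument.

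For Part (iv) I would first observe that any $T\subset\mathsf{Encf}^{*}$ admits a prefix-free $Q\subset T$ with $\osg{Q}=\osg{T}$, obtained by keeping only those elements of $T$ that have no proper prefix in $T$. Applying this to each $P_i$ lets me assume each $P_i$ is already prefix-free. The disjointness hypothesis then forces $Q:=\bigcup_{i}P_i$ to be prefix-free as well, since if $s$ were a proper prefix of $t$ with $s\in P_i$ and $t\in P_j$ for $i\ne j$, the containment $I(t)\subset I(s)$ would witness $\osg{P_i}\cap\osg{P_j}\ne\emptyset$. Applying Part (i) to $Q$ and to each $P_i$ separately, together with $\osg{Q}=\bigcup_i\osg{P_i}$, then yields $\Lm{\bigcup_{i}\osg{P_i}}=\Lm{\osg{Q}}=\sum_{s\in Q}\abs{I(s)}=\sum_{i}\sum_{s\in P_i}\abs{I(s)}=\sum_{i}\Lm{\osg{P_i}}$, closing the proposition.
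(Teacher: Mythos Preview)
The paper states this proposition without proof (note the \qed\ immediately following the statement); it is treated as a standard collection of facts about the outer measure generated by the cylinder premeasure $s\mapsto\abs{I(s)}$ on the compact product space $\mathsf{Encf}^{\infty}$. Your proof is correct and follows exactly the standard route one would take to supply the omitted argument: monotonicity and countable subadditivity directly from the infimum definition, the crucial lower bound in~(i) via compactness of $\osg{P}$ (for finite $P$) to pass to a finite subcover and then a counting argument at a common depth, the extension to infinite $P$ by monotone approximation, and~(iv) by reducing to prefix-free $P_i$ and observing that the disjointness hypothesis forces $\bigcup_i P_i$ to be prefix-free as well. One small cosmetic point in your treatment of~(i): after extracting a finite subcover $\{I(t_k)\}$ you should refine to level $N=\max\bigl(n,\max_k\abs{t_k}\bigr)$ rather than to level $n$, since some $t_k$ may already be longer than $n$; this does not affect the argument, as the weight identity $\abs{I(s)}=\sum_{t}\abs{I(t)}$ over level-$N$ extensions still holds and the counting goes through verbatim.
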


For any subset $S$ of $\mathsf{Encf}^*$,
we say that $S$ is \emph{recursively enumerable} (\emph{r.e.}, for short) if
there exists a deterministic Turing machine which on every input $s\in\mathsf{Encf}^*$ halts if and only if $s\in S$.
Note here that any sequence in $\mathsf{Encf}^*$ is a finite object, which can be represented as a finite binary string,
and thus can be manipulated by a Turing machine.
Finally, a family $\{\sigma_n\}_{n\in\N^+}$ of encoding functions is called \emph{computable}
if there exists a deterministic Turing machine which on every input $(n,x)$ with
$x\in\{0,1,\dots,2^n-1\}$
halts and outputs $\sigma_n(x)$.

Theorem~\ref{exercise-modified} below plays a crucial role in what follows.
It is a modification of
Lemma~\ref{exercise}.
We can prove this theorem based on the properties
of $\mathcal{L}$ in Proposition~\ref{Lebesgue-outer_ggm},
as well as the computability of the mapping $\N^+\ni n\mapsto\#\mathsf{Encf}_n$.

\begin{theorem}\label{exercise-modified}
Let $S$ be an r.e.~subset of $\mathsf{Encf}^*$.
Suppose that $\Lm{\osg{S}}<1$ and $\Lm{\osg{S}}$ is a computable real.
Then there exists
a computable family
of encoding functions which is not in $\osg{S}$.
\end{theorem}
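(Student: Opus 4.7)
The plan is to adapt the diagonalization behind Lemma~\ref{exercise} to the space $\mathsf{Encf}^\infty$, building the desired family one encoding function at a time. Since $\Lm{\osg{S}}<1$ and $\Lm{\osg{S}}$ is computable, I can first compute a rational $r_0\in(\Lm{\osg{S}},1)$ and fix a computable strictly increasing sequence $r_0<r_1<r_2<\cdots$ with $r_\infty:=\sup_n r_n<1$ (e.g.~$r_n=r_0+(1-r_0)(1-2^{-n})/2$).

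The key auxiliary step is to show that the map $s\mapsto\Lm{I(s)\cap\osg{S}}$ is uniformly computable on $\mathsf{Encf}^*$. For each fixed $s$, the value is r.e.~from below: enumerate $S$ and, for each enumerated $t\in S$, accumulate $\abs{I(t)\cap I(s)}$, which is either $\abs{I(t)}$, $\abs{I(s)}$, or $0$ depending on the prefix relation between $s$ and $t$. On the other hand, for every level $n$ the set $\mathsf{Encf}^n$ is prefix-free with $\bigcup_{s\in\mathsf{Encf}^n}I(s)=\mathsf{Encf}^\infty$, so (iv) of Proposition~\ref{Lebesgue-outer_ggm} gives
\begin{equation*}
  \sum_{s\in\mathsf{Encf}^n}\Lm{I(s)\cap\osg{S}}=\Lm{\osg{S}}.
\end{equation*}
Since this finite sum of reals that are each r.e.~from below equals the computable real $\Lm{\osg{S}}$, each summand is itself computable, uniformly in $s$ (the standard fact that r.e.~lower approximations yield upper bounds once their total is known to arbitrary precision).

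The construction is then inductive: I build $s_n=(\sigma_1,\dots,\sigma_n)\in\mathsf{Encf}^n$ while maintaining the invariant $\Lm{I(s_n)\cap\osg{S}}\le r_n\abs{I(s_n)}$. The base case $s_0=\lambda$ holds by the choice of $r_0$. For the step, another application of (iv) of Proposition~\ref{Lebesgue-outer_ggm} gives
\begin{equation*}
  \sum_{\sigma\in\mathsf{Encf}_{n+1}}\Lm{I(s_n,\sigma)\cap\osg{S}}
  =\Lm{I(s_n)\cap\osg{S}}
  \le r_n\abs{I(s_n)}
  =r_n\sum_{\sigma\in\mathsf{Encf}_{n+1}}\abs{I(s_n,\sigma)},
\end{equation*}
so by averaging at least one $\sigma\in\mathsf{Encf}_{n+1}$ satisfies $\Lm{I(s_n,\sigma)\cap\osg{S}}\le r_n\abs{I(s_n,\sigma)}<r_{n+1}\abs{I(s_n,\sigma)}$. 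Enumerating the finite set $\mathsf{Encf}_{n+1}$ effectively and computing each $\Lm{I(s_n,\sigma)\cap\osg{S}}$ to precision $(r_{n+1}-r_n)\abs{I(s_n,\sigma)}/3$ lets me certify the strict inequality for the first such $\sigma$; this becomes $\sigma_{n+1}$. Dovetailing the approximations across $\sigma$ guarantees termination.

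The resulting family $\{\sigma_n\}$ is computable by construction. For each $n$ the invariant forces $\Lm{I(s_n)\cap\osg{S}}<\abs{I(s_n)}$, hence $I(s_n)\not\subseteq\osg{S}$, hence $s_n\notin S$; since this holds at every stage, no prefix of $\{\sigma_n\}$ lies in $S$, and therefore $\{\sigma_n\}\notin\osg{S}$. The main obstacle I expect is precisely the uniform computability of $s\mapsto\Lm{I(s)\cap\osg{S}}$: r.e.~enumeration of $S$ alone only yields one-sided approximations, and the argument must lean essentially on the hypothesis that $\Lm{\osg{S}}$ is computable, using the partition identity above to convert lower bounds into two-sided bounds.
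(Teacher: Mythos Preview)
Your proposal is correct and follows essentially the same approach as the paper's proof: both show that $F(s)=\Lm{I(s)\cap\osg{S}}$ is uniformly computable via the partition identity $\sum_{s\in\mathsf{Encf}^n}F(s)=\Lm{\osg{S}}$ together with the hypothesis that $\Lm{\osg{S}}$ is computable, and then build $\{\sigma_n\}$ recursively by an averaging argument while maintaining an invariant guaranteeing $F(s_n)<\abs{I(s_n)}$. The only difference is cosmetic: the paper keeps the bare strict inequality $F(s_n)<\abs{I(s_n)}$ as its invariant and relies on dovetailing precisions to certify the next step, whereas you introduce the auxiliary thresholds $r_n$ so that a single fixed precision suffices at each stage---this makes the search step slightly more explicit but does not change the argument.
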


\begin{proof}
We define $F\colon\mathsf{Encf}^*\to[0,1]$ by
$F(t)=\Lm{\osg{S}\cap I(t)}$.
First, we show that the real-valued function $F$ is computable, i.e.,
there exists a computable function
$f\colon\mathsf{Encf}^*\times\N\to\Q$ such that
\begin{equation}\label{absF-f<2}
  \abs{F(t)-f(t,k)} < 2^{-k}
\end{equation}
for all $t\in\mathsf{Encf}^*$ and $k\in\N$.

Let $n\in\N$.
Since
$\bigcup_{t\in\mathsf{Encf}^n}I(t)=\mathsf{Encf}^\infty$
we have
\begin{equation*}
  \bigcup_{t\in\mathsf{Encf}^n}\osg{S}\cap I(t)=\osg{S}
\end{equation*}
and
$\left(\osg{S}\cap I(t)\right)\cap\left(\osg{S}\cap I(t')\right)=\emptyset$
for any distinct $t,t'\in\mathsf{Encf}^n$.
Note that, for every $t\in\mathsf{Encf}^*$, there is $S'\subset\mathsf{Encf}^*$ such that $\osg{S}\cap I(t)=\osg{S'}$.%
\footnote{As such $S'$, the set $T\cup\{s\in S\mid\text{$t$ is a prefix of $s$}\}$ suffices,
where $T=\{t\}$ if there is a prefix $s\in S$ of $t$ and $T=\emptyset$ otherwise.}
It follows from (iv) of Proposition~\ref{Lebesgue-outer_ggm} that
\begin{equation}\label{sumFt=LS}
  \sum_{u\in\mathsf{Encf}^n}F(u)=\Lm{\osg{S}}
\end{equation}
for every $n\in\N$.

Since $S$ is an r.e.~set, there is a deterministic Turing machine which enumerates $S$, i.e.,
there is a deterministic Turing machine which on every input $m\in\N^+$ outputs a finite subset $S_m$ of $S$,
where $S_m\subset S_{m+1}$ for every $m\in\N^+$ and $\bigcup_{m=1}^\infty S_m=S$.
Therefore, for each $t\in\mathsf{Encf}^*$,
we have $\osg{S_m}\cap I(t)\subset \osg{S_{m+1}}\cap I(t)$ for every $m\in\N^+$ and $\bigcup_{m=1}^\infty \left(\osg{S_m}\cap I(t)\right)=\osg{S}\cap I(t)$.
Using (ii) and (iv) of Proposition~\ref{Lebesgue-outer_ggm} it is easy to show that, for each $t\in\mathsf{Encf}^*$,
$\Lm{\osg{S_m}\cap I(t)}\le\Lm{\osg{S_{m+1}}\cap I(t)}$
for every $m\in\N^+$ and $\lim_{m\to\infty}\Lm{\osg{S_m}\cap I(t)}=F(t)$.
It follows from \eqref{sumFt=LS} that, for each $n\in\N$,
\begin{equation}\label{sumleLS}
\begin{split}
  \sum_{u\in\mathsf{Encf}^{n}}\Lm{\osg{S_m}\cap I(u)}
  \le F(t)+\sum_{u\in\mathsf{Encf}^{n}\text{ and }u\neq t}\Lm{\osg{S_m}\cap I(u)}
  \le \Lm{\osg{S}}
\end{split}
\end{equation}
for every $m\in\N^+$ and $t\in\mathsf{Encf}^n$, and
\begin{equation}\label{limsumE=LS}
  \lim_{m\to\infty}\sum_{u\in\mathsf{Encf}^{n}}\Lm{\osg{S_m}\cap I(u)}=\Lm{\osg{S}}.
\end{equation}

Note that, any given finite set $P\subset\mathsf{Encf}^*$, one can compute a finite prefix-free set $Q\subset\mathsf{Encf}^*$ such that $\osg{Q}=\osg{P}$.
It follows from (i) of Proposition~\ref{Lebesgue-outer_ggm} and the computability of the mapping $\N^+\ni l\mapsto\#\mathsf{Encf}_l$ that,
any given $t\in\mathsf{Encf}^*$ and $m\in\N^+$, one can compute the rational $\Lm{\osg{S_m}\cap I(t)}$.
Therefore, any given $n\in\N$ and $m\in\N^+$, one can compute the rational $\sum_{u\in\mathsf{Encf}^{n}}\Lm{\osg{S_m}\cap I(u)}$.

Now, since $\Lm{\osg{S}}$ is a computable real by the assumption, there exists a computable function $g\colon\N\to\Q$ such that
\begin{equation}\label{absLS-gk<2k}
  \abs{\Lm{\osg{S}}-g(k)} < 2^{-k}
\end{equation}
for all $k\in\N$.
It follows from \eqref{limsumE=LS} that there exists a computable function $h\colon\mathsf{Encf}^*\times\N\to\N^+$ such that,
for every $t\in\mathsf{Encf}^*$ and $k\in\N$,
\begin{equation*}
  g(k)-2^{-k}<\sum_{u\in\mathsf{Encf}^{\abs{t}}}\Lm{\osg{S_{h(t,k)}}\cap I(u)}.
\end{equation*}
But, by \eqref{sumleLS} and \eqref{absLS-gk<2k}, the right-hand side is at most
\begin{equation*}
  F(t)+\sum_{u\in\mathsf{Encf}^{\abs{t}}\text{ and }u\neq t}\Lm{\osg{S_{h(t,k)}}\cap I(u)} < g(k)+2^{-k}.
\end{equation*}
Thus we define a function $f\colon\mathsf{Encf}^*\times\N\to\Q$ by
\begin{equation*}
  f(t,k)=g(k)-\sum_{u\in\mathsf{Encf}^{\abs{t}}\text{ and }u\neq t}\Lm{\osg{S_{h(t,k)}}\cap I(u)}.
\end{equation*}
We then see that the rational-valued function $f$ is computable and \eqref{absF-f<2} holds, as desired.

Next, we construct a computable family $\{\sigma_n\}_{n\in\N^+}$ of encoding functions such that
\begin{equation}\label{prop_n}
  F((\sigma_1,\dots,\sigma_m))
  <\prod_{k=1}^{m}\frac{1}{\#\mathsf{Encf}_{k}}
\end{equation}
holds for all $m\in\N$.
We do this
by the recursive procedure given below.
First, since
\begin{equation*}
  \bigcup_{\tau\in\mathsf{Encf}_{m+1}}I((\tau_1,\dots,\tau_{m},\tau))=I((\tau_1,\dots,\tau_m))
\end{equation*}
holds for every $m\in\N$ and $(\tau_1,\dots,\tau_m)\in\mathsf{Encf}^m$, we note by (iv) of Proposition~\ref{Lebesgue-outer_ggm} that
\begin{equation}\label{LSIsumm+1=m}
\begin{split}
  \sum_{\tau\in\mathsf{Encf}_{m+1}}F((\tau_1,\dots,\tau_{m},\tau))
  =F((\tau_1,\dots,\tau_m))
\end{split}
\end{equation}
for every $m\in\N$ and $(\tau_1,\dots,\tau_m)\in\mathsf{Encf}^m$.
Let $s_n=(\sigma_1,\dots,\sigma_n)\in\mathsf{Encf}^n$ for each $n\in\N$.
Then the recursive procedure is given as follows.

Initially, we set $n:=0$ and $s_n:=\lambda$.
Then, obviously, the property \eqref{prop_n} holds for $m=n$, which is precisely the assumption $\Lm{\osg{S}}<1$ of the theorem.

For an arbitrary $n\in\N$, assume that we have constructed $s_n=(\sigma_1,\dots,\sigma_n)$ and \eqref{prop_n} holds for $m=n$.
It follows from \eqref{LSIsumm+1=m} with $m=n$
that
\begin{equation}\label{LSI}
  F((\sigma_1,\dots,\sigma_n,\tau_0))
  <\prod_{k=1}^{n+1}\frac{1}{\#\mathsf{Encf}_{k}}
\end{equation}
for some $\tau_0\in\mathsf{Encf}_{n+1}$.
Since $F$ is a computable real function,
by computing the approximation of $F((\sigma_1,\dots,\sigma_n,\tau))$ with an arbitrary precision for each $\tau\in\mathsf{Encf}_{n+1}$,
one can find $\tau_0$ for which \eqref{LSI} holds, and then set $\sigma_{n+1}:=\tau_0$ and $s_{n+1}:=(\sigma_1,\dots,\sigma_n,\tau_0)$.  
It follows that \eqref{prop_n} holds for $m=n+1$.

Thus, any given $n\in\N^+$, one can compute $\sigma_n$ by the above procedure.
This implies that the family $\{\sigma_n\}_{n\in\N^+}$ of encoding functions is computable.

Now, assume contrarily that $\{\sigma_n\}_{n\in\N^+}\in\osg{S}$.
Then there is $n\in\N$ such that $(\sigma_1,\dots,\sigma_n)\in S$.
It follows that
\begin{align*}
  F((\sigma_1,\dots,\sigma_n))
  =\Lm{I((\sigma_1,\dots,\sigma_n))}
  =\prod_{k=1}^{n}\frac{1}{\#\mathsf{Encf}_{k}}.
\end{align*}
However, this contradicts \eqref{prop_n} with $m=n$.
Hence we have $\{\sigma_n\}_{n\in\N^+}\notin\osg{S}$, and the proof is completed.
\end{proof}

\section{Effective hardness and secure instantiation of the generic group}
\label{DL2}

In this section we introduce the notion of \emph{effective hardness} for the discrete logarithm problem,
and then show that the generic group used in the problem can be instantiated by a deterministic and computable one
while keeping the computational hardness.
For that purpose, we first translate Theorem~\ref{Shoup} into the form well used as a computational hardness assumption for a cryptographic scheme in cryptography.

Consider the following experiment for a polynomial-time generic algorithm $\mathcal{A}$, a parameter $n$, and an encoding function $\sigma$ into $n$ bitstrings:

\begin{quote}
\textbf{The discrete logarithm experiment $\mathsf{DLog}_{\mathcal{A}}(n,\sigma)$:}
\vspace*{-1.5mm}
\renewcommand{\labelenumi}{\arabic{enumi}.}
\emph{
\begin{enumerate}
\item Generate an $n$-bit prime $p$ uniformly.
\item Generate $x\in\Z_{p}$ uniformly.
\item The output of the experiment is defined to be $1$ if $\mathcal{A}(p;\sigma(1),\sigma(x))=x$ and $0$ otherwise.
\end{enumerate}
}
\end{quote}

In the experiment above, we consider the discrete logarithm problem in the finite cyclic group $\sigma(\Z_p)$ of a \emph{prime} order $p$.
The reason for choosing a prime order is to minimize the probability of the generic algorithm $\mathcal{A}$ solving the discrete logarithm problem.
This can be checked from the form of Theorem~\ref{Shoup}.

The hardness of the discrete logarithm problem in the generic group model is then formulated as follows.

\begin{definition}\label{def-hard-DLog-ggm}
We say that \emph{the discrete logarithm problem is hard in the generic group model}
if for all polynomial-time generic algorithms $\mathcal{A}$ and all $d\in\N^+$ there exists $N\in\N^+$ such that, for all $n\ge N$,
\begin{equation}\label{randomized}
  \frac{1}{\#\mathsf{Encf}_{n}}\sum_{\sigma\in\mathsf{Encf}_{n}} \Prob[\mathsf{DLog}_{\mathcal{A}}(n,\sigma)=1]\le\frac{1}{n^d}.
\end{equation}
\qed
\end{definition}

Note that, in the left-hand side of \eqref{randomized}, the probability is averaged over all encoding functions into $n$ bitstrings.
This
results in a \emph{random} encoding function into $n$ bitstrings, i.e., the \emph{generic group}.

In this paper we consider a stronger notion of the hardness of the discrete logarithm problem than that given by Definition~\ref{def-hard-DLog-ggm} above.
This stronger notion, called the \emph{effective hardness} of the discrete logarithm problem, is defined as follows:
We first choose a particular recursive enumeration $\mathcal{A}_1,\mathcal{A}_2,\mathcal{A}_3,\dotsc$ of all polynomial-time generic algorithms.
It is easy to show that such an enumeration exists.
In fact,
the $k$th polynomial-time generic algorithm $\mathcal{A}_k$ can be chosen as
a generic algorithm obtained by executing the $k$th generic algorithm $\mathcal{M}_k$ in at most $n^k+k$ steps,
where $n$ is the length of the input of $\mathcal{M}_k$.
We use this specific enumeration as the standard one throughout the rest of this paper.

\begin{definition}\label{def-eff-hard-DLog-ggm}
We say that \emph{the discrete logarithm problem is effectively hard in the generic group model}
if there exists a computable function $f\colon\N^+\times\N^+\to\N^+$ such that, for all $i,d,n\in\N^+$, if $n\ge f(i,d)$ then
\begin{equation*}
  \frac{1}{\#\mathsf{Encf}_{n}}\sum_{\sigma\in\mathsf{Encf}_{n}} \Prob[\mathsf{DLog}_{\mathcal{A}_i}(n,\sigma)=1]\le\frac{1}{n^d}.
\end{equation*}
\qed
\end{definition}

\begin{theorem}\label{eff-hard-DLog-ggm}
The discrete logarithm problem is effectively hard in the generic group model.
\qed
\end{theorem}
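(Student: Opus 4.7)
The plan is to reduce Theorem~\ref{eff-hard-DLog-ggm} directly to Shoup's Theorem~\ref{Shoup}, exploiting the explicit running-time bound $n^i+i$ that the standard enumeration attaches to each generic algorithm $\mathcal{A}_i$.

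First I would swap the averages over $\sigma\in\mathsf{Encf}_n$ and over the $n$-bit prime $p$. Since $\sigma$ and $p$ are sampled independently inside $\mathsf{DLog}_{\mathcal{A}_i}(n,\sigma)$, Fubini gives
\begin{equation*}
\frac{1}{\#\mathsf{Encf}_n}\sum_{\sigma\in\mathsf{Encf}_n}\Prob[\mathsf{DLog}_{\mathcal{A}_i}(n,\sigma)=1]
=\frac{1}{P_n}\sum_{p}\Prob[\mathsf{DLog}_{\mathcal{A}_i}(n,p)=1],
\end{equation*}
where the sum on the right ranges over $n$-bit primes $p$, $P_n$ denotes their number, and the probability is that of the original experiment from Section~\ref{DL} with $N:=p$. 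By Bertrand's postulate $P_n\ge 1$ for every $n\ge 2$.

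Next I would apply Theorem~\ref{Shoup} to each summand. Since $p$ is prime its largest prime divisor is $p$ itself, and $p$ being $n$-bit forces $p\ge 2^{n-1}$. On any computation path $\mathcal{A}_i$ performs at most $m:=n^i+i$ oracle queries, so for every $n$-bit prime $p$,
\begin{equation*}
\Prob[\mathsf{DLog}_{\mathcal{A}_i}(n,p)=1]\le\frac{Cm^2}{p}\le\frac{2C(n^i+i)^2}{2^n}.
\end{equation*}
The bound is uniform in $p$, so it also bounds the average on the right-hand side of the Fubini identity.

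Finally I would define $f(i,d)$ to be the least $n\ge 2$ such that $2Cn^d(n^i+i)^2\le 2^n$. Such an $n$ exists because $2^n$ eventually dominates any polynomial in $n$, and given $i$ and $d$ it can be located by straightforward bounded search, so $f$ is computable and witnesses Definition~\ref{def-eff-hard-DLog-ggm}. The one subtlety is that the constant $C$ of Theorem~\ref{Shoup} is merely asserted to exist; however, Shoup's proof supplies an explicit (small) value of $C$, which can be hard-coded into the routine that computes $f$. This is really the only mild obstacle and is routine to resolve.
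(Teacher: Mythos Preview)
Your argument is correct and essentially the same as the paper's: both apply Theorem~\ref{Shoup} with the query bound $m=n^i+i$, use $p\ge 2^{n-1}$ for $n$-bit primes, and then pick $f(i,d)$ large enough that the resulting polynomial-over-exponential bound drops below $1/n^d$. The only difference is cosmetic---the paper gives the closed form $f(k,d)=\max\{(2k+d+1)^2,2C\}$ via Lemma~\ref{for-efficient-lower-bound} instead of a search---and in your write-up you should make explicit that once $2Cn^d(n^i+i)^2\le 2^n$ first holds it continues to hold for all larger $n$, so that the least such $n$ really does witness Definition~\ref{def-eff-hard-DLog-ggm}.
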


In order to prove Theorem~\ref{eff-hard-DLog-ggm}, we need the following lemma.%
\footnote{In order to prove Theorem~\ref{eff-hard-DLog-ggm},
it is suffice to use the inequality $2^n\ge n^d$ which holds for all $n\ge ((d+1)/\ln 2)^{d+1}$ and not for all $n\ge d^2$ as in Lemma~\ref{for-efficient-lower-bound}.
The former follows immediately from the inequality $e^x\ge x$ which holds for all $x\in\R$.
However, we prefer a more ``insightful'' polynomial lower bound $n\ge d^2$ than the super-exponential lower bound $n\ge ((d+1)/\ln 2)^{d+1}$.
See Section~\ref{conclusion-ggm} for
further remarks.}

\begin{lemma}\label{for-efficient-lower-bound}
Let $d\ge 4$.
Then $2^n\ge n^d$ for all $n\ge d^2$.
\end{lemma}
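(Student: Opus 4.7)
The plan is to prove the inequality $2^n \ge n^d$ by induction on $n$, starting from the base case $n = d^2$, with $d \ge 4$ held fixed. Since the inequality is equivalent (after taking $\log_2$) to $n \ge d \log_2 n$, one can think of this as a monotonicity/growth comparison, but an induction is cleaner because it avoids having to work with the function $n/\log_2 n$ in continuous form.

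For the base case, I would like to show $2^{d^2} \ge (d^2)^d = d^{2d}$, which by taking $\log_2$ is equivalent to $d \ge 2\log_2 d$. At $d = 4$ this holds with equality: $4 = 2\log_2 4$. For $d > 4$, one can check that the function $\varphi(d) = d - 2\log_2 d$ has derivative $\varphi'(d) = 1 - \frac{2}{d\ln 2}$, which is strictly positive for $d \ge 3$, so $\varphi$ is increasing on $[4,\infty)$ and hence $\varphi(d) \ge \varphi(4) = 0$ for all $d \ge 4$. This establishes the base case.

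For the inductive step, assume $2^n \ge n^d$ for some $n \ge d^2$. Then
\begin{equation*}
  2^{n+1} = 2 \cdot 2^n \ge 2 n^d,
\end{equation*}
so it suffices to prove $2 n^d \ge (n+1)^d$, i.e., $\left(1 + \tfrac{1}{n}\right)^d \le 2$. Using the standard estimate $(1+x)^d \le e^{dx}$ for $x > 0$, and then $n \ge d^2$, we obtain
\begin{equation*}
  \left(1 + \frac{1}{n}\right)^d \le e^{d/n} \le e^{d/d^2} = e^{1/d} \le e^{1/4} < 2,
\end{equation*}
where the last numerical inequality follows from $e^{1/4} < 1.285 < 2$. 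This completes the inductive step and hence the proof.

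The only potentially delicate point is the base case inequality $d \ge 2\log_2 d$ for $d \ge 4$; this is where the hypothesis $d \ge 4$ is actually used (it fails for $d = 3$ since $3 < 2\log_2 3 \approx 3.17$), so the proof must invoke $d \ge 4$ precisely at this step. Everything else is straightforward, and the two-part structure (base case by a one-variable calculus check, inductive step by the $e^{1/d} < 2$ estimate) makes the argument short and transparent.
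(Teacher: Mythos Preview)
Your proof is correct and follows essentially the same induction-on-$n$ architecture as the paper: base case $n=d^2$ reduced to $2^d\ge d^2$ (equivalently $d\ge 2\log_2 d$), and the inductive step reduced to $(1+1/n)^d\le 2$. The only differences are cosmetic: the paper establishes $2^d\ge d^2$ by a short inner induction on $d$ rather than your derivative argument, and for the inductive step it bounds $2^{1/d}-1$ from below via the mean value theorem ($2^{1/d}-1\ge \ln 2/d>1/d^2$) instead of bounding $(1+1/n)^d$ from above via $e^{d/n}$; these are dual ways of proving the same inequality.
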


\begin{proof}
We first show that
\begin{equation}\label{2egee2}
  2^d\ge d^2
\end{equation}
by induction.
Obviously, $2^k\ge k^2$ holds for $k=4$.
For an arbitrary $k\ge 4$, assume that $2^k\ge k^2$ holds.
Then $2^{k+1}\ge 2k^2\ge (k+1)^2$,
where the second inequality follows from the inequality $\sqrt{2}x\ge x+1$ for all $x\ge\sqrt{2}+1$.
Thus \eqref{2egee2} holds.

Now, we show that
\begin{equation}\label{2ngend}
  2^n\ge n^d
\end{equation}
holds for all $n\ge d^2$ by induction.
First, it follows from \eqref{2egee2} that $2^{d^2}\ge (d^2)^d$, which implies that \eqref{2ngend} holds for $n=d^2$.
For an arbitrary $k\ge d^2$, assume that \eqref{2ngend} holds for $n=k$.
We note that $2^{1/d}-1\ge \ln 2/d> 1/(2d)> 1/d^2$, where the first inequality follows from the mean-value theorem.
Then, since $k\ge d^2$, we see that $2^{(k+1)/d}\ge 2^{1/d}k\ge k+k/d^2\ge k+1$.
This implies that \eqref{2ngend} holds for $n=k+1$.
Thus, \eqref{2ngend} holds for all $n\ge d^2$.
\end{proof}

\begin{proof}[Proof of Theorem~\ref{eff-hard-DLog-ggm}]
Let $k\in\N^+$, and consider the $k$th generic algorithm $\mathcal{A}_k$.
Since the number of oracle queries along any computation path of $\mathcal{A}_k$ is bounded to the above by $n^k+k$,
it follows from Theorem~\ref{Shoup} that there exists $C\in\N^+$ such that, for every $n\in\N^+$ and $n$-bit prime $p$,
\begin{equation*}
  \Prob[\mathsf{DLog}_{\mathcal{A}_k}(n,p)=1]\le\frac{C (n^k+k)^2}{p}\le\frac{C (n^k+k)^2}{2^{n-1}}.
\end{equation*}
Therefore, for every $n\ge\max\{k,2C\}$,
\begin{equation}\label{lemmma-eff1}
  \frac{1}{\#\mathsf{Encf}_{n}}\sum_{\sigma\in\mathsf{Encf}_{n}} \Prob[\mathsf{DLog}_{\mathcal{A}_k}(n,\sigma)=1]\le\frac{n^{2k+1}}{2^{n}}.
\end{equation}
Note by Lemma~\ref{for-efficient-lower-bound} that, for each $d\in\N^+$,
\begin{equation}\label{lemmma-eff2}
  \frac{n^{2k+1}}{2^n}\le\frac{1}{n^d}
\end{equation}
for every $n\ge(2k+d+1)^2$.

Thus we define a function $f\colon\N^+\times\N^+\to\N^+$ by
\begin{equation*}
  f(k,d)=\max\{(2k+d+1)^2,2C\}.
\end{equation*}
Then $f$ is computable, and it follows from \eqref{lemmma-eff1} and \eqref{lemmma-eff2} that,
for all $k,d,n\in\N^+$, if $n\ge f(k,d)$ then
\begin{equation*}
  \frac{1}{\#\mathsf{Encf}_{n}}\sum_{\sigma\in\mathsf{Encf}_{n}} \Prob[\mathsf{DLog}_{\mathcal{A}_k}(n,\sigma)=1]\le\frac{1}{n^d}.
\end{equation*}
This completes the proof.
\end{proof}

The hardness of the discrete logarithm problem in the generic group model given by Definition~\ref{def-hard-DLog-ggm} follows immediately from
Theorem~\ref{eff-hard-DLog-ggm}.

\begin{corollary}\label{hard-DLog-ggm}
The discrete logarithm problem is hard in the generic group model.
\qed
\end{corollary}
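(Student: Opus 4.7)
The plan is to observe that the effective hardness asserted in Theorem~\ref{eff-hard-DLog-ggm} is a straightforward strengthening of the hardness in Definition~\ref{def-hard-DLog-ggm}: the former requires the threshold $N$ to come from a computable function of the (index of the) adversary and $d$, while the latter only requires $N$ to exist. Hence the corollary should drop out by simply forgetting the computability of that function, together with the fact that every polynomial-time generic algorithm is (up to extensional equivalence) represented in the standard enumeration $\mathcal{A}_1,\mathcal{A}_2,\dotsc$ fixed before Definition~\ref{def-eff-hard-DLog-ggm}.

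In detail, I would let $f\colon\N^+\times\N^+\to\N^+$ be the computable function whose existence is guaranteed by Theorem~\ref{eff-hard-DLog-ggm}. Given an arbitrary polynomial-time generic algorithm $\mathcal{A}$ and an arbitrary $d\in\N^+$, I would pick an index $i\in\N^+$ such that $\mathcal{A}_i$ realises the same input-output distribution as $\mathcal{A}$; this is possible because, as explained before Definition~\ref{def-eff-hard-DLog-ggm}, $\mathcal{A}_i$ is obtained by running the $i$th generic Turing machine for at most $n^i+i$ steps, so choosing $i$ larger than both the index of a Turing machine computing $\mathcal{A}$ and the degree of its polynomial time bound suffices. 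Setting $N := f(i,d)$, Theorem~\ref{eff-hard-DLog-ggm} gives, for every $n\ge N$,
\begin{equation*}
  \frac{1}{\#\mathsf{Encf}_{n}}\sum_{\sigma\in\mathsf{Encf}_{n}} \Prob[\mathsf{DLog}_{\mathcal{A}}(n,\sigma)=1]
  =\frac{1}{\#\mathsf{Encf}_{n}}\sum_{\sigma\in\mathsf{Encf}_{n}} \Prob[\mathsf{DLog}_{\mathcal{A}_i}(n,\sigma)=1]
  \le\frac{1}{n^d},
\end{equation*}
which is exactly the condition in Definition~\ref{def-hard-DLog-ggm}.

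There is no genuine obstacle here; the only mildly delicate point is the identification of an arbitrary polynomial-time generic algorithm $\mathcal{A}$ with some $\mathcal{A}_i$ in the fixed enumeration. Since this identification is already implicit in the way the enumeration was set up, the whole argument is a one-line deduction from Theorem~\ref{eff-hard-DLog-ggm}, and the corollary statement itself is essentially just the remark that effective hardness implies hardness.
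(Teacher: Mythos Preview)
Your proposal is correct and matches the paper's own approach: the paper simply states that the hardness in Definition~\ref{def-hard-DLog-ggm} follows immediately from Theorem~\ref{eff-hard-DLog-ggm}, without even spelling out the identification of an arbitrary polynomial-time generic algorithm with some $\mathcal{A}_i$. Your write-up is, if anything, slightly more careful than the paper's one-line remark.
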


We are interested in the instantiation of the generic group in the discrete logarithm problem.
Thus, it is convenient to define the hardness of the discrete logarithm problem relative to a specific family of encoding functions.

\begin{definition}\label{def-hard-DLog-rel}
Let $\{\sigma_n\}_{n\in\N^+}$ be a family of encoding functions. 
We say that \emph{the discrete logarithm problem is hard relative to $\{\sigma_n\}_{n\in\N^+}$}
if for all polynomial-time generic algorithms $\mathcal{A}$ and all $d\in\N^+$ there exists $N\in\N^+$ such that, for all $n\ge N$,
\begin{equation*}
  \Prob[\mathsf{DLog}_{\mathcal{A}}(n,\sigma)=1]\le\frac{1}{n^d}.
\end{equation*}
\qed
\end{definition}

The corresponding effective hardness notion is defined as follows.

\begin{definition}\label{def-eff-hard-DLog-rel}
Let $\{\sigma_n\}_{n\in\N^+}$ be a family of encoding functions. 
We say that \emph{the discrete logarithm problem is effectively hard relative to $\{\sigma_n\}_{n\in\N^+}$}
if there exists a computable function $f\colon\N^+\times\N^+\to\N^+$ such that, for all $i,d,n\in\N^+$, if $n\ge f(i,d)$ then
\begin{equation*}
  \Prob[\mathsf{DLog}_{\mathcal{A}_i}(n,\sigma)=1]\le\frac{1}{n^d}.
\end{equation*}
\qed
\end{definition}

In a similar manner to the proof of Theorem~\ref{FDH-computable}, we can prove the following theorem.

\begin{theorem}[Main result III]\label{DLog-computable}
There exists a computable family of encoding functions relative to which the discrete logarithm problem is effectively hard.
\end{theorem}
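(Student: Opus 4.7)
The plan is to mimic the proof of Theorem~\ref{FDH-computable}, replacing the identification of $\ell$-functions with infinite binary sequences by the identification of families of encoding functions with points of $\mathsf{Encf}^{\infty}$, and invoking Theorem~\ref{exercise-modified} in place of Lemma~\ref{exercise}. By Theorem~\ref{eff-hard-DLog-ggm} there exists a computable $f\colon\N^+\times\N^+\to\N^+$ such that, for all $i,d,n\in\N^+$ with $n\ge f(i,d)$,
\begin{equation*}
\frac{1}{\#\mathsf{Encf}_n}\sum_{\sigma\in\mathsf{Encf}_n}\Prob[\mathsf{DLog}_{\mathcal{A}_i}(n,\sigma)=1]\le\frac{1}{n^d}.
\end{equation*}
Applying Lemma~\ref{lemma1} with $\varepsilon=1/n^{2d}$ and $\alpha=n^d$ at parameter $f(i,2d)$ yields, for every $n\ge f(i,2d)$, that the number of $\sigma\in\mathsf{Encf}_n$ with $\Prob[\mathsf{DLog}_{\mathcal{A}_i}(n,\sigma)=1]>1/n^d$ is strictly less than $\#\mathsf{Encf}_n/n^d$.

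For each $i,d,n\in\N^+$ I would then define the prefix-free subset of $\mathsf{Encf}^*$
\begin{equation*}
C_{i,d,n}=\{(\sigma_1,\dots,\sigma_n)\in\mathsf{Encf}^n\mid \Prob[\mathsf{DLog}_{\mathcal{A}_i}(n,\sigma_n)=1]>1/n^d\}.
\end{equation*}
Because the first $n-1$ coordinates of any sequence in $C_{i,d,n}$ are unrestricted while the last counts the bad encoding functions only, (i) of Proposition~\ref{Lebesgue-outer_ggm} gives
\begin{equation*}
\Lm{\osg{C_{i,d,n}}}=\sum_{s\in C_{i,d,n}}\abs{I(s)}=\frac{\#\{\text{bad }\sigma_n\}}{\#\mathsf{Encf}_n}<\frac{1}{n^d}
\end{equation*}
whenever $n\ge f(i,2d)$. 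Following the blueprint of Theorem~\ref{FDH-computable}, I then fix a computable bijection $\varphi\colon\N^+\to\{(i,d)\mid i\in\N^+\;\&\;d\ge 2\}$ with $\varphi(m)=(\varphi_1(m),\varphi_2(m))$, set the computable function $g(m)=\{f(\varphi_1(m),2\varphi_2(m))+1\}^{m+1}$, and put
\begin{equation*}
C_m=\bigcup_{n=g(m)}^{\infty}C_{\varphi_1(m),\varphi_2(m),n},\qquad C=\bigcup_{m=1}^{\infty}C_m.
\end{equation*}
Using (iii) of Proposition~\ref{Lebesgue-outer_ggm} together with Lemma~\ref{lemma2}, one checks $\Lm{\osg{C_m}}<2/g(m)\le 2^{-m}$, whence $\Lm{\osg{C}}<1$.

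The remaining steps are to verify that $C$ is an r.e.\ subset of $\mathsf{Encf}^*$ (each $C_{i,d,n}$ is finite and decidable from $(i,d,n)$ because $\Prob[\mathsf{DLog}_{\mathcal{A}_i}(n,\sigma)=1]$ is a dyadic rational computable from $(i,n,\sigma)$, and $\varphi,g$ are computable) and that $\Lm{\osg{C}}$ is a computable real, which one obtains by approximating $C$ from below by the finite sets $D_k=\bigcup_{m=1}^{k}\bigcup_{n=g(m)}^{g(m)2^k-1}C_{\varphi_1(m),\varphi_2(m),n}$ and bounding the tail exactly as in the proof of Theorem~\ref{FDH-computable}. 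Theorem~\ref{exercise-modified} then supplies a computable family $\{\sigma_n\}_{n\in\N^+}\in\mathsf{Encf}^{\infty}\setminus\osg{C}$. For any $i,d\in\N^+$, writing $m=\varphi^{-1}(i,d+1)$, the fact that $(\sigma_1,\dots,\sigma_n)\notin C_{i,d+1,n}$ for every $n\ge g(m)$ translates into $\Prob[\mathsf{DLog}_{\mathcal{A}_i}(n,\sigma_n)=1]\le 1/n^{d+1}<1/n^d$, so that the discrete logarithm problem is effectively hard relative to $\{\sigma_n\}$ with witness function $(i,d)\mapsto g(\varphi^{-1}(i,d+1))$.

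The main obstacle I anticipate is purely bookkeeping: showing that $\Lm{\osg{C}}$ is a computable real while the basic measures $\abs{I(s)}=1/\prod_{k=1}^{\abs{s}}(2^k)!$ decay super-exponentially. Fortunately this rapid decay actually helps rather than hurts the truncation argument, and since Theorem~\ref{exercise-modified} has been established in exactly this measure setting, the computability estimates transfer in parallel with those of Theorem~\ref{FDH-computable} with no essentially new input.
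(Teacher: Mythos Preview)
Your proposal is correct and follows essentially the same approach as the paper's own proof: both transplant the argument of Theorem~\ref{FDH-computable} to $\mathsf{Encf}^{\infty}$, defining the same sets $C_{i,d,n}$, $C_m$, $C$, and $D_k$, establishing the same measure and computability estimates, and then invoking Theorem~\ref{exercise-modified} to obtain the computable family $\{\sigma_n\}$.
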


\begin{proof}%
First, by Theorem~\ref{eff-hard-DLog-ggm} there exists a computable function $f\colon\N^+\times\N^+\to\N^+$ such that, for all $i,d,n\in\N^+$, if $n\ge f(i,d)$ then
\begin{equation*}
  \frac{1}{\#\mathsf{Encf}_{n}}\sum_{\sigma\in\mathsf{Encf}_{n}}
  \Prob[\mathsf{DLog}_{\mathcal{A}_i}(n,\sigma)=1]\le\frac{1}{n^d}.
\end{equation*}
It follows from Lemma~\ref{lemma1} that, for all $i,d,n\in\N^+$, if $n\ge f(i,2d)$ then
\begin{equation}\label{keyineqnn_ggm}
\begin{split}
  \#\left\{\sigma\in\mathsf{Encf}_{n}\biggm|\Prob[\mathsf{DLog}_{\mathcal{A}_i}(n,\sigma)=1]>\frac{1}{n^d}\right\}
  <\frac{\#\mathsf{Encf}_{n}}{n^d}.
\end{split}
\end{equation}

In order to apply the method of algorithmic randomness, i.e., Theorem~\ref{exercise-modified},
for each $i,d,n\in\N^+$ we define a subset $\osg{C_{i,d,n}}$ of $\mathsf{Encf}^\infty$ as the set of all families $\{\sigma_n\}_{n\in\N^+}$ of encoding functions such that
\begin{equation}\label{eq-sn_ggm}
  \Prob[\mathsf{DLog}_{\mathcal{A}_i}(n,\sigma_n)=1]>\frac{1}{n^d}.
\end{equation}
Namely, we define a subset $C_{i,d,n}$ of $\mathsf{Encf}^*$ as the set of all finite families $(\sigma_1,\dots,\sigma_n)$ of encoding functions
where only $\sigma_n$ is required to satisfy the inequality~\eqref{eq-sn_ggm}.
Since $C_{i,d,n}$ is a prefix-free set for every $i,d,n\in\N^+$, it follows from (i) of Proposition~\ref{Lebesgue-outer_ggm} and \eqref{keyineqnn_ggm} that,
for each $i,d,n\in\N^+$, if $n\ge f(i,2d)$ then
\begin{equation}\label{keyineqnidn_ggm}
  \Lm{\osg{C_{i,d,n}}}=\sum_{s\in C_{i,d,n}}\abs{I(s)}<\frac{1}{n^d}.
\end{equation}

We choose a particular computable bijection
\begin{equation*}
  \varphi\colon\N^+\to\{\,(i,d)\mid i\in\N^+\;\&\;d\ge 2\,\},
\end{equation*}
and define $(\varphi_1(m),\varphi_2(m))=\varphi(m)$.
We then define a computable function $g\colon\N^+\to\N^+$ by
$g(m)=\{f(\varphi_1(m),2\varphi_2(m))+1\}^{m+1}$.
For each $m\in\N^+$, we define a subset $C_{m}$ of $\X$ by
\begin{equation}\label{C_m_ggm}
  C_m=\bigcup_{n=g(m)}^\infty C_{\varphi_1(m),\varphi_2(m),n}.
\end{equation}
It follows from (iii) of Proposition~\ref{Lebesgue-outer_ggm}, \eqref{keyineqnidn_ggm}, and Lemma~\ref{lemma2} that, for each $m\in\N^+$,
\begin{equation}\label{LC_mle2m_ggm}
\begin{split}
  \Lm{\osg{C_m}}\le\sum_{n=g(m)}^{\infty}\Lm{\osg{C_{\varphi_1(m),\varphi_2(m),n}}}
  <\sum_{n=g(m)}^{\infty}\frac{1}{n^{\varphi_2(m)}}\le\frac{2}{g(m)}\le\frac{1}{2^{m}}.
\end{split}
\end{equation}
We then define $C$
by
\begin{equation}\label{C_ggm}
  C=\bigcup_{m=1}^{\infty}C_m.
\end{equation}
Therefore, using (iii) of Proposition~\ref{Lebesgue-outer_ggm},
\begin{equation}
  \Lm{\osg{C}}\le\sum_{m=1}^{\infty}\Lm{\osg{C_m}}<\sum_{m=1}^{\infty}\frac{1}{2^{m}}=1.
\end{equation}

Next we show that $C$ is an r.e.~subset of $\mathsf{Encf}^*$.
It is easy to see that, given $i$, $d$, and $n$, one can decide the finite subset $C_{i,d,n}$ of $\mathsf{Encf}^*$,
since the dyadic rational $\Prob[\mathsf{DLog}_{\mathcal{A}_i}(n,\sigma)=1]$ is computable,
given $i$, $n$, and an encoding function $\sigma$ into $n$ bitstrings.
Thus, since $\varphi$ and $g$ are computable functions, it follows from \eqref{C_m_ggm} and \eqref{C_ggm} that $C$ is an r.e.~subset of $\mathsf{Encf}^*$.

We then show that $\Lm{\osg{C}}$ is a computable real.
For each $k\in\N$, we define a finite subset $D_k$ of
$C$
by
\begin{equation*}
  D_k=\bigcup_{m=1}^{k}\bigcup_{n=g(m)}^{g(m)2^{k}-1} C_{\varphi_1(m),\varphi_2(m),n}.
\end{equation*}
Given $k\in\N$, one can decides the finite set $D_k$,
since $\varphi$ and $g$ are computable functions and moreover one can decide the finite set $C_{i,d,n}$, given $i$, $d$, and $n$.
Therefore, given $k\in\N$, one can calculate the dyadic rational $\Lm{\osg{D_k}}$ based on (i) of Proposition~\ref{Lebesgue-outer_ggm}.
On the other hand, note that
\begin{equation*}
  C\setminus D_k\subset\left(\bigcup_{m=1}^{k}\bigcup_{n=g(m)2^{k}}^{\infty} C_{\varphi_1(m),\varphi_2(m),n}\right)\cup\bigcup_{m=k+1}^{\infty}C_m.
\end{equation*}
Thus, using (ii) and (iii) of Proposition~\ref{Lebesgue-outer_ggm}, \eqref{keyineqnidn_ggm},
Lemma~\ref{lemma2}, and \eqref{LC_mle2m_ggm} we see that, for each $k\in\N$,
\begin{align*}
  \Lm{\osg{C\setminus D_k}}
  &\le\sum_{m=1}^{k}\sum_{n=g(m)2^{k}}^{\infty}\Lm{\osg{C_{\varphi_1(m),\varphi_2(m),n}}}+\sum_{m=k+1}^{\infty}\Lm{\osg{C_m}} \\
  &<\sum_{m=1}^{k}\frac{2}{g(m)2^k}+\sum_{m=k+1}^{\infty}\frac{1}{2^m}
  \le\sum_{m=1}^{k}\frac{1}{2^{m+k}}+\frac{1}{2^{k}}
  <\frac{1}{2^{k-1}}.
\end{align*}
Therefore, since $\osg{C}=\osg{D_{k+1}}\cup\osg{C\setminus D_{k+1}}$, using (ii) and (iii) of Proposition~\ref{Lebesgue-outer_ggm} we have
$\abs{\Lm{\osg{C}}-\Lm{\osg{D_{k+1}}}}\le\Lm{\osg{C\setminus D_{k+1}}}\le 2^{-k}$
for each $k\in\N$.
Hence, $\Lm{\osg{C}}$ is a computable real.

Now, it follows from Theorem~\ref{exercise-modified} that there exists a computable family $\{\sigma_n\}_{n\in\N^+}$ of encoding functions which is not in $\osg{C}$.
Let $i,d,n\in\N^+$ with $n\ge g(\varphi^{-1}(i,d+1))$.
We then define $m=\varphi^{-1}(i,d+1)$, i.e., $\varphi(m)=(i,d+1)$.
Since $\{\sigma_n\}_{n\in\N^+}\notin\osg{C}$ and $n\ge g(m)$,
it follows from \eqref{C_ggm} and \eqref{C_m_ggm} that $\{\sigma_n\}_{n\in\N^+}\notin\osg{C_{\varphi_1(m),\varphi_2(m),n}}=\osg{C_{i,d+1,n}}$.
Therefore, we see that the family $\{\sigma_n\}_{n\in\N^+}$ of encoding functions satisfies that
$\Prob[\mathsf{DLog}_{\mathcal{A}_i}(n,\sigma_n)=1]\le 1/n^{d+1}<1/n^{d}$ for each $n\in\N^+$.
Thus, since the mapping $\N^+\times\N^+\ni (i,d)\mapsto g(\varphi^{-1}(i,d+1))$ is a computable function,
it follows from Definition~\ref{def-eff-hard-DLog-rel} that the discrete logarithm problem is effectively hard relative to $\{\sigma_n\}_{n\in\N^+}$.
\end{proof}

\begin{corollary}\label{DLog-computable-noneff}
There exists a computable family of encoding functions relative to which the discrete logarithm problem is hard.
\end{corollary}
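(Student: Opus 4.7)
The plan is to deduce this corollary directly from Theorem~\ref{DLog-computable} by observing that effective hardness is a strengthening of hardness. This parallels the remark immediately following Definition~\ref{eff-EUACMA}, which notes that effective EUF-ACMA security relative to $H$ trivially implies EUF-ACMA security relative to $H$; the same collapse happens here between Definitions~\ref{def-eff-hard-DLog-rel} and \ref{def-hard-DLog-rel}.

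Concretely, I would proceed as follows. First, invoke Theorem~\ref{DLog-computable} to obtain a computable family $\{\sigma_n\}_{n\in\N^+}$ of encoding functions together with a computable function $f\colon\N^+\times\N^+\to\N^+$ witnessing that the discrete logarithm problem is effectively hard relative to $\{\sigma_n\}_{n\in\N^+}$. Next, to verify the (non-effective) hardness condition of Definition~\ref{def-hard-DLog-rel}, fix an arbitrary polynomial-time generic algorithm $\mathcal{A}$ and an arbitrary $d\in\N^+$. Because $\mathcal{A}_1,\mathcal{A}_2,\dotsc$ is a recursive enumeration of \emph{all} polynomial-time generic algorithms, there exists some $i\in\N^+$ for which $\mathcal{A}_i$ has exactly the same input/output behaviour as $\mathcal{A}$ (one chooses an index $k$ with $n^k+k$ dominating the running time of $\mathcal{A}$ and whose underlying machine $\mathcal{M}_k$ computes $\mathcal{A}$). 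Setting $N=f(i,d)$, the effective hardness yields
\begin{equation*}
  \Prob[\mathsf{DLog}_{\mathcal{A}}(n,\sigma_n)=1]
  =\Prob[\mathsf{DLog}_{\mathcal{A}_i}(n,\sigma_n)=1]
  \le\frac{1}{n^d}
\end{equation*}
for every $n\ge N$, which is exactly the condition required by Definition~\ref{def-hard-DLog-rel}.

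There is essentially no obstacle here: the argument is a one-step quantifier weakening, replacing the uniform computable witness $f(i,d)$ by its value at each particular $(i,d)$, and the only minor point to verify is that the chosen enumeration $\{\mathcal{A}_i\}$ genuinely exhausts all polynomial-time generic algorithms up to input/output equivalence, which is already observed in the discussion preceding Definition~\ref{def-eff-hard-DLog-ggm}. The corollary thus follows in a few lines from Theorem~\ref{DLog-computable}.
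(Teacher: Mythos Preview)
Your proposal is correct and matches the paper's own proof, which simply states that the result follows immediately from Theorem~\ref{DLog-computable}. You have merely made explicit the trivial implication from effective hardness (Definition~\ref{def-eff-hard-DLog-rel}) to hardness (Definition~\ref{def-hard-DLog-rel}), which the paper leaves implicit.
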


\begin{proof}
The result follows immediately from Theorem~\ref{DLog-computable}.
\end{proof}

\section{The Diffie-Hellman problem}
\label{CDH}

In this section we consider the hardness of the computational Diffie-Hellman (CDH) problem in the generic group model.
For the CDH problem we can show the analogues of all the results about the discrete logarithm problem shown in the preceding sections. 
In this section, in particular we present the analogue of Theorem~\ref{DLog-computable}
for the CDH problem without proof.

We first recall the analogue of Theorem~\ref{Shoup} for the CDH problem.
We thus consider the following experiment for a polynomial-time generic algorithm $\mathcal{A}$,
a parameter $n$,
and a positive integer $N\le 2^n$:

\begin{quote}
\textbf{The computational Diffie-Hellman experiment $\mathsf{CDH}_{\mathcal{A}}(n,N)$:}
\vspace*{-1.5mm}
\renewcommand{\labelenumi}{\arabic{enumi}.}
\emph{
\begin{enumerate}
\item Generate an encoding function $\sigma$ into $n$ bitstrings uniformly.
\item Generate $x\in\Z_{N}$ uniformly.
\item Generate $y\in\Z_{N}$ uniformly.
\item The output of the experiment is defined to be $1$ if $\mathcal{A}(N;\sigma(1),\sigma(x),\sigma(y))=\sigma(xy)$ and $0$ otherwise.
\end{enumerate}
}
\end{quote}

Shoup~\cite{S97} showed the following lower bound for the complexity of the CDH problem in the generic group model,
which is the analog of Theorem~\ref{Shoup}.

\begin{theorem}[Shoup~\cite{S97}]\label{Shoup-CDH}
There exists $C\in\N^+$ such that, for every generic algorithm $\mathcal{A}$, $n\in\N^+$, and $N$ with $2\le N\le 2^n-1$,
\begin{equation*}
  \Prob[\mathsf{CDH}_{\mathcal{A}}(n,N)=1]\le\frac{C m^2}{p},
\end{equation*}
where $p$ is the largest prime divisor of $N$
and $m$ is the maximum number of the oracle queries among all the computation paths of $\mathcal{A}$.  
\qed
\end{theorem}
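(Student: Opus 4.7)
My plan is to follow Shoup's original symbolic game technique, adapted for the CDH setting with two unknowns instead of one. I would first reduce the problem to the case where $N=p$ is prime: since the algorithm is given $N$ and its factorization, and the encoding is uniformly random, I would argue via a standard embedding that the success probability against $\Z_N$ is bounded by the success probability against the order-$p$ subgroup (or more precisely, that an algorithm winning with probability $\varepsilon$ in $\Z_N$ yields one with probability $\Omega(\varepsilon)$ in $\Z_p$, by projecting the discrete-log components onto the prime-$p$ coordinate using CRT and handling the remaining coordinates as independent bookkeeping). So the hard core is to prove the bound when $N=p$ is prime.

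Next, for prime $p$, I would introduce a \emph{symbolic oracle} that keeps an internal list of formal linear polynomials $F_1,F_2,F_3,\dots \in \Z_p[X,Y]$, starting with $F_1=1$, $F_2=X$, $F_3=Y$, each paired with a uniformly random fresh label in $\{0,1\}^n$ (distinct from previous labels). The oracle answers $\mathit{add}$ and $\mathit{inv}$ queries by forming the appropriate linear combination $F_i\pm F_j$ of existing polynomials; if the result already appears in the list, it returns the stored label, otherwise it stores a fresh random label. Because $\mathit{add}$ and $\mathit{inv}$ produce only linear combinations, all $F_i$ remain of degree at most one in $X,Y$. After the algorithm halts and outputs a string $s$, sample $x,y\in\Z_p$ uniformly and define the ``bad event'' $E$ to be the disjunction of: (a) there exist $i\ne j$ with $F_i\ne F_j$ as polynomials but $F_i(x,y)\equiv F_j(x,y)\pmod p$; or (b) the output $s$ equals some label $\sigma(F_i)$ with $F_i(x,y)\ne xy$ but by coincidence $F_i(x,y)\equiv xy\pmod p$; or (c) $s$ is a fresh string that happens to coincide with $\sigma(xy)$.

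The key coupling claim is that, conditioned on $\neg E$, the view of the algorithm against the symbolic oracle is distributed identically to its view in the real experiment $\mathsf{CDH}_{\mathcal{A}}(n,p)$, and the output is correct in the real game only if it is ``correct'' in the symbolic sense, namely some $F_i$ satisfies $F_i=XY$ as a polynomial. But no linear polynomial equals $XY$, so the symbolic game is unwinnable, and therefore $\Prob[\mathsf{CDH}_{\mathcal{A}}(n,p)=1]\le \Prob[E]$. Bounding $\Prob[E]$ is a union bound: for (a), each pair $(i,j)$ contributes at most $1/p$ because $F_i-F_j$ is a nonzero polynomial of total degree at most $1$, giving $\binom{m+3}{2}/p$; for (b), each $i$ contributes at most $2/p$ because $F_i(X,Y)-XY$ is a nonzero polynomial of total degree $2$, invoking Schwartz--Zippel; for (c), a single term $1/(2^n-m-3)\le 2/p$ once $p\le 2^n-1$ and $m$ is much smaller than $2^n$ (and otherwise the bound is trivial by choosing $C$ large). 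Summing yields $O(m^2/p)$, which gives the desired constant $C$.

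The main obstacle I anticipate is the reduction from general $N$ to its largest prime divisor $p$. The symbolic game argument works cleanly for prime order, but for composite $N$ one must argue that extra information available in the non-$p$-part of $\Z_N$ gives no generic advantage; the standard trick is to decompose $x=x_p+N_p x'$ under CRT, observe that $xy\bmod p$ depends only on $x_p,y_p$, and feed the algorithm consistent encodings so that its output, when projected, solves the prime-$p$ instance. Once this bridging lemma is in place, combining it with the symbolic-game bound above yields the theorem with a universal constant $C$.
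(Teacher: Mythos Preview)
The paper does not give its own proof of this statement: Theorem~\ref{Shoup-CDH} is quoted from Shoup~\cite{S97} as a black box and closed with a bare \qed, exactly as was done for the discrete-logarithm analogue (Theorem~\ref{Shoup}). There is therefore nothing in the paper to compare your proposal against; the authors simply import Shoup's bound and use it downstream.

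That said, your sketch is essentially Shoup's original argument and is sound in outline. A couple of small remarks. In your event~(b) the phrase ``$F_i(x,y)\ne xy$ but by coincidence $F_i(x,y)\equiv xy\pmod p$'' is self-contradictory in the prime case; you clearly mean that $F_i\ne XY$ as polynomials (automatic, since $F_i$ is linear) while $F_i(x,y)=xy$ in $\Z_p$, which indeed has probability at most $2/p$ by Schwartz--Zippel. More substantively, you are right that the passage from general $N$ to its largest prime factor $p$ is the delicate step: a direct symbolic game over $\Z_N$ does \emph{not} yield collision probability $\le 1/p$ (a nonzero linear form over $\Z_N$ can vanish with probability as large as $1/q$ for the smallest prime $q\mid N$), so some reduction or projection to the $p$-part is genuinely required. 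Your CRT-projection idea is the standard route, but the ``independent bookkeeping'' you allude to --- simulating consistent $\Z_N$-labels for $A$ while holding only $\Z_p$-oracles, especially when $p^e\Vert N$ with $e>1$ --- needs more care than a one-line gesture; this is exactly where Shoup's write-up does the work.
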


Now, consider the following experiment for a polynomial-time generic algorithm $\mathcal{A}$, a parameter $n$,
and an encoding function $\sigma$ into $n$ bitstrings:

\begin{quote}
\textbf{The computational Diffie-Hellman experiment $\mathsf{CDH}_{\mathcal{A}}(n,\sigma)$:}
\vspace*{-1.5mm}
\renewcommand{\labelenumi}{\arabic{enumi}.}
\emph{
\begin{enumerate}
\item Generate an $n$-bit prime $p$ uniformly.
\item Generate $x\in\Z_{p}$ uniformly.
\item Generate $y\in\Z_{p}$ uniformly.
\item The output of the experiment is defined to be $1$ if $\mathcal{A}(p;\sigma(1),\sigma(x),\sigma(y))=\sigma(xy)$ and $0$ otherwise.
\end{enumerate}
}
\end{quote}

Then the effective hardness of the CDH problem relative to a specific family of encoding functions is defined as follows.

\begin{definition}\label{def-eff-hard-CDH-rel}
Let $\{\sigma_n\}_{n\in\N^+}$ be a family of encoding functions. 
We say that \emph{the CDH problem is effectively hard relative to $\{\sigma_n\}_{n\in\N^+}$}
if there exists a computable function $f\colon\N^+\times\N^+\to\N^+$ such that, for all $i,d,n\in\N^+$, if $n\ge f(i,d)$ then
\begin{equation*}
  \Prob[\mathsf{CDH}_{\mathcal{A}_i}(n,\sigma)=1]\le\frac{1}{n^d}.
\end{equation*}
\qed
\end{definition}

Based on Theorem~\ref{Shoup-CDH}, we can show the following analogue of Theorem~\ref{DLog-computable} in the same manner as the proof of Theorem~\ref{DLog-computable}.

\begin{theorem}[Main result IV]\label{CDH-computable}
There exists a computable family of encoding functions relative to which the CDH problem is effectively hard.
\qed
\end{theorem}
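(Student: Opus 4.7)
The plan is to mimic the proof of Theorem~\ref{DLog-computable} step by step, with the discrete logarithm experiment replaced by the CDH experiment.

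First, I would establish the CDH analogue of Theorem~\ref{eff-hard-DLog-ggm}, namely that the CDH problem is effectively hard in the generic group model in the averaged sense: there is a computable $f\colon\N^+\times\N^+\to\N^+$ such that, for all $i,d\in\N^+$ and every $n\ge f(i,d)$,
\begin{equation*}
  \frac{1}{\#\mathsf{Encf}_n}\sum_{\sigma\in\mathsf{Encf}_n}\Prob[\mathsf{CDH}_{\mathcal{A}_i}(n,\sigma)=1]\le\frac{1}{n^d}.
\end{equation*}
Since the number of oracle queries of $\mathcal{A}_i$ on parameter $n$ is at most $n^i+i$, Theorem~\ref{Shoup-CDH} yields the same $C(n^i+i)^2/2^{n-1}$ bound as in the discrete logarithm case, and Lemma~\ref{for-efficient-lower-bound} gives the effective bound on $n$ exactly as in the proof of Theorem~\ref{eff-hard-DLog-ggm}.

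Next, I apply Lemma~\ref{lemma1} with $\varepsilon=1/n^{2d}$ and $\alpha=n^d$ to conclude that, for $n\ge f(i,2d)$, the number of encoding functions $\sigma\in\mathsf{Encf}_n$ with $\Prob[\mathsf{CDH}_{\mathcal{A}_i}(n,\sigma)=1]>1/n^d$ is less than $\#\mathsf{Encf}_n/n^d$. I then define, for each $i,d,n\in\N^+$, the prefix-free subset $C_{i,d,n}\subset\mathsf{Encf}^*$ consisting of all finite families $(\sigma_1,\dots,\sigma_n)$ whose last component $\sigma_n$ violates the CDH-bound $1/n^d$; by (i) of Proposition~\ref{Lebesgue-outer_ggm} one obtains $\Lm{\osg{C_{i,d,n}}}<1/n^d$ for $n\ge f(i,2d)$. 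Using the same computable bijection $\varphi\colon\N^+\to\{(i,d)\mid i\in\N^+,\ d\ge 2\}$ and the same computable function $g(m)=\{f(\varphi_1(m),2\varphi_2(m))+1\}^{m+1}$, I define
\begin{equation*}
  C_m=\bigcup_{n=g(m)}^{\infty}C_{\varphi_1(m),\varphi_2(m),n},\qquad C=\bigcup_{m=1}^{\infty}C_m,
\end{equation*}
and verify, exactly as in the proof of Theorem~\ref{DLog-computable}, that $\Lm{\osg{C}}<1$, that $C$ is an r.e.~subset of $\mathsf{Encf}^*$ (the probability $\Prob[\mathsf{CDH}_{\mathcal{A}_i}(n,\sigma)=1]$ is a computable dyadic rational given $i$, $n$, $\sigma$), and that $\Lm{\osg{C}}$ is a computable real (via the same finite approximations $D_k$).

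Finally, Theorem~\ref{exercise-modified} produces a computable family $\{\sigma_n\}_{n\in\N^+}\in\mathsf{Encf}^{\infty}\setminus\osg{C}$. For any $i,d\in\N^+$, setting $m=\varphi^{-1}(i,d+1)$, for every $n\ge g(m)$ we have $\{\sigma_n\}\notin\osg{C_{i,d+1,n}}$, so $\Prob[\mathsf{CDH}_{\mathcal{A}_i}(n,\sigma_n)=1]\le 1/n^{d+1}<1/n^d$; since the mapping $(i,d)\mapsto g(\varphi^{-1}(i,d+1))$ is computable, Definition~\ref{def-eff-hard-CDH-rel} gives effective hardness of CDH relative to $\{\sigma_n\}_{n\in\N^+}$. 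The only step requiring any real thought is the first one, i.e.~repeating the argument of Theorem~\ref{eff-hard-DLog-ggm} for CDH using Theorem~\ref{Shoup-CDH} in place of Theorem~\ref{Shoup}; the remainder is a purely mechanical transcription, since every probability and measure estimate used in the DLog proof carries over verbatim with the same bounds.
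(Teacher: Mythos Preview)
Your proposal is correct and follows exactly the approach the paper indicates: the paper states that Theorem~\ref{CDH-computable} is proved ``in the same manner as the proof of Theorem~\ref{DLog-computable}'' based on Theorem~\ref{Shoup-CDH}, and your write-up carries out precisely this transcription.
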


\section{Polynomial-time effective hardness}
\label{conclusion-ggm}

In Section~\ref{discussion} we have demonstrated the importance of the effective security notions in modern cryptography.
The replacement of the conventional security notions of cryptographic schemes by the corresponding effective security notions results in
the replacement of the conventional hardness notions of computational problems, which are used as computational assumptions
to prove the security of the cryptographic schemes, by the corresponding effective hardness notions, as we have seen in Section~\ref{discussion}
where the RSA assumption in Theorem~\ref{RSA-FDH} is replaced by the effective RSA assumption in Theorem~\ref{effective_RSA-FDH}.
In addition, we have been able to prove the main results given in the previous two section, Theorems~\ref{DLog-computable} and \ref{CDH-computable},
by converting Shoup's original results about the lower bounds of the complexity, Theorems~\ref{Shoup} and \ref{Shoup-CDH},
into the form of effective hardness.
Thus, the effective hardness notions introduced in the previous two sections are useful and considered to be
a natural alternative to the conventional hardness notions of computational problems in modern cryptography.

Ultimately,
it would seem more natural to require that
the functions $f\colon\N^+\times\N^+\to\N^+$ in Definitions~\ref{def-eff-hard-DLog-ggm}, \ref{def-eff-hard-DLog-rel}, and \ref{def-eff-hard-CDH-rel} are
\emph{polynomial-time computable} rather than simply computable.
We call this type of effective hardness \emph{polynomial-time effective hardness}.
In Theorem~\ref{eff-hard-DLog-ggm} we have shown that the discrete logarithm problem is effectively hard in the generic group model.
In the proof of Theorem~\ref{eff-hard-DLog-ggm},
the
function $f$ has the form $f(i,d)=\max\{(2i+d+1)^2,2C\}$.
This is a polynomial-time computable function.
Thus, the proof of Theorem~\ref{eff-hard-DLog-ggm} actually shows that \emph{the discrete logarithm problem is polynomial-time effectively hard in the generic group model}.

Conjecture~\ref{conjecture-future} below is a polynomial-time effective version of Theorem~\ref{DLog-computable},
which states that the discrete logarithm problem is effectively hard in the standard model for \emph{some} finite cyclic group
such that the group operations are polynomial-time computable.
In the future, it would be challenging to determine whether Conjecture~\ref{conjecture-future} (or its appropriate modification) holds
for some computational assumption $\mathsf{COMP}$ which seems
weaker
than the hardness of the discrete logarithm problem itself.

\begin{conjecture}\label{conjecture-future_ggm}
Under the assumption $\mathsf{COMP}$,
there exists a \emph{polynomial-time computable} family of encoding functions (or a \emph{polynomial-time computable} family of families of encoding functions)
relative to which the discrete logarithm problem is \emph{polynomial-time effectively} hard.
\qed
\end{conjecture}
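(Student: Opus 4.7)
The plan is to identify an appropriate cryptographic assumption $\mathsf{COMP}$ and to construct $\{\sigma_n\}$ explicitly from a cryptographic primitive, reducing the polynomial-time effective hardness of the discrete logarithm problem relative to $\{\sigma_n\}$ (in the sense of Definition~\ref{def-eff-hard-DLog-rel}) to the primitive's security. A natural candidate is to take $\mathsf{COMP}$ to be the \emph{polynomial-time effective strong pseudorandom permutation (SPRP) assumption}: there exists a polynomial-time computable keyed permutation family $\{F_{k,n}\colon\{0,1\}^n\to\{0,1\}^n\}$ together with a polynomial-time computable inverse $F_{k,n}^{-1}$, such that for every polynomial-time two-sided oracle distinguisher the advantage in telling $F_{k,n}$ (with hidden random $k$) from a uniformly random permutation of $\{0,1\}^n$ falls below $1/n^d$ beyond a polynomial-time computable threshold $g(i,d)$. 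Fix a publishable key $k_0$ and set $\sigma_n := F_{k_0,n}$; then $\{\sigma_n\}$ is polynomial-time computable by construction.

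To verify polynomial-time effective hardness of the discrete logarithm problem relative to $\{\sigma_n\}$, I would argue by contradiction. Assume some polynomial-time generic algorithm $\mathcal{A}_i$ and some $d\in\N^+$ jointly witness failure of Definition~\ref{def-eff-hard-DLog-rel} for every polynomial-time computable threshold. Build a two-sided oracle distinguisher $\mathcal{D}$ that, given oracles for a permutation $\pi$ and $\pi^{-1}$, executes the experiment $\mathsf{DLog}_{\mathcal{A}_i}(n,\pi)$: it samples an $n$-bit prime $p$ and $x\in\Z_p$, hands $\mathcal{A}_i$ the inputs $(p;\pi(1),\pi(x))$, and services each oracle call by $add(a,b):=\pi(\pi^{-1}(a)+\pi^{-1}(b)\bmod p)$ and $inv(a):=\pi(-\pi^{-1}(a)\bmod p)$; it outputs $1$ iff $\mathcal{A}_i$ returns $x$. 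When $\pi=\sigma_n$, $\mathcal{D}$ outputs $1$ with probability greater than $1/n^d$ by assumption; when $\pi$ is a uniformly random permutation of $\{0,1\}^n$, Shoup's bound (Theorem~\ref{Shoup}) gives output probability at most $Cm^2/p$, which is at most some fixed polynomial in $n$ divided by $2^{n-1}$ because $m$ is polynomial in $n$. The distinguishing advantage thus eventually exceeds $1/n^{d+1}$, contradicting polynomial-time effective SPRP security. The threshold function $f(i,d)$ required by Definition~\ref{def-eff-hard-DLog-rel} is obtained by polynomial-time composition of $g$ with the polynomial bound on $\mathcal{A}_i$'s query count and the polynomial-time computable code transformation $i\mapsto\mathcal{D}$.

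The main obstacle will be the careful \emph{effectivization} of the reduction: polynomial-time computability of $f$ demands that $g$ is polynomial-time computable and that $(i,d)\mapsto\mathrm{code}(\mathcal{D})$ is polynomial-time computable, which constrains how $\mathsf{COMP}$ must be stated. A secondary, largely cosmetic issue is that Theorem~\ref{Shoup} is formulated for a uniform random encoding function $\sigma\colon\{0,\dots,2^n-1\}\to\{0,1\}^n$, whereas the reduction needs it for a uniform random permutation of $\{0,1\}^n$; these distributions coincide after identifying $\{0,1\}^n$ with $\{0,\dots,2^n-1\}$ via the standard bijection, but the identification must be made explicit. A fallback plan would weaken $\mathsf{COMP}$ to polynomial-time effective hardness of the discrete logarithm problem in a concrete polynomial-time computable cyclic group family (e.g., elliptic curves) and simulate the generic oracles via the actual group operations; the subtlety there is that Definition~\ref{def-eff-hard-DLog-rel} samples $p$ uniformly over $n$-bit primes, so this fallback would require either producing a group family of every prime order or modifying the experiment to use a fixed order per $n$, which appears to demand more structural input than the SPRP approach.
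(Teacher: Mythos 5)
First, a point of reference: the paper does not prove this statement at all. It is stated explicitly as a conjecture, left open with an unspecified assumption $\mathsf{COMP}$ and flagged as a challenge for future work (the unconditional result the paper actually proves, Theorem~\ref{DLog-computable}, yields only a \emph{computable}, not polynomial-time computable, family, and only plain effective hardness). So your proposal is an attempt to resolve an open problem, and must be judged on its own merits.

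The central step of your reduction has a genuine gap. You instantiate with a \emph{fixed, publishable} key $k_0$, setting $\sigma_n:=F_{k_0,n}$, but the SPRP assumption only bounds the distinguishing advantage when the key is \emph{uniformly random and hidden}; it says nothing about any individual key. Concretely, a family $\{F_{k,n}\}$ can be a perfectly good SPRP while $F_{k_0,n}$ is the identity permutation for your chosen $k_0$, in which case $\sigma_n(x)=x$ and the discrete logarithm relative to $\{\sigma_n\}$ is trivial. In your distinguisher $\mathcal{D}$, the ``real'' world of the SPRP game has $\pi=F_{k,n}$ for random hidden $k$, so $\Prob[\mathcal{D}=1\mid\text{real}]=\mathbb{E}_k\bigl[\Prob[\mathsf{DLog}_{\mathcal{A}_i}(n,F_{k,n})=1]\bigr]$, which is not controlled by your hypothesis about the single function $\sigma_n=F_{k_0,n}$; the claimed contradiction therefore does not follow. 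The natural repair is exactly the parenthetical alternative in the conjecture: take the \emph{keyed family} $\{F_{k,n}\}_k$ as a ``family of families of encoding functions,'' add the random choice of $k$ to the experiment, and prove hardness on average over $k$ — then your reduction is the standard one and goes through. But this requires extending Definition~\ref{def-eff-hard-DLog-rel}, which as written quantifies a single family $\{\sigma_n\}$, and you should state that extension explicitly. Two smaller issues: (a) your contradiction framing negates the definition incorrectly — the negation of polynomial-time effective hardness does not yield a single pair $(i,d)$ violating the bound infinitely often, since the failure could instead be that the true thresholds grow faster than every polynomial-time computable function of $(i,d)$; the direct construction of $f(i,d)$ you sketch at the end is the right route and should replace the contradiction argument; (b) whether an SPRP assumption qualifies as the ``weaker'' assumption the authors hope for is doubtful, though the bare statement of the conjecture permits it.
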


\section*{Acknowledgments}

This work was supported
by JSPS KAKENHI Grant Number 23340020
and by the Ministry of Economy, Trade and Industry of Japan.

\end{document}